\newtheorem{theorem}{Theorem}[section]
\newtheorem{lemma}[theorem]{Lemma}
\newtheorem{proposition}{Proposition}[section]
\theoremstyle{definition}
\newtheorem{definition}{Definition}[section]
\definecolor{darkcyan}{rgb}{0.0, 0.55, 0.55}
\def\CT@@do@color{%
	\global\let\CT@do@color\relax
	\@tempdima\wd\z@
	\advance\@tempdima\@tempdimb
	\advance\@tempdima\@tempdimc
	\advance\@tempdimb\tabcolsep
	\advance\@tempdimc\tabcolsep
	\advance\@tempdima2\tabcolsep
	\kern-\@tempdimb
	\leaders\vrule
	\hskip\@tempdima\@plus  1fill
	\kern-\@tempdimc
	\hskip-\wd\z@ \@plus -1fill }
\def\bstctlcite{\@ifnextchar[{\@bstctlcite}{\@bstctlcite[@auxout]}}
\def\@bstctlcite[#1]#2{%
 \@bsphack
 \@for\@citeb:=#2\do{%
 \edef\@citeb{\expandafter\@firstofone\@citeb}%
 \if@filesw\immediate\write\csname #1\endcsname{\string\citation{\@citeb}}\fi}%
 \@esphack}
\begin{document}
    \bstctlcite{IEEEexample:BSTcontrol}	

	\title{Enhancing the Quantum Linear Systems Algorithm using Richardson Extrapolation}

	\author{Almudena Carrera Vazquez}
	\affiliation{IBM Quantum, IBM Research -- Zurich}
	\affiliation{ETH Zurich}

	\author{Ralf Hiptmair}
	\affiliation{ETH Zurich}
	
	\author{Stefan Woerner}
	\email{wor@zurich.ibm.com}
	\affiliation{IBM Quantum, IBM Research -- Zurich}
	
	\date{\today}
	
	\begin{abstract}
		We present a quantum algorithm to solve systems of linear equations of the form $A\bm{x}=\bm{b}$, where $A$ is a tridiagonal Toeplitz matrix and $\bm{b}$ results from discretizing an analytic function, with a circuit complexity of $poly(\log(\kappa), 1/\sqrt{\epsilon}, \log(N))$, where $N$ denotes the number of equations, $\epsilon$ is the accuracy, and $\kappa$ the condition number.
  The \emph{repeat-until-success} algorithm has to be run $\mathcal{O}\left(\kappa/(1-\epsilon)\right)$ times to succeed, leveraging amplitude amplification, and sampled $\mathcal{O}(1/\epsilon^2)$ times.
  Thus, the algorithm achieves an exponential improvement with respect to $N$ over classical methods. 
  In particular, we present efficient oracles for state preparation, Hamiltonian simulation and a set of observables together with the corresponding error and complexity analyses.
  As the main result of this work, we show how to use Richardson extrapolation to enhance Hamiltonian simulation, resulting in an implementation of Quantum Phase Estimation (QPE) within the algorithm with $1/\sqrt{\epsilon}$ circuit complexity instead of $1/\epsilon$ and which can be parallelized.
  Furthermore, we analyze necessary conditions for the overall algorithm to achieve an exponential speedup compared to classical methods.
  Our approach is not limited to the considered setting and can be applied to more general problems where Hamiltonian simulation is approximated via product formulae, although, our theoretical results would need to be extended accordingly.
  All the procedures presented are implemented with Qiskit and tested for small systems using classical simulation as well as using real quantum devices available through the IBM Quantum Experience.
	\end{abstract}
	
	\maketitle
	
	\section{Introduction}\label{introsec}

	Systems of linear equations arise naturally in many applications in a wide range of areas.
	In particular, solving sparse tridiagonal systems is at the core of many scientific computation programs \cite{WangParallel}, for instance, for the calibration of financial models, for fluid simulation or for numerical field calculation, where the size of the corresponding systems is often very large \cite{EGLOFF2012309,CHEN2017233,KASPER2001193}.
	In such cases, quantum computing may lead to computational advantages.
	In this work, we analyze and enhance existing quantum algorithms to approximate solution properties for a class of tridiagonal systems that can potentially achieve an exponential speedup over classical approaches.
	
	Approximating the solution to an $s$-sparse positive-definite system with $N$ variables using a classical computer requires, in general, a runtime scaling as $\mathcal{O}(Ns\sqrt{\kappa}\log(1/\epsilon))$ using the conjugate gradient method \cite{ConjGrad}, or, for the case of tridiagonal systems, a runtime scaling as $\mathcal{O}(\sqrt{N})$  when computing in parallel and partitioning the system into blocks of tridiagonal matrices \cite{WangParallel}.
	Here, $\kappa \geq 1$ denotes the condition number of the system, $s$ is the maximum number of nonzero entries per row or column of the corresponding matrix, and $\epsilon > 0$ denotes the targeted accuracy of the approximation. 
	
	In 2009, Harrow, Hassidim, and Lloyd (HHL) proposed a quantum algorithm to estimate properties of solutions requiring a runtime of $\mathcal{O}(\log(N)s^{2}\kappa^{2}/\epsilon)$ \cite{Harrow2009} under the assumptions of efficient quantum oracles for loading the data, Hamiltonian simulation, and estimating the desired property of the solution.
	Thus, this may lead to an exponential speedup with respect to the size of the system $N$ if efficient implementations for these oracles can be found for a problem of interest.
	
	Within this paper, we focus on linear systems given by tridiagonal Toeplitz matrices and right-hand-sides resulting from discretizing smooth functions.
	Under certain assumptions, we provide efficient implementations for each of the before-mentioned oracles as well as a complete error and complexity analysis depending on the problem parameters, and analyze some observables that can be estimated efficiently.
	As the principal result of this paper, we show how to use Richardson extrapolation to enhance Hamiltonian simulation, and we prove that this allows to reduce the circuit complexity, i.e., the total number of gates, of Quantum Phase Estimation (QPE) within the HHL algorithm to $1/\sqrt{\epsilon}$.

	In particular, we obtain a quantum algorithm to solve certain systems of linear equations of size $N$ with a circuit complexity of $polylog(\kappa, 1/\epsilon, N)$.
	The algorithm is a \emph{repeat-until-success} algorithm, and, on expectation, has to be run $\mathcal{O}\left(\kappa/(1-\epsilon)\right)$ times to succeed, leveraging amplitude amplification \cite{brassard2000quantum}.
	Furthermore, since the result is estimated via sampling, we require $\mathcal{O}(1/\epsilon^2)$ successful samples to achieve the target accuracy $\epsilon$.
	Thus, under certain assumptions, this can lead to an exponential improvement in the size of the system over classical methods.
	We combine all our results and provide a concise overview of all assumptions that need to be made to realize this quantum advantage.
	The developed algorithms have been implemented with Qiskit \cite{Qiskit} and tested for small problem instances using a quantum simulator as well as a real quantum device provided by IBM Quantum.

	Note that an overall complexity scaling as $polylog(1/\epsilon)$ has been achieved in \cite{childexp} by approximating the inverse of a matrix via unitaries, which allowed the authors to circumvent the use of QPE.
	The exponential improvement in the complexity was also achieved for the case of the Poisson equation \cite{Cao} without the need to avoid QPE for matrices that can be diagonalized by the sine transform. 
	Finally, as pointed out recently in \cite{fast_pred_wieb}, the cost of the randomization method inspired by adiabatic quantum computation developed in \cite{soma_adiabatic} could be reduced to also run in $polylog(1/\epsilon)$.
	In contrast, our contribution is more general, and, in principle, applicable to all matrices for which Hamiltonian simulation can be approximated via product formulae.
	Furthermore, our approach can be easily parallelized and uses shallow circuits, allowing it to run in near term quantum devices.
	
	Other related work on solving systems of linear equations with a quantum computer which should be highlighted here comprises \cite{Preconditioned, circ_precond, fast_pred_wieb}, addressing the use of a preconditioner, and \cite{DenseMatrices}, tackling the case of dense matrices, although both are not directly relevant in our setting.
	
	Inspired by quantum algorithms \cite{Gilyen2018, tang_2018, tang_2019} give classical algorithms for low-rank matrices in $\mathbb{R}^{m\times n}$ running in $\mathcal{O}(polylog(m, n))$. However, the matrices we consider in this paper are full rank and it was shown that for matrices with larger rank and condition number the quantum-inspired algorithm does not perform well \cite{arrazola}.

	Tridiagonal matrices often result from discretizing one-dimensional differential equations, such as the Poisson equation, which represents a special case of the class we consider here.
	In Appendix~\ref{poissonapp}, we apply the results provided in \cite{Cao} to show how our algorithm can be generalized to solve the Poisson equation also in higher dimensions.
	
	The remainder of the paper is structured as follows. 
	Sec.~\ref{hhlsec} provides a description of the original HHL algorithm, Sec.~\ref{statsect} describes and analyzes procedures to efficiently prepare quantum states with amplitudes specified by a function, and Sec.~\ref{hamsect} introduces an approach for Hamiltonian simulation for tridiagonal Toeplitz matrices. 
	Then, Sec.~\ref{richsect} introduces Richardson extrapolation to reduce the circuit complexity required for Hamiltonian simulation and provides a thorough error and complexity analysis. 
	A procedure to compute the inverse of the eigenvalues as well as the corresponding error scaling are shown in Sec.~\ref{condsect}, and in Sec.~\ref{obsersec}, we explain how to efficiently estimate a set of observables.
	Finally, the overall complexity and error analyses are given in Sec.~\ref{overanalysis}, and the simulated and real hardware results are shown in Sec.~\ref{ressect}.
	
	\textbf{Notation.} Throughout this paper $\norm{\cdot}$ is used to refer to $\norm{\cdot}_2$, the matrix and vector $2$-norm, and $\log(\cdot)$ refers to $\log_2(\cdot)$, while $\ln(\cdot)$ denotes the natural logarithm.
	
	\section{HHL Algorithm}\label{hhlsec}
	
	Let $A \in\mathbb{R}^{N\times N}, N=2^{n_{b}}, n_{b}\in\mathbb{N}$, be Hermitian and $\bm{x},\bm{b}\in\mathbb{R}^{N}$. 
	The problem considered in this paper can be described as: given $A$ and $\bm{b}$, find $\bm{x}$ satisfying $A\bm{x}=\bm{b}$. 
	By rescaling the system, we can assume $\bm{b}$ to be normalized and map it to the quantum state $\ket{b}$, and denote by $\ket{x}$ the normalized solution to the resulting system. 
	In this paper, the mapping used is such that the $i^{\text{th}}$ component of the vector $\bm{b}$ (resp.~$\bm{x}$) corresponds to the amplitude of the $i^{\text{th}}$ computational basis state of the quantum state $\ket{b}$ (resp.~$\ket{x}$). 
	From now on, we will mainly focus on the rescaled system with a normalized solution
	\begin{equation}\label{sytemdesc} 
	A\ket{x}=\ket{b},
	\end{equation}
	and we will use the vector notation to refer to the original problem.
	
	After running the HHL algorithm, the entries of the vector solution to the system, $\ket{x}$, come encoded as the amplitudes of the final state of a quantum register. 
	Therefore one does not have direct access to them, only to a function of the solution vector. 
	Nevertheless, this allows to extract many properties of the solution vector as we discuss in Sec.~\ref{obsersec}.
	
	We proceed now to give an outline of the HHL algorithm. For simplicity, all computations in the following description are assumed to be exact. In particular, in step~(\ref{Step:iii}) we assume that the eigenvalues can be represented exactly with $n_{l}$ binary bits. 
	All quantum registers start in the $\ket{0}$ state at the beginning of the algorithm. 
	\newline
	\newline
	\begin{enumerate}[(i)]
		\item Load the data $\ket{b}\in\mathbb{ R }^{N}$, i.e., perform the transformation
		\begin{equation}\label{initstate}
		\ket{0}_{n_{b}}\mapsto\ket{b}_{n_{b}}=\sum_{i=0}^{N-1}b_{i}\ket{i}_{n_{b}}.
		\end{equation}
		\item Add a $n_l$-qubit register and apply Quantum Phase Estimation (QPE) with $U = e ^ { \mathrm{i} A t } $ and $t$ as specified later in Sec.~\ref{hamsect}. The resulting state is given by
		\begin{equation}
		\sum_{j=0}^{N-1} \beta _ { j } \ket{\lambda _ {j }}_{n_{l}} \ket{u_{j}}_{n_{b}},
		\end{equation}
		where $\ket{u_{j}}_{n_b}$ is the $j^{\text{th}}$ eigenvector of $A$ with respective eigenvalue $\lambda_{j}$, $\ket{\lambda _ {j }}_{n_{l}}$ is the $n_{l}$-bit binary representation of $2^{n_l}\lambda _ {j }$, and $\beta_j$ denotes the components of $\ket{b}$ in the eigenbasis $\ket{u_j}_{n_b}$. 
		\item Add an ancilla qubit and apply a rotation conditioned on $\ket{\lambda_{ j }}_{n_l}$ to get \label{Step:iii}
		\begin{equation}
		\sum_{j=0}^{N-1} \beta _ { j } \ket{\lambda _ { j }}_{n_{l}}\ket{u_{j}}_{n_{b}} \left( \sqrt { 1 - \frac { C^{2}  } { \lambda _ { j } ^ { 2 } } } \ket{0} + \frac { C } { \lambda _ { j } } \ket{1} \right),
		\end{equation}
		where $C$ is a normalization constant.
		\item Apply the inverse QPE. Ignoring possible errors from QPE, this results in
		\begin{equation}\label{eq5}
		\sum_{j=0}^{N-1} \beta _ { j } \ket{0}_{n_{l}}\ket{u_{j}}_{n_{b}}\left( \sqrt { 1 - \frac {C^{2}  } { \lambda _ { j } ^ { 2 } } } \ket{0} + \frac { C } { \lambda _ { j } } \ket{1} \right) .
		\end{equation}
		\item Measure the ancilla
		qubit in the computational basis. If the outcome is $\ket{1}$, the register is in the post-measurement state
		\begin{equation}\label{finalqs}
		\left( \sqrt { \frac { 1 } { \sum_{j=0}^{N-1} \left| \beta_ { j } \right| ^ { 2 } / \left| \lambda _ { j } \right| ^ { 2 } } } \right) \sum _{j=0}^{N-1} \frac{\beta _ { j }}{\lambda _ { j }} \ket{0}_{n_{l}}\ket{u_{j}}_{n_{b}},
		\end{equation}
		which -- up to a normalization factor -- corresponds to the solution $\ket{x}$.
		If the outcome is $\ket{0}$, then repeat steps (i) to (v).
	\end{enumerate}
	
	In the following, we first analyze the errors introduced by the individual steps separately, while considering the other steps exact, and then analyze the overall resulting error and complexity.
	In each case we will denote by $\ket{x}\coloneqq \frac{\bm{x}}{\norm{\bm{x}}}$ the exact normalized solution and by $\ket{\tilde{x}}$ the quantum state returned by the algorithm.

	\section{State preparation}\label{statsect}
	
	In this section we consider the problem of obtaining the quantum state $\ket{b}_{n_b}$ by applying a set of gates to the initial state $\ket{0}_{n_b}$.
	There is plenty of literature addressing state preparation, since it is required by numerous quantum algorithms.
	
	The most generic works discuss how to obtain an arbitrary state, however, with an exponential complexity in the number of qubits, $n_b$, which would diminish any potential quantum advantage obtained by HHL
	\cite{Long2001,Ventura1999,Andrecut,Vartiainen2008,Plesch2011,Shende:2005:SQL:1120725.1120847}. 
	
	Algorithms requiring sub-exponential resources have been achieved for some specific classes of quantum states or by approximation.
	For example, quantum states defined by efficiently integrable probability distributions, e.g., log-concave distributions, can be prepared with polynomial complexity in $n_b$ \cite{Grover2002}.
	Another approach for approximate state preparation is given in \cite{holmes2020efficient}, based on matrix product state approximations.
	This seems a promising direction, although further analysis of their performance will be required.
	Alternatively, \cite{Zoufal2019} recently showed how quantum Generative Adversarial Networks (qGANs) can be used to learn and approximately load arbitrary probability distributions, also requiring only polynomial resources.
	However, this is a heuristic, i.e., it comes without guarantees on the accuracy.
	
	Throughout this paper, we mostly assume $\bm{b}$ is defined by a given function as discussed and analyzed in the following.
	Suppose an analytic function $f:\left[0,1\right] \rightarrow \mathbb{R}$, and suppose $(\bm{b})_i = f\left(x_i\right)$, where $x_i = i/(N-1)$, $i = 0, \ldots, N-1$. 
	To prepare the state, we define $\norm{\bm{b}}_\infty\coloneqq\max_i \abs{(\bm{b})_i}$ and encode the normalized values $f\left(x_i\right) / \norm{\bm{b}}_\infty$ in the $\ket{1}$-amplitude of an ancilla qubit \cite{Zhikuan}, i.e., we assume an operator $F$ defined by
	\begin{equation}
	F: \ket{i}_{n_b}\ket{0}\mapsto\sqrt{1-\frac{f^2(x_i)}{\norm{\bm{b}}^2_\infty}}\ket{i}_{n_b}\ket{0}+\frac{f(x_i)}{\norm{\bm{b}}_\infty}\ket{i}_{n_b}\ket{1}.
	\end{equation}
	Then, the state preparation procedure is given by:
	\begin{enumerate}[(i)]
		\item Prepare an $n_b$-qubit equal-superposition state and append an ancilla qubit, obtaining
		\begin{equation}
		\ket{\psi}=\frac{1}{\sqrt{N}}\sum_{ i = 0 }^{N-1}\ket{i}_{n_{b}}\ket{0}.
		\label{eq:state_prep_equal_superposition}
		\end{equation}
		\item Apply $F$ to get 
		\begin{equation}
		\frac{1}{\sqrt{N}}\sum_{ i = 0 }^{N-1}\ket{i}_{n_{b}}\left(\sqrt{1-\frac{f^2(x_i)}{\norm{\bm{b}}^2_\infty}}\ket{0}+\frac{f(x_i)}{\norm{\bm{b}}_\infty}\ket{1}\right).
		\end{equation}
		\item Measure the ancilla qubit. If observing $\ket{1}$, the resulting state is $\ket{b}_{n_b} \ket{1}$. If the outcome is $\ket{0}$, then repeat steps (i) to (iii).
	\end{enumerate} 
	
	$F$ can be realized by first using quantum arithmetic to compute $\arcsin(f(x))$ into an ancilla register, and then using controlled Y-rotations $R_y$ to map this to the amplitude of the ancilla qubit.
	The complexity of this approach depends on $f$ but often will be dominated by the computation of the arcsine, which can be realized using $\mathcal{O}(n_b^2)$ operations \cite{Hner2018OptimizingQC}.
	The constant overhead of quantum arithmetic can be quite large, likely rendering this approach impractical for the near future \cite{Hner2018OptimizingQC}.
	Thus, in the following, we show and analyze how $F$ can be efficiently approximated.
	
	In \cite{Woerner2018} it is shown how to implement an operator
	\begin{equation}\label{Pequation}
	P_p: \ket{i}_{n_{b}}\ket{0}\mapsto \ket{i}_{n_{b}}\left(\cos(p(x_i))\ket{0}+\sin(p(x_i))\ket{1}\right),
	\end{equation}
	where $p:\left[0,1\right] \rightarrow \mathbb{R}$ is a polynomial of degree $d$.
	The corresponding quantum circuits, illustrated in Fig.~\ref{stateprepcirc} for $d=2$, use polynomially many (multi-controlled) $R_y$ gates, as stated in the following lemma.
	The proofs of all lemmas in this section can be found in Appendix~\ref{Appendix:StatePreparation}.
	
	\begin{lemma} \label{lemma:poly_circuit}
		Let $P_p$ be the operator described in Eq.~\ref{Pequation} associated to a polynomial $p$ of degree $d$ on $n$ qubits, and suppose $d\leq \lceil \frac{n}{2}\rceil$. Then implementing $P_p$ requires $\mathcal{ O }\left(n^{d}\right)$ CNOT and single-qubit gates.
	\end{lemma}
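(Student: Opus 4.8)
The plan is to express $P_p$ as a product of mutually commuting multiply-controlled $R_y$ rotations acting on the ancilla qubit — one rotation per monomial appearing in a Boolean expansion of $p(x_i)$ — and then to bound the number of these rotations and the cost of each.

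First I would write the register index in binary, $i = \sum_{k=0}^{n-1} 2^{k} i_k$ with $i_k \in \{0,1\}$, and substitute into $p(x_i) = \sum_{m=0}^{d} a_m \bigl(i/(N-1)\bigr)^{m}$. Expanding each power $i^{m}$ multinomially and repeatedly using $i_k^{2} = i_k$, every monomial collapses to a squarefree product $\prod_{k\in S} i_k$ over a subset $S$ with $|S|\le m \le d$; collecting terms yields
\begin{equation}
p(x_i) \;=\; \sum_{\substack{S\subseteq\{0,\dots,n-1\}\\ |S|\le d}} c_S \prod_{k\in S} i_k
\end{equation}
for suitable real coefficients $c_S$. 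With the convention $R_y(2\theta)\ket{0}=\cos\theta\,\ket{0}+\sin\theta\,\ket{1}$, the operator $P_p$ is exactly the operation that, conditioned on the register being in a computational basis state $\ket{i}$, applies $R_y\bigl(2p(x_i)\bigr)$ to the ancilla. Since all the partial rotations $R_y(2c_S)$ are about the same axis, they commute and their angles add, so $P_p = \prod_{S:\,|S|\le d}\Lambda_S$, where $\Lambda_S$ applies $R_y(2c_S)$ to the ancilla conditioned on all qubits indexed by $S$ being in state $\ket{1}$ (and $\Lambda_\emptyset$ is a plain single-qubit rotation).

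It then remains to count gates. The number of factors equals the number of subsets of size at most $d$, namely $\sum_{k=0}^{d}\binom{n}{k}$; the hypothesis $d\le\lceil n/2\rceil$ makes $\binom{n}{d}$ the dominant term, so this sum is $\mathcal{O}(n^{d})$. Each factor $\Lambda_S$ is a $k$-fold-controlled single-qubit rotation with $k=|S|\le d$. For fixed $d$ such a gate acts on $\mathcal{O}(1)$ qubits and hence costs $\mathcal{O}(1)$ elementary gates; more generally, by the standard decomposition of multiply-controlled single-qubit unitaries it costs $\mathcal{O}(k)$ CNOT and single-qubit gates provided $k-1$ further qubits are available as (dirty) ancillas, and the $n-k\ge\lfloor n/2\rfloor$ register qubits outside $S$ provide the needed workspace — which is where $d\le\lceil n/2\rceil$ enters once more. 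Multiplying, the total cost is $\mathcal{O}\!\bigl(\sum_{k\le d} k\binom{n}{k}\bigr)=\mathcal{O}(n^{d})$, as claimed.

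The main obstacle is bookkeeping rather than conceptual: making the squarefree-collapse step precise (in particular that the $c_S$ are well defined and that each term genuinely has support of size at most $d$), verifying that the binomial sums $\sum_{k\le d}\binom{n}{k}$ and $\sum_{k\le d} k\binom{n}{k}$ are $\mathcal{O}(n^{d})$ uniformly under the stated bound on $d$, and stating precisely which decomposition of multiply-controlled rotations is invoked together with a check that its ancilla budget is met. The commuting/angle-additivity argument underlying $P_p=\prod_S\Lambda_S$ is elementary.
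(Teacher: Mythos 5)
Your proposal is correct and follows essentially the same route as the paper: decompose $P_p$ into the family of $k$-controlled $R_y$ gates for all subsets of at most $d$ control qubits, implement each $k$-controlled rotation with $\mathcal{O}(k)$ CNOT and single-qubit gates (the paper cites a lemma giving at most $16k-12$ CNOTs, valid under $d\leq\lceil n/2\rceil$, where you invoke the standard dirty-ancilla decomposition), and bound $\sum_{k\leq d}k\binom{n}{k}=\mathcal{O}(n^d)$. The only difference is presentational: you spell out the Boolean expansion of $p(x_i)$ and the angle-additivity argument, which the paper delegates to the cited construction of $P_p$.
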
 
	
	\begin{figure}[ht!]
		\includegraphics[width=\linewidth]{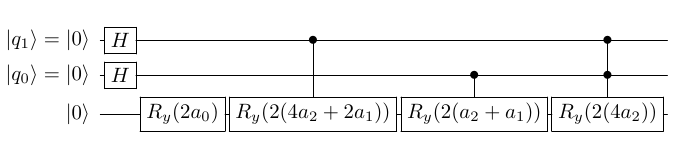}
		\caption{Circuit preparing a state with amplitudes given by the polynomial $p(x)=a_{2}x^{2}+a_{1}x+a_{0}=(4a_{2}+2a_{1})q_{1}+(a_{2}+a_{1})q_{0}+4a_{2}q_{1}q_{0}+a_{0}$, for $x = 0, 1, 2, 3$, represented by two qubits.}
		\label{stateprepcirc}
	\end{figure}
	
	Similarly as in \cite{Woerner2018, OptionPricing, qsimopt}, we can exploit the local linearity of the sine around $x=0$ and consider the scaled function $c p(x)$, for a small $c\in (0,1]$, i.e., the amplitudes of $\ket{i}_{n_b}\ket{1}$ in Eq.~\ref{Pequation} then equal $\sin(cp(x_i)) = c p(x_i) + \mathcal{O}(c^{3})$ and we can drop the arcsine, while introducing only a small error.
	
	For every analytic function $f$ we can find a sequence of polynomials with linearly increasing degree that converges exponentially fast to $f$ \cite{polyapprox}.
	Given a suitable polynomial approximation $p_f$ of $f/\norm{\bm{b}}_\infty$ and a small rescaling factor $c$, we can efficiently construct
	\begin{equation}
	F_{c} \ket{i}_{n_{b}}\ket{0}\approx \ket{i}_{n_{b}}\left(\sqrt{1-(cp_f(x_i))^{2}}\ket{0}+cp_f(x_i)\ket{1}\right).
	\end{equation}
	
	This approach only makes sense if the probability of success, i.e., measuring $\ket{1}$, is not decaying too fast with increasing $N$ and decreasing $c$.
	Let $I_{n_b}$ denote the $N \times N$ dimensional identity matrix and let $\norm{f}_{2}$ denote the $L^{2}_ {[0, 1]}$-norm of $f$.
	The following lemma provides a lower bound on the success probability of this procedure, which is then discussed in the remainder of this section.
	
	\begin{lemma}\label{statexp}
		Suppose $M_{1}\coloneqq I_{n_{b}} \otimes \ket{1}\bra{1}$, an analytic function $f:\left[0,1\right] \rightarrow \mathbb{R}$, a polynomial approximation $p_f$ such that $\norm{f/\norm{\bm{b}}_{\infty} - p_f}_{L^\infty([0,1])}=\epsilon_p$, and $c\in (0,1]$. Then,
		\begin{equation}
		\bra{\psi}F_{c}^{\dagger}M_{1}^{\dagger}M_{1}F_{c}\ket{\psi} \geq  \frac{c^2\norm{\bm{b}}^2}{N\norm{\bm{b}}^2_\infty}-c^2\epsilon_p
		\end{equation}
		for $c\rightarrow 0$, where $\ket{\psi}$ is given in Eq.~\ref{eq:state_prep_equal_superposition}. 
	\end{lemma}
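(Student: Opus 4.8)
\emph{Proof proposal.} The plan is to evaluate the left-hand side in closed form and then bound it from below in two stages: first reducing it, for small $c$, to $\tfrac{c^2}{N}\sum_{i}p_f(x_i)^2$, and then comparing $p_f(x_i)$ with $f(x_i)/\norm{\bm b}_\infty$ via the uniform approximation hypothesis. Since $M_1 = I_{n_b}\otimes\ket{1}\bra{1}$ is an orthogonal projector, $\bra{\psi}F_c^{\dagger}M_1^{\dagger}M_1 F_c\ket{\psi} = \norm{M_1 F_c\ket{\psi}}^2$. Acting with $F_c$ on $\ket{\psi}$ from Eq.~\ref{eq:state_prep_equal_superposition} and keeping only the $\ket{1}$-branch, one gets $M_1 F_c\ket{\psi} = \tfrac{1}{\sqrt N}\sum_{i=0}^{N-1}\sin(cp_f(x_i))\,\ket{i}_{n_b}\ket{1}$, since $F_c$ is implemented through $P_{cp_f}$ of Eq.~\ref{Pequation} with the arcsine dropped, so the genuine $\ket{1}$-amplitude is $\sin(cp_f(x_i))$ rather than $cp_f(x_i)$. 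Hence the success probability equals exactly $\tfrac1N\sum_{i=0}^{N-1}\sin^2(cp_f(x_i))$.

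For the first stage I would use $\sin^2\theta \ge \theta^2 - \tfrac13\theta^4$, which holds for every real $\theta$, with $\theta = cp_f(x_i)$. Because the grid values $p_f(x_i)$ are uniformly bounded --- e.g.\ by $1+\epsilon_p$, using $\norm{f}_{L^\infty([0,1])}/\norm{\bm b}_\infty \le 1$ together with the approximation hypothesis --- the quartic correction contributes only $\mathcal O(c^4)$, uniformly in $i$ and $N$; this is precisely what the qualifier ``for $c\to 0$'' is meant to absorb. Thus, up to terms of order $c^4$, the success probability is bounded below by $\tfrac{c^2}{N}\sum_{i=0}^{N-1}p_f(x_i)^2$.

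The second stage is a pointwise estimate. From $\norm{f/\norm{\bm b}_\infty - p_f}_{L^\infty([0,1])} = \epsilon_p$ and $x_i\in[0,1]$ we get $|\,p_f(x_i) - f(x_i)/\norm{\bm b}_\infty\,| \le \epsilon_p$, and expanding the square, together with $|f(x_i)|\le\norm{\bm b}_\infty$, gives $p_f(x_i)^2 \ge f(x_i)^2/\norm{\bm b}_\infty^2 - \mathcal O(\epsilon_p)$ at each grid point. Summing over $i$ and using $\sum_{i=0}^{N-1}f(x_i)^2 = \norm{\bm b}^2$, this yields $\tfrac{c^2}{N}\sum_i p_f(x_i)^2 \ge \tfrac{c^2\norm{\bm b}^2}{N\norm{\bm b}_\infty^2} - c^2\epsilon_p$, and combining this with the previous step completes the argument.

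I do not anticipate a genuine obstacle: the statement is elementary. The two points deserving care are both bookkeeping matters --- (i) ensuring the $\mathcal O(c^4)$ remainder from the $\sin^2$ expansion is controlled \emph{uniformly} over all $N$ grid points, so that the ``$c\to 0$'' behaviour does not secretly depend on $N$ (this is what the uniform bound on $p_f(x_i)$ buys us), and (ii) keeping track of the precise constant multiplying $\epsilon_p$ when passing from the pointwise inequality to its sum, which is the only place where one must be slightly careful with the constant.
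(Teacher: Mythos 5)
Your proposal follows essentially the same route as the paper's proof: both write the success probability exactly as $\frac{1}{N}\sum_{i=0}^{N-1}\sin^2\!\left(c\,p_f(x_i)\right)$, Taylor-expand $\sin^2$ to drop an $\mathcal{O}(c^4)$ remainder (valid since $p_f(x_i)$ is uniformly bounded), and then use the $L^\infty$ closeness of $p_f$ to $f/\norm{\bm{b}}_\infty$ together with $\sum_i f^2(x_i)=\norm{\bm{b}}^2$ to reach the stated bound. Your care about the constant multiplying $\epsilon_p$ is, if anything, more scrupulous than the paper's own (terse) computation, which silently absorbs that factor; the argument is correct.
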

	
	Note that this approach can also be used in combination with quantum arithmetic. For instance, we can compute $f(x)$ into an ancilla register and then use the sine-approximation only to approximate the arcsine. This might lead to overall better circuits and should be considered automatically in a future quantum compiler for the HHL algorithm.
	
	Lem.~\ref{lemma:poly_circuit} implies a complexity of
	\begin{equation}\label{Eq:state_prep_complexity}
	    \mathcal{O}\left(\log^d(N)\right),
	\end{equation}
	which combined with Lem.~\ref{statexp} and the assumption that $c$ is small implies an expected complexity of the procedure (i.e., the number gates multiplied by the expected number of repetitions until success) of
	\begin{equation}\label{Eq:state_prep_expected_complexity}
	    \mathcal{O}\left(\log^{d}(N)\sqrt{N}\norm{\bm{b}}_\infty/c\norm{\bm{b}}\right),
	\end{equation} 
	where $d$ can either denote the degree of the approximating polynomial or the complexity of quantum arithmetic, i.e., usually $\leq 3$, and where we leverage amplitude amplification.
	Thus, in order to preserve the exponential speedup of the HHL algorithm, we need to assume that $\norm{\bm{b}}/\norm{\bm{b}}_\infty = \Omega(\sqrt{N})$, which implies an expected complexity of $\mathcal{O}(\log^{d}(N)/c)$.
	In other words, the entries of $\bm{b}$ need to be relatively uniform, as already pointed out in \cite{aaronsonfine}, without a few $\bm{b}_{i}$ being ``vastly larger'' than the others.
	In the following, we refer to this as the \emph{uniform assumption}.

	To conclude, the following lemma shows how the polynomial state approximation introduced in this section affects the accuracy of the solution returned by the HHL algorithm, $\ket{\tilde{x}}$, compared to the analytic solution, $\ket{x}$, assuming all other steps in the algorithm are exact.
	\begin{lemma}\label{psperr}
		Let $c\in(0,1]$ be the rescaling constant used to approximate $\ket{b}$, and $\ket{\tilde{b}}$ the approximated state by $p_f$ such that $\norm{f-p_f}_{L^\infty([0,1])}=\epsilon_p$. Then
		\begin{equation}
		\norm{\ket{x} - \ket{\tilde{x}}} 
		\leq \frac{4\kappa\sqrt{N}\norm{\bm{b}}_\infty\left(\epsilon_p+c^2\right)}{\norm{\bm{b}}}+\mathcal{O}\left( c^4\right)  
		\eqqcolon \epsilon_S
		\end{equation}	
		for $c \rightarrow 0$, where  $\kappa$ is the condition number of $A$.
	\end{lemma}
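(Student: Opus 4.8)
The plan is to reduce the statement to a classical perturbation estimate. Since all steps of the HHL algorithm other than state preparation are assumed exact, loading $\ket{\tilde{b}}$ instead of $\ket{b}$ makes the algorithm return exactly the renormalized solution of the perturbed system: from Eq.~\ref{finalqs} with $\ket{b}=\sum_j\beta_j\ket{u_j}$ one has $\ket{x}=A^{-1}\ket{b}/\norm{A^{-1}\ket{b}}$ and, likewise, $\ket{\tilde{x}}=A^{-1}\ket{\tilde{b}}/\norm{A^{-1}\ket{\tilde{b}}}$. Hence it suffices to track how the input error propagates through $A^{-1}$ and through the two renormalizations, namely the one producing $\ket{b}$ from $\bm{b}$ and the one producing $\ket{x}$ from $A^{-1}\ket{b}$.

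First I would bound the input error at the vector level. Write $\hat{\bm{b}}$ for the un-renormalized vector actually loaded, rescaled back by $\norm{\bm{b}}_\infty/c$, so that its $i$-th entry is $\tfrac{\norm{\bm{b}}_\infty}{c}\sin\!\big(c\,p_f(x_i)/\norm{\bm{b}}_\infty\big)$. Using $\sin(t)=t+\mathcal{O}(t^3)$ together with $c\le 1$ and $p_f$ bounded gives, entrywise, $\big|(\hat{\bm{b}})_i-(\bm{b})_i\big|\le \norm{\bm{b}}_\infty(\epsilon_p+\mathcal{O}(c^2))$, the first term coming from the polynomial-approximation error and the second from dropping the arcsine as discussed around Eq.~\ref{Pequation}. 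Summing the $N$ components in the $2$-norm yields $\norm{\bm{b}-\hat{\bm{b}}}\le \sqrt{N}\,\norm{\bm{b}}_\infty(\epsilon_p+\mathcal{O}(c^2))$, where the higher-order contribution is $\mathcal{O}(c^4)$ once the $\mathcal{O}(c^2)$ term is expanded.

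Next I would apply twice the elementary fact that for nonzero vectors $u,v$ one has $\norm{u/\norm{u}-v/\norm{v}}\le 2\norm{u-v}/\norm{u}$, which follows by writing the difference as $(u-v)/\norm{u}+v(1/\norm{u}-1/\norm{v})$ and using the reverse triangle inequality $\big|\norm{u}-\norm{v}\big|\le\norm{u-v}$. With $u=\bm{b}$, $v=\hat{\bm{b}}$ this gives $\norm{\ket{b}-\ket{\tilde{b}}}\le 2\norm{\bm{b}-\hat{\bm{b}}}/\norm{\bm{b}}$; then $\norm{A^{-1}\ket{b}-A^{-1}\ket{\tilde{b}}}\le\norm{A^{-1}}\,\norm{\ket{b}-\ket{\tilde{b}}}$; and with $u=A^{-1}\ket{b}$, $v=A^{-1}\ket{\tilde{b}}$, $\norm{\ket{x}-\ket{\tilde{x}}}\le 2\norm{A^{-1}\ket{b}-A^{-1}\ket{\tilde{b}}}/\norm{A^{-1}\ket{b}}$. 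Since $\ket{b}$ is a unit vector, $\norm{A^{-1}\ket{b}}\ge\sigma_{\min}(A^{-1})=1/\norm{A}$, so the last denominator contributes a factor $\norm{A}$. Chaining these estimates, $\norm{\ket{x}-\ket{\tilde{x}}}\le 4\,\norm{A}\,\norm{A^{-1}}\,\norm{\bm{b}-\hat{\bm{b}}}/\norm{\bm{b}}=4\kappa\,\norm{\bm{b}-\hat{\bm{b}}}/\norm{\bm{b}}$, and substituting the bound from the first step gives $\epsilon_S$.

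I expect the only genuine obstacle to be the bookkeeping around the two renormalizations — pinning down the constant $4$ and justifying $\norm{A^{-1}\ket{b}}\ge 1/\norm{A}$ — together with keeping track of which quantities carry $\norm{\bm{b}}_\infty$ versus $\norm{\bm{b}}$ and of the $\sqrt{N}$ incurred when passing from the entrywise to the $2$-norm estimate. The sine-linearization step is routine once one notes $\sin(t)-t=\mathcal{O}(t^3)$, and the polynomial-approximation input $\epsilon_p$ and the final $\mathcal{O}(c^4)$ remainder then fall out directly.
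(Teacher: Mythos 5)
Your proposal is correct and follows essentially the same route as the paper's proof: bound the perturbation of the normalized input state (a factor $2$ from renormalizing $\bm{b}$, with the $\sqrt{N}\norm{\bm{b}}_\infty(\epsilon_p+c^2)$ entrywise-to-$2$-norm estimate coming from the polynomial error plus the sine linearization), then propagate it through $A^{-1}$ and the output renormalization to pick up the factor $2\kappa$. The only cosmetic difference is that you obtain the $\kappa$ factor explicitly from $\norm{A^{-1}}$ together with $\norm{A^{-1}\ket{b}}\geq 1/\norm{A}$, whereas the paper invokes the condition number as the maximal ratio of relative errors; these are the same bound.
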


	If $\bm{b}$ satisfies the \emph{uniform assumption}, then, due to the standard norm inequality $\norm{\bm{b}} \leq \sqrt{N} \norm{\bm{b}}_{\infty}$, we can define $\sqrt{N}\norm{\bm{b}}_\infty / \norm{\bm{b}} \eqqcolon C_b \in \Theta(1)$ and treat it as a constant. 
	Thus, for an overall accuracy of $\epsilon_S$, we can for instance take
	\begin{equation}
	    \epsilon_p =\frac{\epsilon_S}{8\kappa C_b}, \quad\text{and}\quad c = \sqrt {\epsilon_p},
	\end{equation}
	ignoring higher order terms.
	
	As expected, this implies that approximate state preparation is not suitable for ill-conditioned matrices, e.g., with $\kappa = \Omega(N)$, since the success probability would drop quickly in $N$.
	In these cases, we need to apply quantum arithmetic, which allows us to achieve a required accuracy with $\mathcal{O}(\log(\kappa))$ additional qubits to represent the required intermediate results.
	
	\section{Tri-diagonal Hamiltonian simulation}\label{hamsect}
	
	In this section we consider the Hamiltonian simulation problem, i.e., we want to construct a quantum circuit acting as $e^{\mathrm{i}At}$ for $t \in \mathbb{R}$ and a Hermitian matrix $A \in \mathbb{R}^{N \times N}$.
	In particular, we focus on the case when $A$ is a symmetric real Toeplitz matrix with constant main diagonal with value $a\in\mathbb{R}$ and constant off-diagonals with value $b\in\mathbb{R}$, and zeroes elsewhere.
	As an illustration, in the $4\times 4$ case, $A$ takes the form
	\begin{equation}
	A=\begin{pmatrix}
	a & b & 0 & 0 \\
	b & a & b & 0 \\
	0 & b & a & b \\
	0 & 0 & b & a \\
	\end{pmatrix}
	, \quad a , b \in \mathbb { R }.
	\end{equation}
	Matrices like this arise, for instance, when solving the Poisson equation using finite difference approximation on $N+2$ grid points, see Appendix~\ref{poissonapp} for more details.
	
	Hamiltonian simulation arose from trying to simulate the dynamics of quantum systems, the original motivation for quantum computers \cite{Feynman1982}.
	Nowadays, the scope of Hamiltonian simulation has widened to include an array of applications, such Gibbs state preparation \cite{gibbs} or option pricing \cite{baaquie}, just to name a few.
	
	The cases for sparse and dense matrices are usually treated differently.
	We are interested in the former.
	For dense Hamiltonians, see, e.g., \cite{Childs2009,Childs2010}.
	
	For sparse matrices there are four approaches that need to be highlighted.
	Early methods were based on Lie-Trotter product formulae \cite{Berry2007,Childstar,Childs:2003:EAS:780542.780552,Ahokas2004}, which perform well, e.g., if parts of the Hamiltonian commute. 
	Other approaches are based on quantum walks \cite{Childs2010,Berryblack} and are optimal in the sparsity and time of the evolution but have an unfavorable error dependence.
	In contrast, the fractional-query model \cite{Berryfrac} is optimal in the error but not in the sparsity dependence.
	Finally, the linear combination of unitaries is a non-deterministic technique that achieves a better dependence on the error at the cost of a worse scaling in the size of the system, albeit still $polylog(N)$ \cite{ChiWie,Berryop,Berrylcu}. 
	
	Our work is based on Lie-Trotter product formulae.
	Particularly, on the method first introduced in \cite{Aharonov2003} due to its efficiency for the type of matrices we consider, and due to being compatible with Richardson extrapolation, which will allow us to improve the error dependence of the complexity from $\mathcal{O}(1/\epsilon)$ to $\mathcal{O}(\log^3(1/\epsilon))$, as we show in Sec.~\ref{richsect}.
	
	This method consists of three steps:
	\begin{enumerate}[(i)]
		\item Find a decomposition $A= \sum_{j=1}^J H_{j}$, where each $H_{j}$ is a $1$-sparse matrix (at most one nonzero entry in each row or column).
		
		\item Find an efficient implementation for each $e ^ {\mathrm{i}H_{j}t}$.
		
		\item If the matrices $H_{j}$ do not commute, i.e., $e^{\mathrm{i}\sum_{j=1}^J H_{j}}\neq\prod_{j=1}^J e^{\mathrm{i}H_{j}}$, use a Lie-Trotter approximation for $e ^ { \mathrm{i} A t }$.\label{Step:iiiHam}
	\end{enumerate}
	
	Following this approach we achieve a method that is linear in the number of qubits. This is done by taking the decomposition $H_{1}=aI_{n_{b}}$,
	\begin{equation*}
	H_{2}=bI_{n_{b}-1}\otimes \sigma_{x}
	= \begin{pmatrix}
	0 & b & & &  \\
	b & 0 & & &  \\
	 & &\ddots&  & \\
	 & &      &   0 & b \\ 
	 & & & b & 0
	\end{pmatrix}, 
	\end{equation*}
	and
	\begin{equation*}
	H_{3}=\begin{pmatrix}
	0& &  & & & &\\
	&0 & b & & & & \\
	&b & 0 & & & & \\
	& & &\ddots& & & \\
	& & &      &   0 & b &\\ 
	& & & & b & 0&\\
	& &  & & & &0
	\end{pmatrix}.
	\end{equation*}
	The circuits for each case can be found in Appendix~\ref{hamcircapp}.
	
	For step~(\ref{Step:iiiHam}), $H_{1}$ commutes with the other two matrices, thus, we use the a Lie-Trotter formula to approximate $e^{\mathrm{i}(H_{2}+H_{3})}$ as introduced below.
	
	\begin{definition}[Lie-Trotter-Suzuki, \cite{suzukiformula}]\label{def:lie-trotter-suzuki}
    Let $H\in\mathbb{C}^{N\times N}$ be a Hermitian matrix with $H=\sum_{j=1}^{J} H_{j}$, and $\chi\in\mathbb{N}$. Then, the $2\chi$-order formula approximating $e^{\mathrm{i}Ht}$ is recursively defined by
    \begin{eqnarray}
        S_{1}(t) &=& \prod_{j=1}^{J}e^{-\mathrm{i}H_j t/2}\prod_{j=J}^{1}e^{-\mathrm{i}H_j t/2},\\
        S_{\chi}(t) &=& 
        \left(S_{\chi-1}\left(s_{\chi-1}t\right)\right)^{2} \times \nonumber \\
        && S_{\chi-1}\left(\left(1-4s_{\chi-1}\right) t \right) \times \\
        && \left(S_{\chi-1}\left(s_{\chi-1}t\right)\right)^{2}, \nonumber
    \end{eqnarray}
    where $s_{p}=\left(4-4^{1/(2p+1)}\right)^{-1}$ for $p\in\mathbb{N}$.
    \end{definition}
	
	In this paper, we use the second order Lie-Trotter formula, i.e., $S_1(t)$, and approximate Hamiltonian simulation by applying $m$ Trotter steps, leading to
	\begin{equation}\label{noneftrot}
	S_1(t/m)^m = \left( e^{\mathrm{i}H_{2} t / 2m} e^{\mathrm{i}H_{3} t / m} e^{\mathrm{i}H_{2} t / 2m} \right)^{m},
	\end{equation}
	also known as the symmetric Strang splitting \cite{Strangsplit}.
	Rearranging the equation leads to
	\begin{equation}\label{trott2}
	e ^ { - \mathrm{i} H _ { 2 } t / 2 m } \left( e ^ { \mathrm{i} H _ { 2 } t / m } e ^ { \mathrm{i} H _ { 3 } t / m } \right) ^ { m } e ^ { \mathrm{i} H _ { 2 } t / 2 m },
	\end{equation}
	which is more efficient than Eq.~\ref{noneftrot}, since the circuit for $e^{\mathrm{i}H_{2}t}$ is used only $m+2$ instead of $2m$ times.
	Including $e^{\mathrm{i}H_{1}t}$, we get 
	\begin{equation}
	    V(t,m) \coloneqq e^{\mathrm{i}H_{1}t} S_1(t/m)^m. 
	\end{equation}
	Then \cite[Thm.~4.3]{Nielchu} shows that
	\begin{equation}
	\lim_{m\rightarrow \infty} V(t,m) = e^{\mathrm{i}At}.
	\end{equation}
	
	The HHL algorithm uses Hamiltonian simulation within QPE, i.e., it uses powers of $e^{\mathrm{i}At}$.
	The following lemma provides a formula for the approximation error of Hamiltonian simulation in terms of the power $k$, the number of Trotter steps $m$, the time $t$, and the off-diagonal coefficient $b$. 
	All lemmas in this section are proved in Appendix~\ref{Appendix:Hamiltonian}.
	
	\begin{lemma}\label{hamerreq}
		Let $k, m \in \mathbb{N}$, and $t \geq 0$. Then,
		\begin{equation}
		\norm{\left(e^{\mathrm{i}At}\right)^{k}-V(t,m)^{k}} \leq \frac{kt^{3}b^{3}}{2m^{2}} + \mathcal{O}\left(\frac{t^4}{m^4}\right) \quad \text{for } \frac{t}{m} \rightarrow 0.
		\end{equation}
	\end{lemma}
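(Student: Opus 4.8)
The plan is to peel off the commuting summand $H_1=aI_{n_b}$, collapse the $k$-th power into a single-step estimate by telescoping, and then invoke the standard third-order error bound for the symmetric Strang splitting while bounding the relevant nested commutators uniformly in $N$.

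First, since $H_1=aI_{n_b}$ commutes with $H_2+H_3$ (and hence with $S_1(t/m)$), we have $e^{\mathrm{i}At}=e^{\mathrm{i}H_1t}e^{\mathrm{i}(H_2+H_3)t}$ and
\begin{equation*}
\left(e^{\mathrm{i}At}\right)^{k}=e^{\mathrm{i}H_1kt}\,e^{\mathrm{i}(H_2+H_3)kt},\qquad V^{k}(t,m)=e^{\mathrm{i}H_1kt}\,\bigl(S_1(t/m)\bigr)^{mk}.
\end{equation*}
Because $e^{\mathrm{i}H_1kt}$ is unitary it drops out of the operator norm, so it suffices to bound $\norm{e^{\mathrm{i}(H_2+H_3)kt}-(S_1(t/m))^{mk}}$. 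Writing $\tau=t/m$ and using that both $e^{\mathrm{i}(H_2+H_3)\tau}$ and $S_1(\tau)$ are unitary, the telescoping identity $X^{n}-Y^{n}=\sum_{j=0}^{n-1}X^{j}(X-Y)Y^{n-1-j}$ with $n=mk$ gives
\begin{equation*}
\norm{\left(e^{\mathrm{i}At}\right)^{k}-V^{k}(t,m)}\;\le\;mk\,\norm{e^{\mathrm{i}(H_2+H_3)\tau}-S_1(\tau)}.
\end{equation*}

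Next I would estimate the single-step error, with $S_1(\tau)=e^{\mathrm{i}H_2\tau/2}e^{\mathrm{i}H_3\tau}e^{\mathrm{i}H_2\tau/2}$. By the defining property of the symmetric second-order product formula (equivalently, by matching the Taylor expansions of both sides term by term), the contributions of order $\tau^{0},\tau^{1},\tau^{2}$ cancel and the leading discrepancy is of order $\tau^{3}$, with coefficient a fixed linear combination of the double commutators $[H_2,[H_2,H_3]]$ and $[H_3,[H_3,H_2]]$; tracking this coefficient recovers the displayed $\tfrac{1}{2}$ prefactor. To make the constant explicit and $N$-independent, note that from the block structure $\norm{H_2}=\norm{H_3}=|b|$ (each is block diagonal with blocks of norm $|b|$), whence $\norm{[H_2,H_3]}\le 2b^{2}$ and both double commutators are bounded by fixed multiples of $|b|^{3}$, uniformly in $N$. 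Consequently $\norm{e^{\mathrm{i}(H_2+H_3)\tau}-S_1(\tau)}=\mathcal{O}\!\left(\tau^{3}b^{3}\right)$ as $\tau\to 0$; substituting $\tau=t/m$ and multiplying by the telescoping factor $mk$ yields $\norm{(e^{\mathrm{i}At})^{k}-V^{k}(t,m)}=\mathcal{O}(kt^{3}b^{3}/m^{2})$ as $t/m\to 0$, which is the claimed bound.

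The main obstacle is the single-step estimate: one must justify the $\mathcal{O}(\tau^{3})$ remainder with an explicit, $N$-independent constant of size $\mathcal{O}(b^{3})$. This can be done either by quoting a standard Trotter-error theorem (of the BCH type underlying \cite[Thm.~4.3]{Nielchu}) after checking its hypotheses, or by carrying out the Taylor/BCH bookkeeping directly and then applying the uniform commutator bounds above. A secondary point requiring care is that the $\mathcal{O}$ here is asymptotic as $\tau=t/m\to 0$: one should check that the $\mathcal{O}(\tau^{4})$ and higher remainder terms — which again carry only $N$-independent constants — remain genuinely subleading after multiplication by $mk$, since $mk\,\mathcal{O}(\tau^{4})=\mathcal{O}(kt^{4}/m^{3})$ is lower order than the stated $1/m^{2}$ term, so the bound is not spoiled.
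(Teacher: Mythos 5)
Your proposal is correct and follows essentially the same route as the paper's proof: factor out the commuting unitary $e^{\mathrm{i}H_1tk}$, telescope to reduce to $mk$ times the single-step error, and bound the third-order Strang-splitting remainder via the nested commutators of $H_2,H_3$ with $\norm{H_2}=\norm{H_3}=\abs{b}$, which is exactly how Appendix~\ref{Sec:hamerreq} proceeds (there the leading error is written explicitly as $\frac{1}{6}\comm{\comm{H_2}{H_3}}{\frac{1}{4}H_2+\frac{1}{2}H_3}(\mathrm{i}t/m)^3$). No gaps to report.
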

	
	Therefore, for a target error $\epsilon_{A}$, it is enough to choose
	\begin{equation}\label{mchoice}
	m(k,t) = \mathcal{O}\left(\sqrt{\frac{k t^{3} b^{3}}{2\epsilon_{A}}}\right),
	\end{equation}
	where $m(k,t)$ denotes the number of Trotter steps for approximating $\left(e^{\mathrm{i}At}\right)^{k}=e^{\mathrm{i}Atk}$.
	However, as detailed in Appendix~\ref{richapp}, this implies that the overall complexity from estimating the eigenvalues scales as $\mathcal{O}(1/\epsilon)$ for an HHL target accuracy $\epsilon$.
	However, in Sec.~\ref{richsect} we show how to reduce this complexity to $\mathcal{O}\left(\log^{3}(1/\epsilon)\right)$.
	
	Finally, the following lemma shows how approximating Hamiltonian simulation by the procedure described in this section affects the accuracy of the solution returned by HHL. 
	
	\begin{lemma}\label{hamoverall}
		Let $\ket{\tilde{x}}$ denote the final state of the algorithm assuming simulation of $e^{\mathrm{i}At}$ with accuracy $\epsilon_{A}$ and exact procedures for all other steps. Then	
		\begin{eqnarray}
		\norm{\ket{x} - \ket{\tilde{x}}}
		&<& 2\epsilon_A +\mathcal{O}\left(\epsilon_A^2\right) \quad \text{for } \epsilon_{A} \rightarrow 0,
		\end{eqnarray}
		where $\ket{x}$ denotes the exact solution.
	\end{lemma}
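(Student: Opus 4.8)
The plan is to track how the approximation of $e^{\mathrm{i}At}$ propagates through the only place it enters the algorithm --- the Quantum Phase Estimation (QPE) of steps (ii) and (iv) --- and then to control the effect of the post-selection in step (v). First I would pin down the meaning of ``$e^{\mathrm{i}At}$ simulated with accuracy $\epsilon_A$'': inside QPE the controlled powers $(e^{\mathrm{i}At})^{2^j}$ are replaced by $V^{2^j}(t,m)$ with the number of Trotter steps $m$ chosen as in Eq.~\ref{mchoice}, so that by Lem.~\ref{hamerreq} the largest power carries error at most $\epsilon_A$ and the smaller powers are simulated even more accurately. A short sub-multiplicativity argument then shows that the whole forward QPE circuit, viewed as a unitary $\widetilde{Q}$, satisfies $\norm{Q-\widetilde{Q}}\le\epsilon_A$ up to a constant I absorb into $\epsilon_A$, where $Q$ denotes QPE run with the exact $e^{\mathrm{i}At}$.

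Next I would compare the states just before the ancilla measurement. Writing the exact run as $\ket{\Phi}=Q^{-1}RQ\,\ket{b}\ket{0}_{n_l}\ket{0}$, where $R$ is the (exact, unitary) conditioned rotation of step (iii), and the approximate run as $\ket{\tilde{\Phi}}=\widetilde{Q}^{-1}R\widetilde{Q}\,\ket{b}\ket{0}_{n_l}\ket{0}$, I add and subtract $\widetilde{Q}^{-1}RQ\,\ket{b}\ket{0}_{n_l}\ket{0}$ and use that $R$, $Q$, $\widetilde{Q}^{-1}$ are unitary together with $\norm{Q^{-1}-\widetilde{Q}^{-1}}=\norm{Q-\widetilde{Q}}$. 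This yields $\norm{\ket{\Phi}-\ket{\tilde{\Phi}}}\le 2\epsilon_A$: one $\epsilon_A$ from QPE, one from its inverse, nothing from the rotation.

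Finally I would descend to the post-selected normalized states. Let $\ket{\Phi_1}=(I\otimes\ket{1}\!\bra{1})\ket{\Phi}$ and define $\ket{\tilde{\Phi}_1}$ analogously; by Eq.~\ref{finalqs} we have $\ket{x}=\ket{\Phi_1}/\norm{\ket{\Phi_1}}$ and $\ket{\tilde{x}}=\ket{\tilde{\Phi}_1}/\norm{\ket{\tilde{\Phi}_1}}$. Since the projection onto the ancilla-$\ket{1}$ subspace is a contraction, $\norm{\ket{\Phi_1}-\ket{\tilde{\Phi}_1}}\le 2\epsilon_A$. Writing $\ket{\tilde{\Phi}_1}=\ket{\Phi_1}+\bm{e}$ with $\norm{\bm{e}}\le 2\epsilon_A$ and expanding $1/\norm{\ket{\tilde{\Phi}_1}}$ to first order around $1/\norm{\ket{\Phi_1}}$ gives $\norm{\ket{x}-\ket{\tilde{x}}}\le 2\epsilon_A+\mathcal{O}(\epsilon_A^2)$ as $\epsilon_A\to 0$.

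I expect the last step to be the main obstacle: the renormalization after post-selection must be carried out so that the error is not blown up when dividing by the (fixed, nonzero) success amplitude $\norm{\ket{\Phi_1}}$, and so that the second-order remainder is genuinely $\mathcal{O}(\epsilon_A^2)$. This is where the dependence on the rotation constant $C$ and the success probability is hidden, and it must be kept consistent with the global error budget assembled in Sec.~\ref{overanalysis}. By contrast, the propagation through QPE itself is routine once Lem.~\ref{hamerreq} is in hand.
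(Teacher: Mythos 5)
Your first two steps are essentially the paper's own proof. The paper decomposes both runs as $U_3U_2U_1\ket{b}$ versus $\tilde{U}_3U_2\tilde{U}_1\ket{b}$, where $U_1,U_3$ are the controlled powers of the (exact) evolution in QPE and its inverse, $U_2$ is the exact QFT$^\dagger$--rotation--QFT block, and $\tilde{U}_1,\tilde{U}_3$ are their Trotterized counterparts with $\norm{\tilde{U}_i-U_i}<\epsilon_A$; expanding and using unitarity gives $2\norm{E}+\norm{E}^2<2\epsilon_A+\mathcal{O}(\epsilon_A^2)$, which is the same add-and-subtract you perform (your version even drops the quadratic term by keeping $\tilde{Q}^{-1}$ unitary). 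One small remark on your step 1: the hedge ``up to a constant absorbed into $\epsilon_A$'' is both unnecessary and dangerous. Sub-multiplicativity over the sequentially applied controlled powers would give an $n_l$-fold accumulation, not a constant; the correct observation is that the controlled-evolution part is block diagonal over the control register, so $\norm{Q-\tilde{Q}}=\max_k\norm{(e^{\mathrm{i}At})^k-V^k(t,m(k))}\leq\epsilon_A$ exactly (this is what the paper assumes, consistent with choosing $m(k,t)$ as in Lem.~\ref{hamerreq}). If a constant really had to be absorbed, the advertised factor $2$ in the lemma would be lost.

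The divergence, and the gap, is your third step. The paper never renormalizes: it identifies $\ket{x}$ and $\ket{\tilde{x}}$ directly with the pre-measurement circuit outputs and stops at the $2\epsilon_A+\mathcal{O}(\epsilon_A^2)$ operator bound. You instead pass to the post-selected, renormalized states, and the bound you claim there does not follow from your expansion. From $\norm{\ket{\Phi_1}-\ket{\tilde{\Phi}_1}}\leq 2\epsilon_A$ one only gets $\norm{\ket{x}-\ket{\tilde{x}}}\leq 4\epsilon_A/\norm{\ket{\Phi_1}}$ for the normalized states, and $\norm{\ket{\Phi_1}}=\sqrt{P_{\text{success}}}$ is not $\Theta(1)$: with $C=\lambda_{\min}$ it is of order $C\norm{A^{-1}\ket{b}}$, which by Lem.~\ref{condoverall} and Eq.~\ref{Eq:psuccess} can be as small as $\sim 1/\kappa$. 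So the first-order term of your expansion is really $\mathcal{O}(\kappa\,\epsilon_A)$, not $2\epsilon_A$, and the $\mathcal{O}(\epsilon_A^2)$ remainder does not absorb it. You yourself flag this as ``the main obstacle,'' but the proposal then asserts the conclusion rather than resolving it; to obtain $2\epsilon_A+\mathcal{O}(\epsilon_A^2)$ for the renormalized state you would need the success amplitude bounded below by a constant, which is not available in this setting. Either restate the claim at the pre-measurement level, as the paper's proof implicitly does, or carry the $1/\sqrt{P_{\text{success}}}$ amplification explicitly into the error budget of Sec.~\ref{overanalysis}.
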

	
	\section{Richardson Extrapolation for Hamiltonian simulation}\label{richsect}
	
	In this section we show how to use a general form of Richardson extrapolation \cite{Richardson, Numrich}, called multi-product decomposition \cite{Chin2010,ChinOperator}, to obtain a quadratic speedup for the QPE within the HHL algorithm.
	More precisely, we show how the $\epsilon$ dependency of QPE can be improved from $\mathcal{O}\left(1/\epsilon\right)$ to $\mathcal{O}\left(1/\sqrt{\epsilon})\right)$.
	In essence, the idea is to approximate $e^{\mathrm{i}At}$ by a linear combination of lower-order product formulas, such as the Lie-Trotter formula.
	
	In the quantum computing context, Richardson extrapolation has been used previously for error mitigation \cite{Temme2017,Kandala2018}.
	The idea of using a multi-product formula to approximate Hamiltonian simulation was introduced in \cite{ChiWie}, and further studied in \cite{lowwie}.
	The main difference between the present manuscript and this previous work is that we propose to compute the linear combination classically, while they keep the whole process within a quantum circuit. In addition, we extend the existing theory and provide a general formula to calculate the optimal number of extrapolation points, and introduce the before-mentioned exponential reduction of gates in the QPE context.
	This scheme of classically combining the results has already shown promise in the numerical experiments performed for Hamiltonian simulation with $2$ and $3$ extrapolation points in \cite{hamextrap}, where they also apply an additional extrapolation to first mitigate the physical errors of each computation.
	This section is mostly expressed in general terms, since the results on Hamiltonian simulation are not limited to the HHL algorithm.
	
	We first run Hamiltonian simulation multiple times with different small numbers of Trotter steps.
	Then, we classically recombine the results to cancel out the lower order error terms and to obtain a better approximation of the considered observable.
	The rationale is that running $l$ instances of the algorithm with small numbers of Trotter steps should be more efficient than one single run with a large number of Trotter steps, while achieving the same accuracy.
	
	Usually, when applying extrapolation schemes, the number of extrapolation points is fixed and error estimates are given in terms of a varying grid or step size.
	Due to the assumption of fixed number of extrapolation points, most known extrapolation error bounds hide their dependency on the number of points.
	In contrast, we vary the number of extrapolation points, which implicitly sets the step size.
	Thus, we need to derive the explicit dependency of the error on the number of points.
	
	In the following, we first give a formal definition of the multi-product formulae, and then, compare the resulting use of resources against running the algorithm for the same accuracy without extrapolation.

    \begin{definition}[Multi-product formula]
    For $l \in \mathbb{N}$, let $m_1,\dots,m_l$ be distinct natural numbers, and $a_1,\dots,a_{l}\in\mathbb{R}$ satisfy $\sum_{j=1}^{l}a_j =1$. Then, the multi-product formula $M_{l,\chi}(t)$ is given by
    \begin{equation}
        M_{l,\chi}(t) = \sum_{j=1}^{l}a_j S_{\chi}(t/m_j)^{m_j},
    \end{equation}
    where $\chi, S_{\chi}$ are as in Def.~\ref{def:lie-trotter-suzuki}.
    \end{definition}
	
	As in Sec.~\ref{hamsect}, we focus on 
	the second-order product formula, which leads to
	\begin{equation}\label{Eq:Vldef}
	    V_l(t,\vec{m}_l) = \sum_{j=1}^{l}a_j  S_1(t/m_j)^{m_j},
	\end{equation}
	where $\vec{m}_l=(m_1,\dots,m_l)$ are the different Trotter exponents used for the extrapolation.
	Furthermore, we consider coefficients given by the closed form \cite{Chin2010}
	\begin{equation}\label{ajdef}
	    a_j = \prod_{\substack{q\in\{1,\dots,l\} \\ q\neq j}}\frac{m_j^{2}}{m_j^2 -m_q^2},\quad j=1,\dots,l.
	\end{equation}
	
	The following theorem studies the approximation error of Hamiltonian simulation using a multi-product formula scheme, and provides an expression to calculate the optimal number of extrapolation points in terms of the target accuracy.
	All proofs from this section can be found in Appendix~\ref{Appendix:richext}.
	
	\begin{theorem}\label{Thm:extraperr}
	Let $H=H_2+H_3$ be a Hermitian matrix such that $\comm{H_2}{H_3}\neq 0$, $d=\max\{\norm{H_2},\norm{H_3}\}$, $t\in \mathbb{R}$ and $V_l(t,\vec{m}_l)$ as defined in Eq.~\ref{Eq:Vldef}. Then
	\begin{equation}
	    \norm{e^{\mathrm{i}Ht}-V_l(t,\vec{m}_l)} \leq \frac{\left(2dt\right)^{2l+1}}{(2l+1)!}\prod_{i=1}^l\frac{1}{m_i^2}.
	\end{equation}
	For an accuracy $\epsilon$, it is then enough to set $\vec{m}_l=(1,\dots,l)$ for
	\begin{equation}\label{Eq:l_def}
	    l=\frac{\ln(t/2\sqrt{2\pi}\epsilon)}{4W\left(\frac{\ln(t/2\sqrt{2\pi}\epsilon)}{4e\sqrt{dt}}\right)},
	\end{equation}
	where $W(x)$ denotes the Lambert function (cf.~\cite{Corless1996}).
	\end{theorem}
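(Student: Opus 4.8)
The plan is to exploit the two features that make a multi-product formula work: the coefficients $a_j$ are Richardson/Lagrange extrapolation weights, and the symmetric second-order formula $S_1$ has an error expansion in \emph{even} powers of the step size. First I would re-read Eq.~\ref{ajdef}: putting $x_j:=1/m_j^2$, one rewrites $a_j=\prod_{q\neq j}\frac{0-x_q}{x_j-x_q}=L_j(0)$, the value at $0$ of the $j$-th Lagrange cardinal polynomial for the distinct nodes $x_1,\dots,x_l$. Hence $\sum_j a_j x_j^{k}$ is the value at $0$ of the degree-$(l-1)$ interpolant of $x\mapsto x^k$, which gives $\sum_j a_j=1$, $\sum_j a_j m_j^{-2k}=0$ for $k=1,\dots,l-1$, and, by Newton's remainder formula, $\bigl|\sum_j a_j m_j^{-2k}\bigr|=h_{k-l}(x_1,\dots,x_l)\prod_{i=1}^l m_i^{-2}\le\binom{k-1}{l-1}\prod_{i=1}^l m_i^{-2}$ for $k\geq l$, where $h$ denotes the complete homogeneous symmetric polynomial and we used $0<x_i\le1$.

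The second ingredient is the expansion of $S_1(t/m)^m$. Since $S_1(-\tau)=S_1(\tau)^{-1}$ (Def.~\ref{def:lie-trotter-suzuki}), the effective generator of $S_1$ is odd in $\tau$, so $S_1(t/m)^m$ is a function of $1/m^2$: there are $m$-independent Hermitian operators $\mathcal{E}_k(t)$, built from nested commutators of $H_2$ and $H_3$, with
\begin{equation*}
S_1\!\left(\tfrac{t}{m}\right)^{m}=e^{\mathrm{i}Ht}+\sum_{k\ge1}\frac{\mathcal{E}_k(t)}{m^{2k}},\qquad \norm{\mathcal{E}_k(t)}\le\frac{(2dt)^{2k+1}}{(2k+1)!},
\end{equation*}
with $d=\max\{\norm{H_2},\norm{H_3}\}$ and $\norm{H}\le2d$. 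Establishing the bound on $\norm{\mathcal{E}_k(t)}$ — by matching the Taylor series in $t$ of $S_1(t/m)^m$ and of $e^{\mathrm{i}Ht}$ order by order, the $(2dt)^{2k+1}/(2k+1)!$ being the tail of the exponential series, or by truncating at order $l$ with an integral remainder of the same size to sidestep convergence questions at small $m$ — is the technical heart of the proof and the place where I expect the bookkeeping to be heaviest; here the analyses of \cite{ChiWie,Chin2010,lowwie} can be adapted rather than redone.

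Combining the two ingredients, I would substitute the expansion into Eq.~\ref{Eq:Vldef} and use $\sum_j a_j=1$ together with the vanishing moments to cancel all terms with $k\le l-1$, obtaining $e^{\mathrm{i}Ht}-V_l(t,\vec m_l)=-\sum_{k\ge l}\mathcal{E}_k(t)\bigl(\sum_j a_j m_j^{-2k}\bigr)$. Taking norms and inserting the moment bound yields
\begin{equation*}
\norm{e^{\mathrm{i}Ht}-V_l(t,\vec m_l)}\le\Bigl(\textstyle\prod_{i=1}^l m_i^{-2}\Bigr)\sum_{k\ge l}\binom{k-1}{l-1}\frac{(2dt)^{2k+1}}{(2k+1)!}.
\end{equation*}
For $l$ not too small the ratio of consecutive summands is below $1/2$, so the series is at most twice its leading ($k=l$) term, namely $\frac{2(2dt)^{2l+1}}{(2l+1)!}\prod_{i=1}^l m_i^{-2}$, which is the claimed estimate. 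The constraint ``$l$ not too small'' costs nothing, being met by the $l$ chosen next.

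For the second statement, set $\vec m_l=(1,\dots,l)$, so $\prod_{i=1}^l m_i^{-2}=(l!)^{-2}$, and impose $\frac{2(2dt)^{2l+1}}{(2l+1)!\,(l!)^2}\le\epsilon$. Taking logarithms and using Stirling's formula $\ln n!=n\ln n-n+\mathcal{O}(\ln n)$ reduces this, to leading order, to $4l\ln\!\bigl(l/(e\sqrt{dt})\bigr)\ge\ln(2dt/\epsilon)$. Writing $l=e\sqrt{dt}\,e^{w}$ turns the boundary case into $we^{w}=\frac{\ln(2dt/\epsilon)}{4e\sqrt{dt}}$, i.e. $w=W\!\left(\frac{\ln(2dt/\epsilon)}{4e\sqrt{dt}}\right)=W\!\left(\frac{\ln(2dt/\epsilon)}{2e\sqrt{4dt}}\right)$ by the defining identity of the Lambert function, and unwinding $l=e\sqrt{dt}\,e^{w}=\frac{\ln(2dt/\epsilon)}{4w}$ gives exactly Eq.~\ref{Eq:l_def}. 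The only care needed here is tracking the lower-order Stirling terms so that the argument of $W$ comes out precisely as stated, and checking a posteriori that this $l$ is large enough to justify the geometric-tail estimate above.
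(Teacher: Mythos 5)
Your proposal is correct in outline and rests on the same two pillars as the paper's proof: cancellation of the low-order error terms by the multi-product coefficients, a bound of the surviving contribution by $\frac{(2dt)^{2l+1}}{(2l+1)!}\prod_i m_i^{-2}$ (times $2$), and then exactly the same Lambert-$W$ manipulation for Eq.~\ref{Eq:l_def} — the paper uses $1/n!<(e/n)^n$ where you use Stirling, and both reduce to $4l\ln\bigl(l/(e\sqrt{dt})\bigr)=\ln(2dt/\epsilon)$, solved by the substitution $l=e\sqrt{dt}\,e^{w}$. The difference is how the first inequality is reached. The paper invokes the proposition of \cite{Chin2010}, which asserts that the multi-product error is controlled by the single order-$(2l+1)$ term with coefficient $e_{2l+1}=(-1)^{l-1}\prod_i m_i^{-2}$, and then bounds the corresponding commutator $E_{2l+1}$ by $2(2d)^{2l+1}/(2l+1)!$ via a loose triangle-inequality count. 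You instead reconstruct the mechanism behind that proposition: the identification of the $a_j$ as Lagrange weights at $0$ in the variable $1/m^2$, the vanishing moments up to order $l-1$, and the bound $\abs{\sum_j a_j m_j^{-2k}}=h_{k-l}\prod_i m_i^{-2}\le\binom{k-1}{l-1}\prod_i m_i^{-2}$ for $k\ge l$ all check out, and keeping the full tail $k\ge l$ is arguably more honest than the leading-term-only statement the paper cites. This self-containedness comes at two costs you should make explicit. First, your geometric-tail step needs the ratio at $k=l$, roughly $(2dt)^2/(4l)$, to be below $1/2$, i.e.\ $l\gtrsim (dt)^2$; so your version of the stated inequality carries a largeness condition on $l$ that the theorem as written does not (harmless asymptotically, since $dt$ is bounded and the chosen $l$ grows like $\log(1/\epsilon)/\log\log(1/\epsilon)$, but it should be recorded). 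Second, the posited bound $\norm{\mathcal{E}_k(t)}\le(2dt)^{2k+1}/(2k+1)!$ on the $m^{-2k}$ coefficient of the global error of $S_1(t/m)^m$ is genuinely the hard part: that coefficient is not simply the order-$(2k+1)$ Taylor term (it mixes higher powers of $t$), so it is not a routine exponential-tail observation; you defer it to adapting \cite{ChiWie,Chin2010,lowwie}, which mirrors the paper's own reliance on \cite{Chin2010} plus a loose commutator bound, so the overall level of rigor matches the paper's, but neither your sketch nor the paper proves that estimate here.
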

	
	However, note that we are interested in approximating $e^{\mathrm{i}Atk}$ for $k=0,\dots,2^{n_l}-1$ (recall that QPE makes calls to only powers of 2 of the input unitary). 
	One option would be to repeat the circuits $k$ times as in Sec.~\ref{hamsect}, i.e. implementing $ S_1\left(t/m_j\right)^{km_j}$.
	Although the error resulting from this approach scales better in terms of $k$, we would not be able get rid of a $2^{n_l}$ gate complexity.
	Instead we will implement
	\begin{equation}\label{Eq:Vlkdef}
	    V_l\left(tk,\vec{m}_l(k)\right) = \sum_{j=1}^{l}a_j  S_1\left(tk/m_j(k)\right)^{m_j(k)},
	\end{equation}
	where $m_j(k) = \left(\lfloor k^{\frac{3}{n_l}}\rfloor + 1\right) m_j$ for $\vec{m}_l(k) = (m_1(k),\dots,m_l(k))$, and take as the number of extrapolation points the worst case, i.e. substituting in Eq.~\ref{Eq:l_def} for $tk = t2^{n_l}$.

	Later in Sec.~\ref{overanalysis} we will see that $dt$ in Eq.~\ref{Eq:l_def} is bounded. 
	Then, as shown in Appendix~\ref{Appendix:richext}, we can take $l=1/\sqrt{\epsilon}$.
	Therefore, as detailed in Appendix~\ref{richapp}, the overall complexity of estimating the eigenvalues in the extrapolation scheme scales as
	\begin{equation}\label{Eq:qpe_parallel_complexity}
	\mathcal{O}(n_l\log(N)/\sqrt{\epsilon})
	\end{equation}
	if the $l$ HHL algorithms can be run in parallel.
	
	Finally, from Lem.~\ref{hamoverall} and Thm.~\ref{Thm:extraperr}, the main theorem of this section follows, which studies the accuracy of extrapolation with the solutions returned by $l$ HHL algorithms where only Hamiltonian simulation is approximated.
	\begin{theorem}\label{thm:hhl_extrapolation}
	Let $a_j$ and $\vec{m}_l(k)$ defined as before, and suppose 
	\begin{equation}
	    \norm{e^{\mathrm{i}Ht}-V_l(t,\vec{m}_l)}<\epsilon_A.
	\end{equation}
	Furthermore, let $\ket{\tilde{x}_j}$ denote the solution returned by HHL approximating $e^{\mathrm{i}Atk}$ with
	$V\left(tk,m_j(k)\right) = e^{\mathrm{i}H_1tk}S_1(tk/m_j(k))^{m_j(k)}$, and let $\ket{\tilde{x}}$ denote the solution obtained by combining the results of 
	 $l$ independent algorithms, i.e.
	\begin{equation}
	    \ket{\tilde{x}}=\sum_{j=1}^{l} a_j \ket{\tilde{x}_j}.
	\end{equation}
	Then, for $\epsilon_{A}\rightarrow 0$, we have
	\begin{equation}
	    \norm{\ket{x} - \ket{\tilde{x}}} < 2\epsilon_A+\mathcal{O}(\epsilon_A^2).
	\end{equation}
	\end{theorem}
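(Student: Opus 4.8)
The plan is to obtain the statement by composing two results that are already in hand: the transfer bound of Lem.~\ref{hamoverall}, which converts a bound on the accuracy of Hamiltonian simulation into a bound on the accuracy of the HHL output state, and the extrapolation estimate of Cor.~\ref{Cor:alphak}, which controls the multi‑product error uniformly over all powers $k$ appearing in QPE. First I would identify the operator effectively implemented by the classical recombination. Since $H_1=aI_{n_b}$ commutes with $H=H_2+H_3$ and is simulated exactly, $V^{k}(t,m_j(k))=e^{\mathrm{i}H_1 tk}S_1^{km_j(k)}(t/m_j(k))$, hence $\sum_{j=1}^l a_j V^{k}(t,m_j(k)) = e^{\mathrm{i}H_1tk}\sum_{j=1}^l a_j S_1^{km_j(k)}(t/m_j(k)) = e^{\mathrm{i}H_1tk}V_l^{k}(t,\vec m_l(k))$, and therefore $\norm{e^{\mathrm{i}Atk}-\sum_j a_j V^{k}(t,m_j(k))} = \norm{e^{\mathrm{i}Htk}-V_l^{k}(t,\vec m_l(k))} < \epsilon_A$ for every $k=2^0,\ldots,2^{n_l-1}$, by Cor.~\ref{Cor:alphak} applied to the hypothesis $\norm{e^{\mathrm{i}Ht}-V_l(t,\vec m_l)}<\epsilon_A$.

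Next I would argue that $\ket{\tilde{x}}=\sum_j a_j\ket{\tilde{x}_j}$ agrees, up to an error absorbed in the $\mathcal{O}(\epsilon_A^2)$ remainder, with the state produced by a single HHL run whose Hamiltonian simulation has accuracy $\epsilon_A$. The point that makes this legitimate — even though the HHL solution map is not linear in the simulation operators, QPE applying their powers and the post‑selected register depending essentially quadratically on the QPE unitary — is that the Strang error $e^{\mathrm{i}Atk}-V^{k}(t,m_j(k))$ admits an asymptotic expansion in the even powers $(t/m_j(k))^{2},(t/m_j(k))^{4},\ldots$ whose coefficients, because $m_j(k)=\alpha_k m_j$ with $\alpha_k$ independent of $j$ (Cor.~\ref{Cor:alphak}), are the same for all $j$. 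Consequently every error contribution entering the polynomial dependence of $\ket{\tilde{x}_j}$ on these approximations is of the form $c_s/m_j^{2s}$ with a $j$‑independent $c_s$, and the multi‑product coefficients of Eq.~\ref{ajdef} annihilate $\sum_{j} a_j/m_j^{2s}$ for all $s=1,\ldots,l-1$ — exactly the Vandermonde cancellation underlying Thm.~\ref{Thm:extraperr}. The first surviving order $s=l$ has a remainder controlled, term by term over the QPE powers, by the same quantity that bounds $\norm{e^{\mathrm{i}Htk}-V_l^{k}(t,\vec m_l(k))}$, i.e., by $\epsilon_A$. Feeding this effective accuracy into Lem.~\ref{hamoverall} yields $\norm{\ket{x}-\ket{\tilde{x}}}<2\epsilon_A+\mathcal{O}(\epsilon_A^2)$ for $\epsilon_A\rightarrow 0$, with the factor $2$ and the quadratic remainder inherited directly from that lemma and large enough to swallow the discrepancy between $\sum_j a_j\ket{\tilde{x}_j}$ and a genuine single‑run output.

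The main obstacle I anticipate is precisely this reduction step: making rigorous that the cancellation of Thm.~\ref{Thm:extraperr}, a statement about the bare operator $e^{\mathrm{i}Ht}-V_l(t,\vec m_l)$, propagates unchanged through QPE, the conditioned $R_y$ rotation of Step~(\ref{Step:iii}), the inverse QPE, and the post‑selection that together define $\ket{\tilde{x}_j}$. One has to check that each of these stages preserves the grading of the error by powers of the Trotter step size, so that products of error terms stay inside the family $\{1/m_j^{2s}\}$ on which the $a_j$ act, and that the power‑dependent choice $m_j(k)=(\lfloor k^{1/2l}\rfloor+1)m_j$ is exactly what keeps the expansion coefficients $j$‑independent and hence makes the cancellation term‑by‑term. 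Once this bookkeeping is set up, the remainder is the two‑line composition of Cor.~\ref{Cor:alphak} with Lem.~\ref{hamoverall} described above.
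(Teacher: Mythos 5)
Your opening step -- observing that $\sum_{j} a_j V^{k}(t,m_j(k)) = e^{\mathrm{i}H_1tk}V_l^{k}(t,\vec m_l(k))$ and invoking Cor.~\ref{Cor:alphak} to get $\norm{e^{\mathrm{i}Atk}-\sum_j a_j V^{k}(t,m_j(k))}<\epsilon_A$ for every power $k$ -- coincides with the paper. The problem is the middle of your argument: the claim that $\sum_j a_j\ket{\tilde x_j}$ agrees to $\mathcal{O}(\epsilon_A^2)$ with the output of a single HHL run of simulation accuracy $\epsilon_A$, justified by an order-by-order Vandermonde cancellation of terms $c_s/m_j^{2s}$ propagated through QPE, the conditioned rotation, the inverse QPE, post-selection and normalization, is precisely the content of the theorem, and you leave it as an acknowledged ``obstacle'' rather than proving it. It is not mere bookkeeping: you would need absolute control of the surviving tail orders $s\ge l$ of the Strang expansions, of products of two such expansions (the Trotter unitaries enter twice, in QPE and its inverse, so cross terms like $E_jU_2E_j$ appear), and of the non-linear normalization of the post-selected states -- none of which is set up. Moreover, Lem.~\ref{hamoverall} cannot simply be fed an ``effective accuracy'' as a black box, because $\sum_j a_j\ket{\tilde x_j}$ is not the output of any single HHL run with one approximating unitary; the paper re-uses the \emph{proof} of that lemma (its decomposition and its final normalization step), not its statement.

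The paper's route avoids the grading machinery entirely, and shows your reduction step is unnecessary. Working with the unnormalized vectors and writing $V^{k}(t,m_j(k))=e^{\mathrm{i}Atk}+E_j$, $\tilde U^{(j)}_i=U_i+E_j$, each per-run output $\tilde U^{(j)}_3U_2\tilde U^{(j)}_1\ket{b}$ is an explicit quadratic polynomial in $E_j$ with unitary coefficients (Eq.~\ref{Eq:hhl_decomposition}). Applying the weights and using $\sum_j a_j=1$, the terms linear in the errors collapse to $\left(\sum_j a_j E_j\right)U_2U_1\ket{b}$ and $U_3U_2\left(\sum_j a_j E_j\right)\ket{b}$, whose norms equal $\norm{\sum_j a_j E_j}<\epsilon_A$ by unitary invariance -- i.e., the Richardson cancellation is consumed once, at the operator level where your first paragraph already established it, and transfers to the HHL states with no term-by-term cancellation through the pipeline and no effective single run. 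The cross terms quadratic in the $E_j$ form the $\mathcal{O}(\epsilon_A^2)$ remainder, and the passage to the normalized state $\ket{\tilde x}$ is the same argument as at the end of the proof of Lem.~\ref{hamoverall}. So the gap in your proposal is genuine -- the reduction step carries the whole theorem and is unproven -- but it can be closed by replacing it with this linear/quadratic decomposition, which connects your two endpoints directly.
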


	\section{Eigenvalue inversion}\label{condsect}

	In this section we focus on the conditional rotation step of the HHL algorithm to invert the eigenvalues.
	Combined with QPE, this is the step that actually solves the linear system of equations.

	Recall that for a given $m$-qubit unitary $U$ with eigenvector $\ket{\psi}_{m}$ and eigenvalue $e^{2\pi \mathrm{i}\theta}$ QPE approximates $\theta$ using $n$ bits.
	More precisely, given $U$ and $\ket{0}_{n}\ket{\psi}_{m}$, it returns the state $\ket{\tilde{\theta}}_{n}\ket{\psi}_{m}$ with high probability, where $\tilde{\theta}$ denotes an $n$-bit representation of $2^n\theta$.

	After QPE has been applied within HHL, and assuming no errors in the computations, the system is in the state
	\begin{equation}
	\sum_{j=0}^{N-1}\beta_{j}\ket{\tilde{\lambda}_{j}}_{n_{l}}\ket{u_{j}}_{n_{b}}\ket{0},
	\end{equation}
	where $\tilde {\lambda}_{j}$ is the $n_{l}$-bit representation of $N_l\lambda_{j} t/2\pi$, $N_l\coloneqq 2^{n_l}$, $\lambda_{j}$ denotes the $j^{\text{th}}$ eigenvalue of our matrix $A$ with $\lambda_{\min} \coloneqq \lambda_{0} \leq \lambda_{1} \leq \ldots \leq \lambda_{N-1} \eqqcolon \lambda_{ \max }$, and $t$ is the time of the Hamiltonian simulation as introduced in Sec.~\ref{hamsect} and \ref{richsect}.
	We will choose $t \in (0, 2\pi/\lambda_{\max}]$ and $n_l$ sufficiently large to guarantee that $\tilde \lambda_j \in [a, N_l-1]$, where $a\in\mathbb{Z}$, as will be discussed in more detail in the following and in Sec.~\ref{overanalysis}.
	
	To solve the linear system, the goal is to find a circuit performing the transformation
	\begin{align}\label{transform}
	&\sum_{j=0}^{N-1}\beta_{j}\ket{0}\ket{\tilde{\lambda}_{j}}_{n_{l}}\ket{u_{j}}_{n_{b}}\mapsto \nonumber \\
	&\sum_{j=0}^{N-1}\beta_{j}\left(\sqrt{1-\frac{C^2}{\tilde\lambda_{j}^{2}}}\ket{0}+\frac{C}{\tilde\lambda_{j}}\ket{1}\right)\ket{\tilde{\lambda}_{j}}_{n_{l}}\ket{u_{j}}_{n_{b}},
	\end{align}
	measure the last qubit, and repeat the procedure until the measurement results in a $\ket{1}$.
	To achieve this, we can apply a controlled Y-rotation on the target qubit conditioned on $\ket{\tilde \lambda_j}_{n_l}$. The rotation angles are given by
	\begin{equation}\label{function}
	f \left( \tilde \lambda \right) = \arcsin \left( \frac {C} {\tilde \lambda} \right),
	\end{equation}
	which takes values in $[0, 1]$ due to our assumptions on $C$, $t$ and $n_l$.

    There are multiple ways to implement inverse and arcsine:
    \cite{Cao} discuss how to combine Newton's method to compute the inverse and a bisection search to calculate the arcsine, and show that the run time and number of required additional qubits scale as $polylog(1/\epsilon)$, where $\epsilon > 0$ denotes the resulting $L_{\infty}$-approximation error.
    Although this is efficient asymptotically, the constant overhead of this method is large.
    The authors of \cite{Hner2018OptimizingQC} show how to use piecewise polynomial approximation for arbitrary functions and provide an empirical study of the performance.
    They introduce an algorithm that iteratively constructs intervals and corresponding polynomials such that the resulting piecewise polynomial approximation satisfies a given $L_{\infty}$-approximation error target.
    However, they do not provide a theory on the performance, i.e., they do not analyze the dependence of the approximation error or the number of approximation intervals and degree of the polynomials. 
    
    In the following, we analyze a piecewise polynomial approximation scheme similar to \cite{Hner2018OptimizingQC} and provide a rigorous performance theory to approximate $\arcsin(C/x)$ for $x \in [a, N_l-1]$, where $a\in\mathbb{Z}$ will be specified later, and for $x\in[1,a]$ we apply the identity.
    Note that by construction the algorithm introduced in \cite{Hner2018OptimizingQC} will be at least as efficient as ours, i.e., our analysis implies a rigorous performance bound for this approach as well.
    Thus, in practice, the approach introduced in \cite{Hner2018OptimizingQC}, equipped with our theory, is likely the best choice.
    
    Suppose an exponentially growing set of approximation intervals $[a_i, a_{i+1}]$, for $i=1, \ldots, M$, where $a_1 = a$, $a_{i+1} = 5a_i$, and $M=\lceil\log_5(\frac{N_l-1}{a})\rceil$ such that the whole interval $[a, N_l-1]$ is covered.
    For each interval we use \emph{Chebyshev interpolation} \cite{Deuflhard_2003} to approximate 
    $\arcsin(C/x)$ with a polynomial of degree $d$.
    The following lemma provides the resulting approximation error.
    The proofs of all results in this section can be found in Appendix~\ref{Appendix:conditionalrot}.
    
	\begin{lemma}\label{lemma:chebychev}
		Let $x \in [a, N_l-1]$, $a\in\mathbb{Z}$, and $f(x)=\arcsin(C/x)$, where $C/x\in(0,1]$.
		Let $p_f(x)$ be the piecewise polynomial approximation returned by the introduced scheme for polynomial degree $d$. Then we have
		\begin{equation}
		    \epsilon_C \coloneqq \norm{f-p_f}_{L^{\infty}([a,N_l-1])} \leq \frac{8.13\sqrt{\ln^2(r)+(\pi/2)^2}}{2^{d+1}-1},
		\end{equation}
		where 
		\begin{equation}
		    r\coloneqq \frac{2C}{a}+\sqrt{\abs{1-\left(\frac{2C}{a}\right)^2}},
		\end{equation}
		and $\norm{f}_{L^{\infty}([a,b])}\coloneqq\max_{x\in[a,b]}\abs{f(x)}$.
	\end{lemma}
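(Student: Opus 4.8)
The plan is to reduce the bound to the classical Bernstein-ellipse estimate for Chebyshev interpolation, applied one subinterval at a time, and then to track the constants. On $[a_i,a_{i+1}]=[a_i,5a_i]$ the scheme interpolates at the Chebyshev nodes of that interval, which are the affine images under $x=a_i(2s+3)$ of the standard nodes on $[-1,1]$; hence the error on $[a_i,a_{i+1}]$ equals the $[-1,1]$-error $\norm{g_i-q_i}_{L^{\infty}([-1,1])}$ of the degree-$d$ Chebyshev interpolant $q_i$ of the rescaled function $g_i(s)\coloneqq\arcsin\!\big(C/(a_i(2s+3))\big)$. I would then invoke the standard theorem (cf.~\cite{Deuflhard_2003}): if $g_i$ continues analytically to the closed Bernstein ellipse $E_\rho$ (foci $\pm1$, with $\rho>1$ the sum of its semi-axes) and $\abs{g_i}\le M_i$ there, then $\norm{g_i-q_i}_{L^{\infty}([-1,1])}\le 4M_i\rho^{-d}/(\rho-1)$.

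Next I would determine an admissible $\rho$ and a bound $M_i$. Since $\arcsin$ is analytic on $\mathbb{C}\setminus\big((-\infty,-1]\cup[1,\infty)\big)$, $g_i$ is analytic off the set where $C/(a_i(2s+3))$ is real of modulus $\ge1$, i.e.~off the real segment $\{s: a_i(2s+3)\in[-C,C]\}$; using $0<C\le a\le a_i$ this segment lies in $[-2,-1]$. Thus the only obstruction to enlarging $E_\rho$ is its closest point $s_i^\star=(C/a_i-3)/2\in[-\tfrac{3}{2},-1]$, and because consecutive endpoints have the fixed ratio $5$, $s_i^\star$ is monotone in $a_i$, so the most restrictive interval is the first one, $i=1$; a single $\rho$ and a single bound $M$ then work for all $i$. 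Choosing $\rho$ as large as this geometry allows --- in the main regime one can take $\rho\ge2$ --- one checks that $C/(a_i(2s+3))$ attains its maximal modulus on $E_\rho$ at the leftmost point of the ellipse, where it is real and, for $i=1$ and $\rho=2$, equals $2C/a$; this is the argument that enters $r$. Writing $\arcsin(w)=\arg\zeta-\mathrm{i}\ln\abs{\zeta}$ with $\zeta=\mathrm{i}w+\sqrt{1-w^2}$, so that $\abs{\Re\arcsin(w)}\le\pi/2$, and bounding $\abs{\Im\arcsin(w)}=\abs{\ln\abs{\zeta}}\le\ln r$ over the relevant range of $w$, one obtains $M_i\le\sqrt{\ln^2 r+(\pi/2)^2}$ for every $i$.

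Finally I would assemble: $\epsilon_C=\max_{1\le i\le M}\norm{f-p_f}_{L^{\infty}([a_i,a_{i+1}])}\le 4M\rho^{-d}/(\rho-1)$, and, using $\rho\ge2$, bound $4\rho^{-d}/(\rho-1)\le 8.13/(2^{d+1}-1)$, which yields the claimed inequality. The uniform-over-intervals reduction to $i=1$, the affine covariance of Chebyshev interpolation, and the Bernstein-ellipse theorem are all routine.

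The hard part will be the bound on $M_i$, and in particular the estimate $\sup_{s\in E_\rho}\abs{\Im\arcsin(C/(a_i(2s+3)))}\le\ln r$: one must (i) confirm that $E_\rho$ genuinely stays off the branch cut of $\arcsin$, equivalently that $\abs{C/x}<1$ throughout $E_\rho$; (ii) locate the supremum of $\abs{\arcsin(C/x)}$ on the M\"obius image of the ellipse --- a non-elliptical region --- and argue it is controlled by the value at the real leftmost point, which is where the specific shape of the image curve (rather than merely $\max\abs{C/x}$) must be used; and (iii) handle both regimes $2C/a\le1$ and $2C/a>1$ (the absolute value in the definition of $r$) with the numerical constants coming out as stated.
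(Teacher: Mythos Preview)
Your proposal follows essentially the same route as the paper: affine rescaling of each subinterval $[a_i,5a_i]$ to $[-1,1]$, a Bernstein-ellipse Chebyshev error estimate with the uniform choice $\rho=2$, reduction to the worst (first) interval, and the bound $\lvert\arcsin(w)\rvert\le\sqrt{\ln^2 r+(\pi/2)^2}$ evaluated at the leftmost real point of the ellipse. Two minor differences are worth recording. First, you invoke the textbook Bernstein bound $4M\rho^{-d}/(\rho-1)$, whereas the paper uses a Tadmor-type estimate involving the ellipse perimeter $|\gamma|$, namely $\tfrac{2|\gamma|}{\pi}\,\max_\gamma|\hat f|\big/\big[(\rho^{d+1}-1)(\rho+\rho^{-1}-2)\big]$; at $\rho=2$ both specialize to a constant of the form $8.13/(2^{d+1}-1)$ after a short numerical check, so the discrepancy is cosmetic. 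Second, the paper identifies only the pole at $x=0$ (i.e.\ $s=-3/2$ in your coordinate) as the obstruction to enlarging $\rho$, while you correctly single out the branch cut of $\arcsin$ as the nearer singularity. Your version is the more careful one here, and your caveat (iii) rightly flags that when $2C/a>1$ the branch point $s_1^\star=(C/a-3)/2$ lies inside the $\rho=2$ ellipse; the paper does not revisit the analyticity hypothesis in that regime either, and simply evaluates $\arcsin(2C/a)$ via a closed-form valid for real arguments of modulus greater than one.
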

	
	The complexity of evaluating $M$ piecewise polynomials of degree $d$ on $n_l$ qubits can be achieved in $\mathcal{O}\left(n_l^2 d + M d \log(M)\right)$ Toffoli gates and $(d+1) n_l + \lceil \log(M)\rceil +1$ qubits through a parallel reversible implementation of the classical Horner scheme \cite{Hner2018OptimizingQC, hornerclassical}.
	
	Note that a decomposition into equally-sized intervals of the domain would result either in a large degree of the polynomials, or in a large number of intervals, and, in both cases, would not be efficient.
	
	The following lemma analyzes the impact of the approximation error of the eigenvalue inversion to the error of the whole HHL algorithm which allows us to chose $d$ accordingly.
	Here $\ket{\tilde{x}}$ will denote the normalized state returned by the algorithm assuming exact procedures for everything except for the eigenvalue inversion and $\ket{x}$ denotes the exact normalized solution.
	
	\begin{lemma}\label{condoverall}
		Let $\epsilon_C$ denote the error in approximating $f(x)=\arcsin(C/x)$ for $x\in[a,N_l]$, where $a\in\mathbb{Z}$. 
		Furthermore, let $\epsilon_R>0$, $\lambda_{\min}$ be the smallest eigenvalue of $A$, and $t\in\mathbb{R}$ the Hamiltonian simulation evolution time. 
		If 
		\begin{equation}
		    \epsilon_C = \frac{\epsilon_R}{2(2\kappa^2-\epsilon_R)},\quad a=2^{\frac{2n_l}{3}},\quad C = \frac{N_l t\lambda_{\min}}{2\pi},
		\end{equation}
		 and
		\begin{equation}
		    n_l = 3\left(\left\lfloor \log(\frac{2(2\kappa^2-\epsilon_R)}{\epsilon_R}+1)\right\rfloor+1\right),
		\end{equation}
		then
		\begin{equation}
		\norm{\ket{x}-\ket{\tilde{x}}}\leq\epsilon_R.
		\end{equation}
		Moreover, the probability of successfully inverting the eigenvalues is given by
		\begin{equation}\label{Eq:psuccess}
		    P_\text{success} \geq \left(\frac{1-\epsilon_R}{\kappa}\right)^2.
		\end{equation}

	\end{lemma}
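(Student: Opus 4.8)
The error introduced by this step comes entirely from replacing the ideal conditional-rotation angle $\arcsin(C/\tilde\lambda_j)$ by its piecewise-polynomial approximation $p_f(\tilde\lambda_j)$, so my plan is to compare the two sub-normalized states obtained just after the rotation, before renormalization. The ideal rotation produces the normalized version of $\ket{v}\coloneqq\sum_j\beta_j\,(C/\tilde\lambda_j)\ket{u_j}$, while the algorithm produces the normalized version of $\ket{\tilde v}\coloneqq\sum_j\beta_j\sin\!\big(p_f(\tilde\lambda_j)\big)\ket{u_j}$. The first ingredient is to check that the parameter choices are mutually consistent: with $t\in(0,2\pi/\lambda_{\max}]$, $C=N_lt\lambda_{\min}/(2\pi)$, $a=2^{2n_l/3}$ and the stated $n_l$ (a multiple of $3$, so $a\in\mathbb{Z}$), one has $\tilde\lambda_j=N_lt\lambda_j/(2\pi)\in[a,N_l-1]$ for every $j$, so only the Chebyshev branch $[a,N_l-1]$ of the scheme is ever exercised and Lemma~\ref{lemma:chebychev} applies; moreover $C/\tilde\lambda_j=\lambda_{\min}/\lambda_j\in[1/\kappa,1]$, so all rotation angles lie in $[0,1]$ and $\ket{v}/\norm{\ket{v}}$ is exactly the true solution $\ket{x}$.

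The second ingredient is a perturbation bound on the amplitude vectors. Since $\sin$ is $1$-Lipschitz and $\abs{p_f(\tilde\lambda_j)-\arcsin(C/\tilde\lambda_j)}\le\epsilon_C$ by Lemma~\ref{lemma:chebychev}, we get $\abs{\sin(p_f(\tilde\lambda_j))-C/\tilde\lambda_j}\le\epsilon_C$ for every $j$, and Parseval together with $\sum_j\abs{\beta_j}^2=1$ yields $\norm{\ket{v}-\ket{\tilde v}}\le\epsilon_C$. Combining this with $1/\kappa\le\norm{\ket{v}}\le 1$ (from $1/\kappa\le C/\tilde\lambda_j\le 1$), the reverse triangle inequality $\norm{\ket{\tilde v}}\ge 1/\kappa-\epsilon_C$, and a normalization-stability estimate of the type $\norm{\ket{x}-\ket{\tilde x}}=\norm{\ket{v}/\norm{\ket{v}}-\ket{\tilde v}/\norm{\ket{\tilde v}}}\le 2\norm{\ket{v}-\ket{\tilde v}}/\big(\norm{\ket{v}}\norm{\ket{\tilde v}}\big)$ gives an explicit bound $\norm{\ket{x}-\ket{\tilde x}}\le g(\kappa,\epsilon_C)$. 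Imposing $g(\kappa,\epsilon_C)\le\epsilon_R$ and solving for $\epsilon_C$ produces the closed form $\epsilon_C=\epsilon_R/\big(2(2\kappa^2-\epsilon_R)\big)$; substituting this into $\epsilon_C\le 8.13\sqrt{\ln^2(r)+(\pi/2)^2}/(2^{d+1}-1)$ from Lemma~\ref{lemma:chebychev} fixes the polynomial degree $d$, and the displayed value of $n_l$ is the smallest one for which $a=2^{2n_l/3}$ is large enough that $\tilde\lambda_j\ge a$ holds for all $j$ (this is what forces $a$, and hence $n_l$, to grow like $\log\kappa+\log(1/\epsilon_R)$).

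Finally, the success probability is simply the norm squared of the post-rotation state conditioned on $\ket 1$, i.e. $P_{\text{success}}=\norm{\ket{\tilde v}}^2\ge\big(\norm{\ket{v}}-\norm{\ket{v}-\ket{\tilde v}}\big)^2\ge(1/\kappa-\epsilon_C)^2$, and one checks that the chosen $\epsilon_C$ satisfies $\epsilon_C\le\epsilon_R/\kappa$ (which holds since $4\kappa^2-\kappa\ge 2\epsilon_R$ for $\kappa\ge 1$, $\epsilon_R<1$), turning this into $P_{\text{success}}\ge\big((1-\epsilon_R)/\kappa\big)^2$.

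I expect the crux to be bookkeeping rather than any single hard estimate. Two points need care: first, the normalization-stability step is effectively implicit, because the lower bound on $\norm{\ket{\tilde v}}$ itself depends on $\epsilon_C$, so obtaining the precise closed form $\epsilon_R/(2(2\kappa^2-\epsilon_R))$ requires solving the resulting inequality rather than reading off a constant; second, one must simultaneously verify that the stated $a$, $C$, $t$ and $n_l$ are compatible with $\tilde\lambda_j\in[a,N_l-1]$ and $C/\tilde\lambda_j\in[1/\kappa,1]$ — it is exactly this compatibility that legitimizes comparing the algorithm's output to the exact $\ket{x}$ and that couples the target precision $\epsilon_C$ to the register size $n_l$.
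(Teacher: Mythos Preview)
Your proposal has a genuine gap: you treat the post-QPE eigenvalue labels as deterministic numbers $\tilde\lambda_j=N_lt\lambda_j/(2\pi)$, whereas the lemma (and the paper's proof) must account for the fact that QPE produces a \emph{superposition} $\sum_l\alpha_{l|j}\ket{l}$ for each $j$, with the tail mass $\sum_{|l-\bar\lambda_j|>e}|\alpha_{l|j}|^2$ and the discretization shift $|l-\bar\lambda_j|$ both contributing error. The paper's proof therefore splits $n_l=n_1+n_2+n_3$ with all three equal: $n_1$ controls the QPE tail probability $\epsilon_1=1/(2(e-1))$ with $e=2^{n_1}$, $n_2$ controls the rounding error $\epsilon_2\sim C/(2^{n_2}-1)$ of the nearby labels, and $n_3$ provides the dynamic range so that the true scaled eigenvalues land in $[2^{n_1+n_2},N_l-1]$. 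Only after combining $\epsilon_1$, $\epsilon_2$ and $\epsilon_C$ into a single effective perturbation does the paper pass to a perturbed-inverse argument (writing $A_\epsilon^{-1}\ket b$ and bounding $\|A-A_\epsilon\|$) to reach $\|\ket x-\ket{\tilde x}\|\le 2\epsilon_R\kappa^2/(C-\epsilon_R)$ and solve for the parameters.

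Because you drop the QPE error, your argument cannot explain the specific form $n_l=3(\lfloor\log(2(2\kappa^2-\epsilon_R)/\epsilon_R+1)\rfloor+1)$: the factor $3$ is precisely the three-way split, and the $\log(\kappa^2/\epsilon_R)$ growth of each piece comes from making $\epsilon_1,\epsilon_2=\mathcal{O}(\epsilon_R/\kappa^2)$, not merely from fitting $\bar\lambda_j\ge a$. Your claim that ``only the Chebyshev branch is ever exercised'' is likewise off: labels $l<a$ occur with small but nonzero amplitude after QPE, which is why the paper sets $p(l)=1$ there and absorbs that contribution into $\epsilon_1$. The amplitude-level bound $\|\ket v-\ket{\tilde v}\|\le\epsilon_C$ and the ensuing normalization/success-probability estimates are fine in spirit, but they bound only one of the three error sources; to recover the stated constants you must reinstate the QPE analysis.
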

	
	Lem.~\ref{condoverall} implies that for HHL to achieve an error $\epsilon_R$, assuming everything is exact except for the eigenvalue inversion and the representation of the eigenvalues, we can take 
	\begin{equation}\label{Eq:nlchoice}
	n_l=\mathcal{O}\left(\log\left(\frac{\kappa^2}{\epsilon_R} \right)\right),
	\end{equation}
	and 
	\begin{gather}
	    d = \left\lfloor\log\left(1+\frac{16.23\sqrt{\ln^2(r)+(\pi/2)^2}\kappa(2\kappa-\epsilon_R)}{\epsilon_R}\right)\right\rfloor \\
	    = \mathcal{O}\left(\log\left(n_l\frac{\kappa^2}{\epsilon_R} \right)\right)\nonumber
	    =\mathcal{O}\left(\log\left(\frac{\kappa^2}{\epsilon_R} \right)+\log\log\left(\frac{\kappa^2}{\epsilon_R} \right)\right).
	\end{gather}
	
	Thus, the overall complexity of the eigenvalue inversion to achieve an error of $\epsilon_R$ in the HHL algorithm is given by 
	\begin{equation}\label{Eq:rotation_complexity}
	    \mathcal{O}\left(\log^3(\kappa^2/\epsilon_R)\right).
	\end{equation}
	Leveraging amplitude amplification, Eq.~\ref{Eq:psuccess} also implies that the expected number of times we have to repeat the eigenvalue inversion is $\mathcal{O}\left(\kappa/(1-\epsilon_R)\right)$.
	
	As a final note, it might seem counter intuitive that by decreasing the error we also decrease the expected number of runs.
	This is due to a worst case analysis because the rotation step is an approximation, i.e. instead of $(\dots)\ket{0} + C\ket{x}\ket{1}$ we compute $(\dots)\ket{0} + \left(C\ket{x}+\epsilon\right)\ket{1}$. 
	In the very worst case, the errors are such that the absolute value of each component of $C\ket{x}$ is decreased. 
	This means that the probability of success, which corresponds to the $2$-norm of $C\ket{x} + \epsilon$, is in the worst case decreased by a quantity related to $\epsilon$. 
	However, this is compensated because the smaller $\epsilon_R$, the more costly a single run of the algorithm gets. 

	\section{Observables}\label{obsersec}
	
	In this section, we discuss how to estimate properties of the solution to the linear system based on the quantum state returned by the HHL algorithm.
	Depending on the properties of the linear system and the choice of observable it is possible to estimate the result efficiently or not.
	
	Throughout this section we need to distinguish between three scales:
	the normalized solution $\ket{x} = \sum_{i=0}^{N-1}x_{i}\ket{i}$, the solution to the original problem $\bm{x}$, and the solution to the problem with a normalized right-hand-side $\bm{b}/\norm{\bm{b}}$, given by $\bm{x'} = \bm{x}/\norm{\bm{b}}$.

	In the following, we show how to extract information about $\bm{x'}$ and $\bm{x}$ from the quantum state $\ket{x}$. 
	In particular, we are interested in
	\begin{itemize}
	    \item the solution norm $\norm{\bm{x}}$,
	    \item $F_{B}(\bm{x})\coloneqq\bm{x}^{T}B\bm{x}$, where $B\in\mathbb{R}^{N\times N}$ is a tridiagonal symmetric Toeplitz matrix, and
	    \item the absolute average of the components of the unscaled solution
	    $\abs{\frac{1}{N}\sum_{ i = 0 }^{N-1}\bm{x}_{i}},$ where $\bm{x}_i$ denotes the $i$-th element of $\bm{x}$.
	\end{itemize}
	
	The solution norm can be used to recover the real value of other observables, since we only have access to the normalized solution vector $\bm{x}/\norm{\bm{x}}$.
	The average is an example of a linear output functional and is related to the calculation of mean values, such as the mean temperature of a heat conduction model.
	Finally, $F_{B}(\bm{x})$ is an example of a quadratic output functional, also known as the energy norm.
	
	Evaluating $\bra{x}M\ket{x}$ for a linear operator $M$ as well as evaluating the solution norm have already been mentioned in the original HHL paper \cite{Harrow2009}, and calculating $\bm{u}^{T}\bm{x}$ for real vectors $\bm{u}$ is discussed in \cite{Zhikuan}. 
	However, these are generic results and do not discuss how to actually implement concrete cases, other than $M$ given by a weighted sum of Pauli terms.
	In contrast, we show how to construct the corresponding quantum circuits and analyze when the observables can be evaluated efficiently.
	
	Note that in HHL, since we need to load the normalized right-hand-side $\bm{b}/\norm{\bm{b}}$, we usually do not get access to properties of $\bm{x}$ but of $\bm{x'}$.
	Furthermore, as mentioned in Sec.~\ref{introsec} and pointed out in \cite{aaronsonfine}, state preparation can diminish the quantum advantage unless $\bm{b}$ follows the \emph{uniform assumption}, i.e., $\norm{\bm{b}}/\norm{\bm{b}}_\infty=\Omega(\sqrt{N})$. Then, state preparation can be achieved efficiently, e.g., with the methods presented in Sec.~\ref{statsect}.
	Suppose now that we can efficiently estimate $\norm{\bm{x'}}$ using samples from measuring the quantum state prepared by the HHL algorithm. 
    Due to the sampling error, we need $\mathcal{O}(1/\epsilon^{2})$ samples to estimate $\norm{\bm{x'}}$ with accuracy $\epsilon$. 
	Hence, to estimate $\norm{\bm{x}} = \norm{\bm{b}}\norm{\bm{x'}}$ with the same accuracy $\epsilon$, we need $\norm{\bm{b}}^{2}$-times the number of samples, which means that we need additional assumptions on the asymptotic behaviour of $\norm{\bm{b}}$.
	In the following, we discuss the introduced observables, specifying in each case whether the \emph{uniform assumption} suffices to guarantee an efficient algorithm, and whenever it does not, specifying what additional restrictions we need to impose on $\norm{\bm{b}}$.
	
    To improve readability we will let $\ket{\psi_0}$ refer to the quantum state we obtain if we do not check whether state preparation was successful, and $\ket{\psi_1}$ to the quantum state obtained in the case we run HHL once we know that we have prepared the right state.
    Furthermore, in the last measurement it is always implicit that we check that the eigenvalue register is in the $\ket{0}_{n_l}$ state. We refer to Sec.~\ref{condsect} and \ref{overanalysis} for a detailed analysis of the success probability of this step.
    More precisely, using $\bm{b}= \norm{\bm{b}} \sum_{j=0}^{N-1}\beta_j\ket{u_j}_{n_b}$, we consider the quantum states
    \begin{align}\label{psiobv}
	\ket{\psi_0}\coloneqq
	&\sum _ { j = 0 } ^ { N-1 } \left(\ldots \ket{0} + \frac{c\beta _ { j } \norm{\bm{b}}}{\sqrt{N}\norm{\bm{b}}_\infty}\ket{1}\right)\nonumber \\ 
	&\times\left( \sqrt { 1 - \frac { C^2 } { \tilde{\lambda}_ { j } ^ { 2 } } } \ket{0} + \frac { C } { \tilde{\lambda} _ { j } } \ket{1} \right) \ket{u _ { j } }_{n_{b}}, 
	\end{align}
	and
	\begin{equation}\label{psiobv1}
	\ket{\psi_1}\coloneqq
	\sum _ { j = 0 } ^ { N-1 } 
 	\beta _ { j }\left( \sqrt { 1 - \frac { C^2 } { \tilde{\lambda}_ { j } ^ { 2 } } } \ket{0} + \frac { C } { \tilde{\lambda} _ { j } } \ket{1} \right)  \ket{u _ { j } }_{n_{b}}, 
	\end{equation}
	where $\tilde{\lambda} _ { j }$ is an approximation to the $j^{\text{th}}$ eigenvalue, $\lambda _ { j }$, the first qubit is the ancilla for state preparation, and the second qubit is the ancilla for eigenvalue inversion.
	Depending on the chosen state preparation technique, the scaling factor $c > 0$ is set to $1$ or $<1$, cf. Sec.~\ref{statsect}.
	Note that in our case $C=\lambda_{\min}$, because in Lem.~\ref{condoverall} we set $x=N_l\lambda_j t/2\pi$. Within this section, we will treat it as a constant but will analyze how it affects the overall complexity in Sec.~\ref{overanalysis}.
	
    Then, as stated in the following proposition and proven in \cite{Harrow2009}, $\norm{\bm{x}}$ can be estimated using the probability of seeing $\ket{1}$ when measuring the ancilla qubit from the conditioned rotation of the inverse eigenvalues. Here we differentiate two cases: checking whether state preparation is successful before proceeding with the HHL algorithm, and at the end measuring the eigenvalue inversion ancilla qubit, or running always the full algorithm and at the end calculating the probability of seeing $\ket{11}$ in the state preparation and eigenvalue inversion ancilla qubits.
	All proofs that are not written explicitly in this section can be found in Appendix~\ref{Appendix:observables}.
	
	\begin{proposition}\label{normobv}
		Suppose we first check that state preparation has been successful before running HHL, and let $M_{1}\coloneqq \ket{1}\bra{1}\otimes I_{n_{b}}  $. Then
		\begin{equation}\label{Eq:norm_p1}
		P_{1} \coloneqq \bra{\psi_1}M_{1}^{\dagger}M_{1}\ket{\psi_1} = \frac{C^2\norm{\bm{x}}^{2}}{\norm{\bm{b}}^2}.
		\end{equation}
		On the other hand, suppose we run the full algorithm, i.e., now $M_{11}\coloneqq \ket{11}\bra{11}\otimes I_{n_{b}}  $. Then
		\begin{equation}\label{Eq:norm_p11}
		P_{11} \coloneqq \bra{\psi_0}M_{11}^{\dagger}M_{11}\ket{\psi_0} = \frac{c^{2}C^2\norm{\bm{x}}^{2}}{N\norm{\bm{b}}_\infty^2}.
		\end{equation}
	\end{proposition}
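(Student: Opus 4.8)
The plan is to evaluate each overlap by a direct computation in the eigenbasis $\{\ket{u_j}_{n_b}\}$ of $A$, following the same idea as the norm-estimation observation in \cite{Harrow2009} but carefully carrying the state-preparation scaling. The computation splits into three moves: (i) apply the projector $M_1$ (resp.\ $M_{11}$) to $\ket{\psi_1}$ (resp.\ $\ket{\psi_0}$), retaining only the ancilla branch(es) labelled $\ket{1}$; (ii) use orthonormality of the $\ket{u_j}_{n_b}$ to collapse the resulting double sum to a single sum proportional to $\sum_j|\beta_j|^2 C^2/\tilde\lambda_j^2$; and (iii) identify that sum with $\norm{\bm{x}}^2$ via $\bm{x}=A^{-1}\bm{b}$. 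For step (iii) I would record the conventions explicitly: the normalized right-hand side is $\ket{b}=\sum_j\beta_j\ket{u_j}_{n_b}$ with $\sum_j|\beta_j|^2=1$, hence $\bm{b}=\norm{\bm{b}}\sum_j\beta_j\ket{u_j}_{n_b}$ and $\bm{x}=A^{-1}\bm{b}=\norm{\bm{b}}\sum_j(\beta_j/\lambda_j)\ket{u_j}_{n_b}$, so that $\norm{\bm{x}}^2=\norm{\bm{b}}^2\sum_j|\beta_j|^2/\lambda_j^2$. As this section isolates the observable-estimation step, I would treat QPE and the inversion as exact and set $\tilde\lambda_j=\lambda_j$ in the final identification, the QPE error being accounted for separately in Sec.~\ref{condsect} and Sec.~\ref{overanalysis}.

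Turning to $P_1$: since $M_1=\ket{1}\bra{1}\otimes I_{n_b}$ is an orthogonal projector on the eigenvalue-inversion ancilla, $P_1=\norm{M_1\ket{\psi_1}}^2$, and from Eq.~\ref{psiobv1} we get $M_1\ket{\psi_1}=\sum_j\beta_j(C/\tilde\lambda_j)\ket{1}\ket{u_j}_{n_b}$. Orthonormality then gives $P_1=\sum_j|\beta_j|^2 C^2/\tilde\lambda_j^2$, and substituting the expression for $\norm{\bm{x}}^2$ above yields Eq.~\ref{Eq:norm_p1}. For $P_{11}$ the steps are identical with two ancillas: in $\ket{\psi_0}$ of Eq.~\ref{psiobv} the ``$\ldots$'' branch and the $\ket{0}$ branch of the inversion ancilla are annihilated by $M_{11}=\ket{11}\bra{11}\otimes I_{n_b}$, leaving $M_{11}\ket{\psi_0}=\sum_j \frac{c\,\beta_j\norm{\bm{b}}}{\sqrt{N}\norm{\bm{b}}_\infty}\cdot\frac{C}{\tilde\lambda_j}\ket{11}\ket{u_j}_{n_b}$; taking the squared norm, using orthonormality, and substituting $\sum_j|\beta_j|^2/\lambda_j^2=\norm{\bm{x}}^2/\norm{\bm{b}}^2$ gives Eq.~\ref{Eq:norm_p11}.

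There is no genuinely hard step here; the proposition is essentially a bookkeeping exercise. The two points demanding care are (a) keeping the conventions $\sum_j|\beta_j|^2=1$ and $\bm{b}=\norm{\bm{b}}\sum_j\beta_j\ket{u_j}_{n_b}$ consistent so that the factors of $\norm{\bm{b}}$ cancel as claimed, and (b) correctly carrying the state-preparation prefactor $c\norm{\bm{b}}/(\sqrt{N}\norm{\bm{b}}_\infty)$, which is present in $\ket{\psi_0}$ but absent in $\ket{\psi_1}$, since this is exactly what produces the extra factor $c^2\norm{\bm{b}}^2/(N\norm{\bm{b}}_\infty^2)$ distinguishing $P_{11}$ from $P_1$.
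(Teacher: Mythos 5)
Your proposal is correct and follows essentially the same route as the paper: apply the projector to the explicit states in Eqs.~\ref{psiobv}--\ref{psiobv1}, use orthonormality of the $\ket{u_j}_{n_b}$ to reduce to $\sum_j|\beta_j|^2C^2/\tilde\lambda_j^2$, and identify this with $C^2\norm{\bm{x}}^2/\norm{\bm{b}}^2$ via $\bm{x}=A^{-1}\bm{b}$ (with the extra prefactor $c^2\norm{\bm{b}}^2/(N\norm{\bm{b}}_\infty^2)$ for the $P_{11}$ case). If anything, your bookkeeping of the $\norm{\bm{b}}$ factors and of the identification $\tilde\lambda_j=\lambda_j$ is spelled out more explicitly than in the paper's appendix proof.
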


    As mentioned before, we estimate these quantities up to an accuracy $\epsilon$, which means that rescaling $P_1$ or $P_{11}$ to calculate $\norm{\bm{x}}$ also rescales $\epsilon$. 
    Therefore, to be able to estimate $\norm{\bm{x}}$ with an exponential speedup, we need to assume $\norm{\bm{b}}=\mathcal{O}(1)$ for Eq.~\ref{Eq:norm_p1} and $\norm{\bm{b}}_\infty=\mathcal{O}\left(\frac{1}{\sqrt{N}}\right)$ for Eq.~\ref{Eq:norm_p11}. It can be easily seen that both conditions are equivalent under the \emph{uniform assumption}.
    To achieve a polynomial speedup, these assumptions could be relaxed, as discussed in Sec.~\ref{overanalysis}.

    Now we focus on $F_B(\bm{x})$. Let $p,q\in\mathbb{R}$ be the diagonal and off-diagonals entries of $B$, respectively. Then, 
	\begin{equation}
	F_{B}(\bm{x})\coloneqq\bm{x}^TB\bm{x}=p\sum_{ i = 0 }^{N-1}x_{i}^{2}+2q\sum_{ i=0 }^{N-2}x_{i}x_{i+1},
	\end{equation}
	where $\bm{x}\in\mathbb{ R }^{N}$.
	We now proceed to show how to calculate $F_{B}( \bm{x'} )$ using $n_{b}$ different observables. 
	Each observable is constructed by appending a different set of gates at the end of the HHL circuit and by measuring a different set of qubits. 
	The quantities obtained from each measurement are stored and combined classically to estimate $F_{B}( \bm{x'} )$.
	
	The first observable consists of applying a single Hadamard gate on the last qubit, $\ket{q_{0}}_{n_{b}}$, and measuring the qubit in the computational basis. We denote by $n_{1}(0)$ the probability of observing $\ket{0}$ and by $n_{1}(1)$ the probability of observing  $\ket{1}$.
	
	The $k^{\text{th}}$ observable, $k>1$, is constructed by appending the circuit from Fig.~\ref{hadik} at the end of the algorithm and measuring the last $k$ qubits of the solution register in the computational basis. 
	Similarly as before, let $\ket{\phi_{1,k}}$ the state resulting from applying the observable circuit to $\ket{\psi_1}$ ($\ket{\phi_{0,k}}$ when applied to $\ket{\psi_0}$), and let $n_{k}(0) $ and $n_{k}(1)$ denote the probability observing the states $\ket{0}\ket{1}_{k-1}$ and $\ket{1}\ket{1}_{k-1}$ (last $k-1$ qubits all 1), respectively, when measuring $\ket{q_{k-1}}, \ldots, \ket{q_{0}}$.
	
	\begin{figure}[h!]
		\includegraphics[width=1\linewidth]{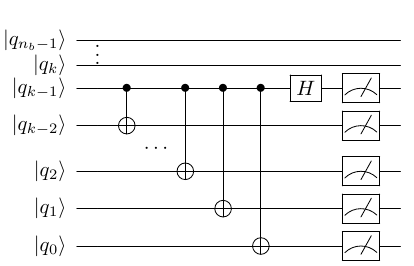}
		\caption{The $k^{\text{th}}$ observable for calculating $F_{B}(\tilde{ x })$.}
		\label{hadik}
	\end{figure}
	
	\begin{proposition}\label{eigerr}
		Suppose we check state preparation has been successful before running HHL, and let $M_{k}(i)\coloneqq \ket{1}\bra{1}\otimes I_{n_{b}-k} \otimes \ket{i}\bra{i}\otimes \ket{1}_{ k-1}\bra{1}_{ k-1}$, for $i=0,1$. Then $n_{k}(0)-n_{k}(1)$ equals
		\begin{align}
		&\bra{\phi_{1,k}}M_{k}(0)^{\dagger}M_{k}(0)\ket{\phi_{1,k}} 
		-\bra{\phi_{1,k}}M_{k}(1)^{\dagger}M_{k}(1)\ket{\phi_{1,k}}\nonumber \\
		&=\frac{2C^2}{\norm{\bm{b}}^2}\sum_{ \substack{i = 0 \\i \equiv 2^{k-1}-1\ (\textrm{\normalfont mod}\ 2^{k}) }}^{N-1}x_{i}x_{i+1}.
		\end{align}	
		On the other hand, suppose we run always the full algorithm, i.e., now $M_{k}(i)\coloneqq \ket{11}\bra{11}\otimes I_{n_{b}-k} \otimes \ket{i}\bra{i}\otimes \ket{1}_{ k-1}\bra{1}_{ k-1}$, for $i=0,1$. Then $n_{k}(0)-n_{k}(1)$ equals
		\begin{align}
		&\bra{\phi_{1,k}}M_{k}(0)^{\dagger}M_{k}(0)\ket{\phi_{1,k}}
		-\bra{\phi_{1,k}}M_{k}(1)^{\dagger}M_{k}(1)\ket{\phi_{1,k}}\nonumber \\
		&=\frac{2c^2C^2}{N\norm{\bm{b}}_\infty^2}\sum_{ \substack{i = 0 \\i \equiv -1\ (\textrm{\normalfont mod}\ 2^{k}) }}^{N-2}x_{i}x_{i+1}.
		\end{align} 
	\end{proposition}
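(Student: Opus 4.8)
The plan is to compute directly the two probabilities $n_k(0)$ and $n_k(1)$ by tracking how the observable circuit of Fig.~\ref{hadik} acts on the relevant solution qubits, and then subtract. First I would recall that, conditioned on the ancilla(s) being in the post-selected state (the single qubit $\ket{1}$ in the ``check state preparation'' case, or $\ket{11}$ in the ``full algorithm'' case), the solution register is, up to the global factor $C/\norm{\bm{b}}$ (resp.\ $cC/(\sqrt{N}\norm{\bm{b}}_\infty)$) coming from the amplitudes in $\ket{\psi_1}$ (resp.\ $\ket{\psi_0}$) in Eqs.~\ref{psiobv}--\ref{psiobv1}, in the (unnormalized) state $\sum_{j} (\beta_j/\tilde\lambda_j)\ket{u_j}_{n_b}$, which is exactly $\ket{x}$ times that factor. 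So it suffices to understand the action of the observable circuit on a generic computational-basis expansion $\sum_i x_i \ket{i}_{n_b}$ and then read off the amplitudes on the measured pattern, since the projectors $M_k(i)$ exactly implement ``ancilla(s) post-selected, qubit $q_{k-1}$ equals $i$, qubits $q_{k-2},\dots,q_0$ all equal $1$.''

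The key combinatorial step is to see that the circuit of Fig.~\ref{hadik}, followed by measuring the last $k$ qubits in the pattern $\ket{i}\ket{1}_{k-1}$, picks out precisely the pair of basis states whose last $k$ bits are $i\underbrace{1\cdots1}_{k-1}$; writing an index $m\in\{0,\dots,N-1\}$ in binary, its last $k$ bits equal $1\cdots1$ exactly when $m\equiv -1\pmod{2^k}$, and then the remaining high-order bits are free. Thus after the Hadamard on $q_{k-1}$ (which creates the $\pm$ combination of the two values $i=0,1$ of that bit) the amplitude on the $\ket{0}\ket{1}_{k-1}$ outcome is, up to the overall factor, $\tfrac{1}{\sqrt2}(x_{m} + x_{m+1})$ summed coherently over the free high bits — wait, more carefully: since the high bits are not measured, each free configuration of the high bits contributes its own term $|x_m \pm x_{m+1}|^2$ and these add incoherently in the probability. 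Hence
\begin{equation}
n_k(0) \;=\; \Gamma \!\!\sum_{\substack{m:\; m\equiv -1\ (\mathrm{mod}\ 2^k)}} \tfrac12\,(x_{m-2^{k-1}} + x_{m})^2 ,\qquad
n_k(1) \;=\; \Gamma \!\!\sum_{\substack{m:\; m\equiv -1\ (\mathrm{mod}\ 2^k)}} \tfrac12\,(x_{m-2^{k-1}} - x_{m})^2 ,
\end{equation}
where $\Gamma = C^2/\norm{\bm{b}}^2$ in the first case and $\Gamma = c^2 C^2/(N\norm{\bm{b}}_\infty^2)$ in the second. Subtracting collapses the squares to the cross term $\tfrac12\big[(x+y)^2-(x-y)^2\big]=2xy$, after which a relabelling $i \coloneqq m-2^{k-1}$ (so the surviving index runs over $i\equiv -1\pmod{2^k}$ up to $i\le N-2$, with $x_{i+1}$ being the partner $x_m$) yields exactly the claimed sum $\tfrac{2\Gamma}{1}\sum_{i\equiv -1\,(2^k),\,i\le N-2} x_i x_{i+1}$. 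Identifying the indexing convention — which bit of the circuit corresponds to $q_{k-1}$ and checking that the pairing is $(x_i,x_{i+1})$ with $i$ the smaller index — is the one place to be careful, and I would verify it against the $k=1$ case, where the statement must reduce to Prop.~\ref{prop88}.

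The main obstacle is bookkeeping rather than conceptual: correctly propagating the CNOT/Hadamard structure of Fig.~\ref{hadik} through the tensor factors so that the measured pattern $\ket{i}\ket{1}_{k-1}$ on the \emph{physical} wires corresponds to the arithmetic congruence $i\equiv -1 \pmod{2^k}$ on the \emph{index}, and making sure the incoherent sum over the unmeasured high-order qubits is handled correctly (no spurious cross terms between distinct high-bit sectors, because those basis states are orthogonal and the measurement is in the computational basis on those wires too — here one uses that $M_k(i)$ as written does \emph{not} project the high-order qubits, so one must either include an implicit computational-basis measurement there or, equivalently, note that $\sum$ over high bits of $\ket{\cdot}\bra{\cdot}$ is the identity and the probability is the expectation of $M_k(i)^\dagger M_k(i)$, which is diagonal in that basis). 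Once the index arithmetic is pinned down, everything else is the elementary identity $(x+y)^2-(x-y)^2 = 4xy$ together with the post-selection factors already recorded in Eqs.~\ref{psiobv}--\ref{psiobv1}, exactly as in the proof of Prop.~\ref{prop88}.
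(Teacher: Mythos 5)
Your setup (conditioning on the ancilla(s), pulling out the factor $C/\norm{\bm{b}}$ resp.\ $cC/(\sqrt{N}\norm{\bm{b}}_\infty)$, and summing incoherently over the unmeasured high-order qubits) matches the paper, but the combinatorial core of your argument is wrong, and it is exactly the part the paper's proof leans on. You treat the circuit of Fig.~\ref{hadik} as if only the final Hadamard on $q_{k-1}$ mattered, so that the outcome pattern $\ket{i}\ket{1}_{k-1}$ selects the two basis states whose last $k$ bits are $i\,1\cdots 1$. Those two states have indices differing by $2^{k-1}$ (same high bits, same lower $k-1$ bits), so the cross terms your calculation actually produces are of the form $x_{h2^k+2^{k-1}-1}\,x_{h2^k+2^{k}-1}$, i.e.\ products of components $2^{k-1}$ apart — not nearest-neighbour products. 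Your concluding ``relabelling $i\coloneqq m-2^{k-1}$'' does not repair this: with that substitution the partner of $x_i$ is $x_{i+2^{k-1}}$, not $x_{i+1}$, and $i\equiv 2^{k-1}-1 \pmod{2^k}$ rather than $-1$. The two readings coincide only for $k=1$, which is why your proposed sanity check against Prop.~\ref{prop88} cannot detect the error; for every $k\geq 2$ your derivation establishes a different identity than the one claimed.

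The missing ingredient is the CNOT structure of Fig.~\ref{hadik}, which the paper's proof invokes as its first step: the CNOTs re-pair the computational basis states so that the post-circuit event ``$q_0=\cdots=q_{k-2}=1$'' can originate only from the two pre-circuit bit patterns ($q_{k-1}=0$ with the lower $k-1$ bits all $1$) and ($q_{k-1}=1$ with the lower $k-1$ bits all $0$), i.e.\ precisely from a pair of \emph{consecutive} components of $x$ sharing the same unmeasured high bits. Only after this relabelling does the Hadamard on $q_{k-1}$ interfere $x_i$ with $x_{i+1}$, and the rest of the computation then proceeds as in Prop.~\ref{prop88}, which is the paper's route. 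Without tracking those CNOTs, no measurement of the lowest $k$ qubits alone can yield the nearest-neighbour cross terms at all (the two members of a consecutive pair would otherwise be orthogonal on an unmeasured or un-interfered qubit), so the gap is not a bookkeeping convention you can fix by re-indexing — you must propagate the full circuit through the basis states before applying the $(x+y)^2-(x-y)^2=4xy$ step.
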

	Since $p$ and $q$ are known parameters, and $\sum_{ i = 0 }^{N-1}\abs{x_{i}}^{2}/\norm{\bm{b}}^2=\norm{ \bm{x'} }^{2}$ can be calculated from Prop.~\ref{normobv}, we have shown a method to compute $F_{B}(\bm{x'} )$ requiring $n_{b}=\log(N)$ different observables.
	As before, if we are interested in $F_{B}(\bm{x} )$, we need to add the extra assumption $\norm{\bm{b}}=\mathcal{O}(1)$, or $\norm{\bm{b}}_\infty=\mathcal{O}\left(\frac{1}{\sqrt{N}}\right)$ to the \emph{uniform assumption}.
	
	Finally, the solution average can be calculated by appending the circuit shown in Fig.~\ref{hadmeas} and measuring each qubit from the solution register in the computational basis in addition to measuring the ancilla qubit.
	
	\begin{figure}[h!]
		\includegraphics[width=0.8\linewidth]{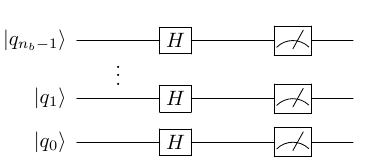}
		\caption{Hadamards to compute the average of the solution.}
		\label{hadmeas}
	\end{figure}
	
	\begin{proposition}\label{Prop:obv2}
		Suppose we check state preparation has been successful before running HHL, and let  $M_{1,0}\coloneqq \ket{1}\bra{1}\otimes\ket{0}_{n_{b}}\bra{0}_{n_{b}}  $ and $\ket{\phi_1}\coloneqq\left(I\otimes H^{\otimes n_{b}}\right)\ket{\psi_1}$. 
		Then
		\begin{equation}
		\bra{\phi_1}M_{1,0}^{\dagger}M_{1,0}\ket{\phi_1}=\abs{\frac{C}{\sqrt{N}\norm{\bm{b}}}\sum_{ i = 0 }^{N-1}x_{i}}^{2}.
		\end{equation}
		On the other hand, suppose we run always the full algorithm, i.e., now $M_{1,0}\coloneqq \ket{11}\bra{11}\otimes\ket{0}_{n_{b}}\bra{0}_{n_{b}}$ and $\ket{\phi_0}\coloneqq\left(I_2\otimes H^{\otimes n_{b}}\right)\ket{\psi_0}$.
		Then
		\begin{equation}
		\bra{\phi_0}M_{1,0}^{\dagger}M_{1,0}\ket{\phi_0}=\abs{\frac{cC}{N\norm{\bm{b}}_\infty}\sum_{ i = 0 }^{N-1}x_{i}}^{2}.
		\end{equation}
	\end{proposition}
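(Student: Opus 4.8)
The plan is to trace the action of the Hadamard layer $H^{\otimes n_b}$ on the solution register and pick out the computational basis state $\ket{0}_{n_b}$, exactly as in the proofs of Prop.~\ref{prop88} and Prop.~\ref{eigerr}, but now with the Hadamard applied to every qubit rather than to the last $k$. First I would recall that the relevant part of $\ket{\psi_1}$ (ignoring the state-preparation ancilla, which is fixed to $\ket{1}$ in this branch) is, after the eigenvalue inversion, the amplitude $C/\norm{\bm{b}}$ times the (unnormalized) solution vector encoded in the $n_b$-qubit register; concretely, projecting onto $\ket{1}$ of the inversion ancilla leaves $\frac{C}{\norm{\bm{b}}}\sum_{i=0}^{N-1} x_i \ket{i}_{n_b}$ in that register (this is just Eq.~\ref{Eq:norm_p1} before taking the squared norm, recalling $\ket{x}=\sum_i x_i\ket{i}/\norm{\bm{x}}$ and the rescaling between $\bm{x}$, $\bm{x'}$).

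Next I would use the standard identity $H^{\otimes n_b}\ket{i}_{n_b} = \frac{1}{\sqrt{N}}\sum_{y=0}^{N-1}(-1)^{i\cdot y}\ket{y}_{n_b}$, so the coefficient of $\ket{0}_{n_b}$ in $H^{\otimes n_b}\ket{i}_{n_b}$ is simply $1/\sqrt{N}$ for every $i$. Hence the amplitude of $\ket{1}\otimes\ket{0}_{n_b}$ in $\ket{\phi_1}=(I\otimes H^{\otimes n_b})\ket{\psi_1}$ equals $\frac{C}{\sqrt{N}\norm{\bm{b}}}\sum_{i=0}^{N-1} x_i$. Taking the modulus squared of this amplitude gives $\bra{\phi_1}M_{1,0}^\dagger M_{1,0}\ket{\phi_1} = \bigl|\frac{C}{\sqrt{N}\norm{\bm{b}}}\sum_{i=0}^{N-1}x_i\bigr|^2$, which is the first claim. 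The second claim is obtained identically, except that one starts from $\ket{\psi_0}$ of Eq.~\ref{psiobv}, where the state-preparation ancilla carries amplitude $c\beta_j\norm{\bm{b}}/(\sqrt{N}\norm{\bm{b}}_\infty)$ instead of $\beta_j$; projecting onto $\ket{11}$ of the two ancillas replaces the prefactor $C/\norm{\bm{b}}$ by $cC/(\sqrt{N}\norm{\bm{b}}_\infty)$, and the same Hadamard computation then yields amplitude $\frac{cC}{N\norm{\bm{b}}_\infty}\sum_i x_i$ for $\ket{11}\otimes\ket{0}_{n_b}$, whose squared modulus is the stated expression.

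There is essentially no hard step here — the argument is a routine bookkeeping of amplitudes, the only points requiring minor care being (i) the consistent treatment of the $\tilde\lambda_j$ versus $\lambda_j$ (since we work with the ideal post-measurement state in which the eigenvalue register has already been uncomputed to $\ket{0}_{n_l}$, the inversion ancilla carries $C/\tilde\lambda_j$ and the $\sum_j \beta_j (C/\tilde\lambda_j)\ket{u_j}$ reassembles, up to the factor $C/\norm{\bm{b}}$, into $\sum_i x_i\ket{i}$ only when $\tilde\lambda_j=\lambda_j$; for this proposition we assume exact eigenvalues as elsewhere in the observables section), and (ii) keeping track of which ancilla sits in which tensor slot so that $M_{1,0}$ projects onto the correct string. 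I would therefore present it compactly, referring back to the computations in the proofs of Prop.~\ref{normobv} and Prop.~\ref{prop88} for the parts that are verbatim repetitions, and spell out only the new Hadamard-on-all-qubits step in Appendix~\ref{Appendix:observables}.
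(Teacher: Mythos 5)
Your proposal is correct and follows essentially the same route as the paper's proof: identify the post-selected solution-register state as $\tfrac{C}{\norm{\bm{b}}}\sum_i x_i\ket{i}_{n_b}$ (resp.\ with prefactor $\tfrac{cC}{\sqrt{N}\norm{\bm{b}}_\infty}$ for $\ket{\psi_0}$), apply the Hadamard identity so each basis state contributes $1/\sqrt{N}$ to the $\ket{0}_{n_b}$ amplitude, and square the resulting amplitude. The paper's Appendix~\ref{Sec:Prop:obv2} does exactly this bookkeeping via the bitwise dot-product expansion of $H^{\otimes n_b}$, so no further changes are needed.
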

	
	Therefore, to be able to estimate the average efficiently we need to assume $\norm{\bm{b}}_\infty^2=\mathcal{O}\left(1\right)$ in addition to the \emph{uniform assumption}.
	As pointed out before, under the \emph{uniform assumption}, this is equivalent to assume $\norm{\bm{b}}=\mathcal{O}(\sqrt{N})$.
	Note that this is a significantly weaker assumption than required for $\norm{\bm{x}}$ or $F_B(\bm{x})$.
	
	Finally, as shown in Appendix~\ref{Appendix:extrap_properties}, the Richardson extrapolation technique can also be applied directly to the observables.
	That is, we can calculate the value of an observable of interest, $O$, from the intermediate values for such observable, $O_j$ for $j=1,\dots,l$, obtained from HHL approximating $e^{iAtk}$ with $V(t,m_j(k))^k$ as
	\begin{equation}
	    O = \sum_{j=1}^l a_j O_j.
	\end{equation}
	In the following, we analyze the (expected) use of resources of the complete algorithm for each of the observables presented in this section and conclude under which assumptions a quantum advantage is possible.

	\section{Overall analysis}\label{overanalysis}
	
	In this section we provide an overall error and complexity analysis of the algorithm.
	We derive optimal parameters for $t$ and $n_l$ as well as the minimum required accuracy for each block of the algorithm to achieve an overall target accuracy.
	The section concludes with an investigation of the cases where a quantum advantage can be achieved depending on the norm of the right hand side and the considered observable.
	
	First, we derive an expression for $t$. 
	For QPE to be accurate, we need to set $t$ such that $\abs{\frac{\lambda_j t}{2\pi}}\in[0,1)$.
	The eigenvalues of tridiagonal Toeplitz matrices are given by \cite{tridieigs}
	\begin{equation}
	\lambda _ { j } = a - 2 b \cos \left( \frac { j \pi } { N + 1 } \right),
	\end{equation}
	where $N$ denotes the dimension.
	Therefore, we can compute $\lambda_{\max}$, which is bounded independently of $N$ by $|\lambda_{\max}| \leq a + 2|b|$, and set
	\begin{equation}\label{time}
	t = 2\pi\cdot\frac {   2 ^ { n_{l} } - 1 } { 2 ^ { n_{l} }\lambda _ { \max } }<\frac{2\pi }{\lambda _ { \max }}.
	\end{equation}

	The remaining parameters to determine are the accuracy of each block of the algorithm, i.e., $\epsilon_S$ for state preparation, $\epsilon_A$ for Hamiltonian simulation, and $\epsilon_R$ for eigenvalue inversion. To do so, we will need the following theorem, which results from a straight application of triangle inequalities and the results obtained in Lemmas~\ref{psperr}, \ref{hamoverall}, and \ref{condoverall}.
	\begin{theorem}\label{hhlerr}
		Let $\ket{x}$ denote the exact solution, and $\ket{\tilde{x}}$ the solution returned by the HHL algorithm using the approximations described in Sections~\ref{statsect}-\ref{condsect}. Then
		\begin{equation}
		\norm{\ket{x}-\ket{\tilde{x}}} \leq \epsilon_S+ 2\epsilon_{ A } + \epsilon_{ R } + \mathcal{ O }(\epsilon_A ^2)
		\end{equation}
		for $\epsilon_{ A } \rightarrow 0$.
	\end{theorem}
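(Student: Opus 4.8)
The plan is to decompose the total error by a telescoping sum over the three approximate blocks and then bound each increment by the corresponding lemma from Sections~\ref{statsect}--\ref{condsect}, exactly as the statement advertises.

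Concretely, I would introduce intermediate states interpolating between the exact run and the fully approximate run: let $\ket{x}$ be the exact normalized solution, let $\ket{x_1}$ be the normalized output of HHL in which only state preparation is replaced by its polynomial approximation while QPE/Hamiltonian simulation and the eigenvalue inversion are exact, let $\ket{x_2}$ be the output in which state preparation \emph{and} Hamiltonian simulation are approximate but the conditional rotation is exact, and let $\ket{\tilde{x}}$ be the output of the fully approximate algorithm. The triangle inequality then gives
\begin{equation}
\norm{\ket{x}-\ket{\tilde{x}}} \leq \norm{\ket{x}-\ket{x_1}} + \norm{\ket{x_1}-\ket{x_2}} + \norm{\ket{x_2}-\ket{\tilde{x}}}.
\end{equation}
The first term is bounded by $\epsilon_S$ directly from Lemma~\ref{psperr} (absorbing the $\mathcal{O}(c^4)$ remainder). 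For the remaining two terms I would use the observation that the error estimates of Lemmas~\ref{hamoverall} and \ref{condoverall} do not actually exploit exactness of the \emph{upstream} steps: the proof of Lemma~\ref{hamoverall} bounds the output deviation in terms of the operator-norm error $\epsilon_A$ of $e^{\mathrm{i}At}$ propagated through the norm-preserving QPE and inverse-QPE unitaries and the subsequent post-selection/renormalization, and this propagation is insensitive to which normalized state enters the block. Hence the same bound $2\epsilon_A+\mathcal{O}(\epsilon_A^2)$ holds with $\ket{b}$ replaced by the approximate input $\ket{\tilde{b}}$, yielding $\norm{\ket{x_1}-\ket{x_2}}\leq 2\epsilon_A+\mathcal{O}(\epsilon_A^2)$; likewise Lemma~\ref{condoverall} applied to the (now approximate) state fed into the conditional rotation gives $\norm{\ket{x_2}-\ket{\tilde{x}}}\leq\epsilon_R$. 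Summing the three contributions gives $\norm{\ket{x}-\ket{\tilde{x}}}\leq \epsilon_S+2\epsilon_A+\epsilon_R+\mathcal{O}(\epsilon_A^2)$.

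The only genuine subtlety — and the step I expect to need the most care — is justifying that these per-block errors combine additively rather than multiplicatively, i.e.\ that each block, regarded as a map on normalized states including the non-linear measure-and-renormalize step, is Lipschitz with a constant that is already fully accounted for inside the respective lemma (this is where the factors of $\kappa$ originate). Since a small input perturbation $\delta$ together with a small block error $\eta$ produces an output perturbation of at most $\delta+\eta$ up to terms of order $\delta\eta$, all cross terms are absorbed into the $\mathcal{O}(\epsilon_A^2)$ remainder and no new $\kappa$-dependent prefactor appears beyond those already hidden in $\epsilon_S$, $\epsilon_A$, $\epsilon_R$. With that stability observation in place, the result follows from a plain triangle inequality as claimed.
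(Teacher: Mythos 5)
Your proof follows essentially the same route as the paper, which simply notes that the theorem ``results from a straight application of triangle inequalities and the results obtained in Lemmas~\ref{psperr}, \ref{hamoverall}, and \ref{condoverall}''; your telescoping decomposition with hybrid intermediate states is exactly that argument, and your extra care in observing that the operator-norm bounds of Lemmas~\ref{hamoverall} and \ref{condoverall} are insensitive to the upstream input state (so cross terms are higher order) is a valid elaboration of what the paper leaves implicit.
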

	
	Therefore, to achieve an overall error $\epsilon$, it is enough to choose, e.g.,
	\begin{equation}
	    \epsilon_s=\frac{\epsilon}{3},\quad \epsilon_A = \frac{\epsilon}{6},\quad\text{and}\quad \epsilon_R = \frac{\epsilon}{3}.
	\end{equation}

	We now proceed to analyze the total complexity of the algorithm.
	The complexity of state preparation and the conditional rotation are given in Eq.~\ref{Eq:state_prep_complexity} and Eq.~\ref{Eq:rotation_complexity}, respectively.
	The complexity arising from QPE and Hamiltonian simulation with Richardson extrapolation, as given in Eq.~\ref{Eq:qpe_parallel_complexity} needs to be reformulated by substituting the value obtained for $n_l$ (Eq.~\ref{Eq:nlchoice}), yielding
	\begin{equation}
	\mathcal{ O }\left(\log(\kappa^2/\epsilon)\log(N)/\sqrt{\epsilon}\right),
	\end{equation}
	for running the algorithms in parallel.
	A summary of the complexity analysis can be found in Table~\ref{totalgates}, were the expected number of gates uses Lem.~\ref{condoverall}, i.e., that the probability of success is at least $\left(\frac{1-\epsilon}{\kappa}\right)^2$, and amplitude amplification.

	\begin{table*}[ht]
		\caption{Total gate count for the HHL algorithm for a target accuracy $\epsilon$ and different observables. 
		The 'expected runs' row under complexity reflects the probabilistic nature of state preparation and eigenvalue inversion, and leverages amplitude amplification.
		The 'scaling' column contains the scaling, $S$, of an obtained observable $F(\bm{x^\prime})$ with respect to the exact value $F(\bm{x})$, i.e., $S F(\bm{x^\prime}) = F(\bm{x})$.
		Thus, $S$ denotes the factor that amplifies the sampling error and implies the restrictions on $\bm{b}$ discussed in the main text.
		The $\lambda_{\min}$ factor comes from setting $C=\lambda_{\min}$, where $C$ is the constant defined for the conditional rotation step (Eq.~\ref{psiobv1}).
		For the considered tridiagonal Toeplitz matrices, we have $\lambda_{\min} \sim  1/\kappa$.
		The last row contains the number of circuits required to be run to calculate any of the observables, where the $1/\sqrt{\epsilon}$ factor comes from the extrapolation scheme and $1/\epsilon^2$ from the sampling complexity.
		All values correspond to a sequential computation of the different full HHL algorithms, i.e. without checking whether state preparation was successful before proceeding further so that amplitude amplification can be applied.
		Other settings can be derived similarly using the results given in this paper.}
		\resizebox{\textwidth}{!}{%
		\begin{tabular} {cccc} 
			\toprule
			Observable & \multicolumn{2}{c}{Complexity} & Scaling $S$\\
			\midrule
			
			\multirow{2}{*}{Norm} & $1$ circuit & $\mathcal{O}\left(\log^{d}(N) + \log(\kappa^2/\epsilon)\log(N)\sqrt{\epsilon} + \log^3(\frac{\kappa}{\epsilon})\right)$ & \multirow{2}{*}{$\frac{\norm{\bm{b}}}{\lambda_{\min}}$}\\
			\cmidrule{2-3}
			& expected runs & $1$ & \\
			
			\midrule
			
			\multirow{2}{*}{$F_B(\bm{x})$} & $1$ circuit & $\mathcal{O} \left(\log^{d}(N) + \log(\kappa^2/\epsilon)\log(N)\sqrt{\epsilon} + \log^3(\frac{\kappa}{\epsilon})+\log(N)\right)$ & \multirow{2}{*}{$\frac{\norm{\bm{b}}^2}{\lambda_{\min}^2}$} \\
			\cmidrule{2-3}
			& expected runs & $\mathcal{O}\left(\frac{\kappa}{1-\epsilon}\cdot\frac{\sqrt{N\kappa}\norm{\bm{b}}_\infty }{\sqrt{\epsilon}\norm{\bm{b}}}\right)$ & \\
			
			\midrule
			
			\multirow{2}{*}{Average} & $1$ circuit & same as norm & \multirow{2}{*}{$\frac{\norm{\bm{b}}}{\lambda_{\min}\sqrt{N}}$} \\
			\cmidrule{2-3}
			& expected runs & $\mathcal{O}\left(\frac{\kappa}{1-\epsilon}\cdot\frac{\sqrt{N\kappa}\norm{\bm{b}}_\infty }{\sqrt{\epsilon}\norm{\bm{b}}}\right)$ & \\
			
			\bottomrule
			
			\multirow{2}{*}{All} & number of circuits & \multirow{2}{*}{$\mathcal{O}\left(\frac{1}{\sqrt{\epsilon}\epsilon^2}\right)$} & \\
			& (extrapolation and sampling) & & \\
			
			\bottomrule
		\end{tabular}}
		\label{totalgates}
	\end{table*}
	
	To conclude this section, we analyze in which cases it is possible to obtain a quantum advantage. 
	The limiting factor is the norm of the right-hand-side, $\ket{b}$, as was already pointed out in \cite{aaronsonfine}. More precisely, we consider the following points:
	\begin{enumerate}[(i)]
		\item If state preparation is probabilistic, then the probability of success is $\Omega\left(\norm{\bm{b}}/(\sqrt{N}\norm{\bm{b}}_\infty)\right)$. Therefore, the expected number of runs to successfully prepare the state is $\mathcal{O}\left(\sqrt{N}\norm{\bm{b}}_\infty/\norm{\bm{b}}\right)$, c.f.~Lem.~\ref{psperr}.
        \label{item:quantum_advantage_state_prep}
		\item Since $\bm{b}$ is encoded in a quantum state, HHL solves the problem $A\bm{x^{\prime}}=\bm{b}/\norm{\bm{b}}$ and estimates $F(x')$, for some linear observable $F$, with accuracy $\epsilon$. 
		The solution can then usually be recovered by $F(\bm{x}) = \norm{\bm{b}}\left(F(\bm{x}^{\prime}) + \epsilon\right)$.
		\label{item:quantum_advantage_sampling}
	\end{enumerate}
	Under the \emph{uniform assumption}, (\ref{item:quantum_advantage_state_prep}) does not pose a problem.
	Regarding (\ref{item:quantum_advantage_sampling}), even though we have reduced the circuit complexity to $1/\sqrt{\epsilon}$, the results are calculated from probabilities of measuring one or more ancilla qubits in particular states, which are estimated by sampling.
	The number of samples (runs of HHL) needed to achieve an estimation error $\epsilon$ scales as $\mathcal{O}(1/\epsilon^2)$.
	If the result is rescaled by $\norm{\bm{b}}$, so is the error $\epsilon$.
	Hence, whenever we are interested in $\norm{\bm{x}}$ or $F_B(\bm{x})$, i.e., when we need to multiply the result by $\norm{\bm{b}}$, the exponential speedup only holds if $\norm{\bm{b}}=\mathcal{O}(1)$, or equivalently, under the \emph{uniform assumption}, $\norm{\bm{b}}_{\infty}=\mathcal{O}(1/\sqrt{N})$.
	However, when we are interested in the absolute average, these assumptions can be relaxed.
	In this case it is still possible to obtain an exponential speedup as long as $\norm{\bm{b}}=\mathcal{O}(\sqrt{N})$, or equivalently, under the \emph{uniform assumption}, $\norm{\bm{b}}_{\infty}=\mathcal{O}(1)$.
	
	Note that we were focussing on achieving an exponential speedup.
	The assumptions mentioned here can be relaxed further if we are targeting a polynomial speedup.
	Furthermore, note that only the sampling, i.e., the repeated evaluation of the quantum circuit, poses a potential problem, while the complexity of the circuits scale as $polylog(1/\epsilon)$.
	This implies that the algorithm can be parallelized extensively, which may be another possibility to achieve an advantage.
	
	\section{Results}\label{ressect}
	
	In this section we first show the results for a $8\times 8$ system of linear equations on a quantum simulator. Afterwards we show the values obtained on a real quantum device for the solution norm of a $4 \times 4$ system and the solution average of a $2\times 2$ system.
	
	\subsection{Quantum Simulator}
	For the quantum simulator, we consider the problem with parameters $n_{b}=3$, $a=2$, $b=-1/2$ and $\epsilon = 2^{-5} \approx 0.03$, which gives $c=0.1$ for the state preparation algorithm. We choose the initial state to be specified by the polynomial $p(x)=x^{3}-x^{2}+x+1$.
	The simulations, implemented with Qiskit \cite{Qiskit}, require $14$ qubits and evaluate the exact quantum states.
	
	First we run the state preparation algorithm alone and afterwards the full HHL algorithm.
	In the first case, the vector directly obtained from Qiskit's state vector simulation is $c\ket{b}/\sqrt{N}$, where the $1/\sqrt{N}$ factor comes from the initial Hadamard gates used to obtain a uniform superposition of all basis states. Since these constants are known, $\ket{b}$ can be recovered, as it is shown in Fig.~\ref{statefig}(a), where the vector obtained from the simulation is plotted against $p(i/7),0\leq i\leq 7$. From the Taylor expansion of $\sin(cp(i/7))/c$, we can calculate the error in approximating $p(i/7)$, plotted in Fig.~\ref{statefig}(b) against the error obtained from the simulation.
	It can be nicely seen that the target accuracy $\epsilon$ is achieved.

	\begin{figure}
		\centering
		\includegraphics[width=1\linewidth,height=\linewidth]{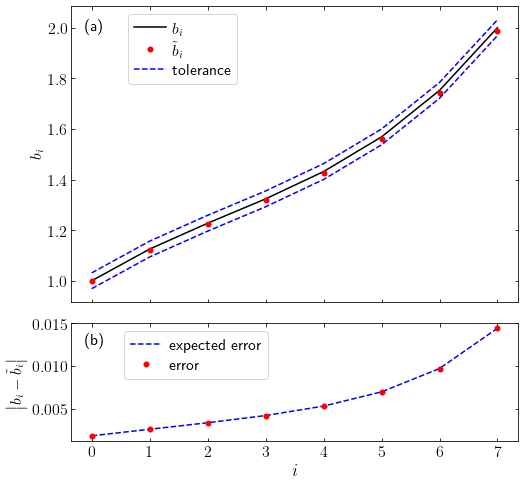}
		
		\caption{State preparation simulation. The $x$-axis denotes the basis states. 
		(a) The plot shows the vector obtained from the state vector simulation with Qiskit, and $p(x)$ refers to $p(i/7)$.
		(b) The expected error was calculated from the Taylor expansion of $\sine(cp(x))/c$. }
		\label{statefig}
	\end{figure}
	
	Similarly, the results from the state vector simulator for the complete HHL algorithm implementation are rescaled by $c/\sqrt{N}$. The solution vectors with their respective norms obtained from first running the full algorithm, and then running the HHL with Richardson extrapolation for the same tolerance, are shown in Fig.~\ref{solutionfig}(c)-(d). Although the results obtained for smaller step sizes are outside the tolerated error bounds, depicted as dashed lines in the plots, the extrapolated solution lies within these lines. The reason the vector obtained with the full algorithm is closer to the analytic solution is because the theoretical error bounds calculated for Richardson extrapolation are tighter, and we try to use the least resources that allow to achieve a given accuracy.
	
	\begin{figure}
		\centering
		
		\includegraphics[width=1\linewidth,height=0.99\linewidth]{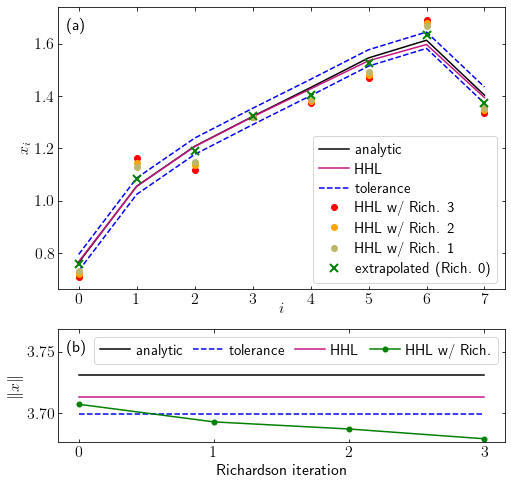}
		
		\caption{Result of complete HHL algorithm.
		(a) Full state vector obtained from simulator. The $x$-axis denotes the basis states. 'HHL' denotes the solution obtained running the full algorithm with Trotter exponent $m(1)=5$ and 'Rich. $3$', 'Rich. $2$', 'Rich. $1$' correspond to the solutions obtained with $m(1)=2,3,4$, respectively, while 'Rich. $0$' denotes the extrapolated solution. 
		(b) Norm of the solution vectors. The $x$-axis corresponds to the 'HHL w/ Rich.' vectors, and $0$ is the extrapolated solution.}
		\label{solutionfig}
		
	\end{figure}

	\subsection{Quantum Hardware}
	
	For the quantum hardware, we consider two problems that we run both on $4$ qubits: 
	a $4\times 4$ system, i.e., $n_b=2$, where we evaluate the solution norm, and a $2\times 2$ system, i.e., $n_b=1$, where we evaluate the absolute solution average. 
	We run tests for $10$ different initial states, each prepared by $R_{y}$ rotations with an angle $\theta$ applied to the initial state qubit(s).
	In both cases we use the parameters $n_{l}=2$, $a=1$, $b=-1/3$, and number of shots $M = 8192$.
	For the smaller system the fourth qubit is used for the eigenvalue inversion, while for the larger system we directly measure the eigenvalues, and perform their inversion classically.
	More precisely, for the latter, we measure $\ket{\lambda_j}_{n_l}$ in the computational basis, which gives us estimates of $\lambda_j$ as well as the corresponding occurrence probabilities that are equal to $|\beta_j|^2$.
	This allows us to classically compute 
	\begin{equation}
	    \sum_{j=0}^3 \frac{|\beta_j|^2}{\lambda_j^2},
	\end{equation}
	which equals $\norm{\bm{x}}$, since we have $\norm{\bm{b}} = 1$ by construction.
	Note that this approach is only suitable for a small demonstration, but will not scale to larger problem instances as it would require an exponentially increasing number of shots.
	
	Note that in both cases we do not need to uncompute the QPE.
	For the larger system we actually just run a QPE without any further steps and estimate the resulting eigenvalues.
	For the smaller system, we can find $t$ for the Hamiltonian evolution such that the QPE will be exact as it only has two eigenvalues.
	Here, the omitted inverse QPE results in a solution that is scaled by $1/\sqrt{2^{n_l}}$ due to the Hadamards.

	All the experiments are evaluated on the \emph{ibmq$\_\!$boeblingen} $20$-qubit backend provided by IBM Quantum. 
	Both circuits require $4$ fully connected qubits, however, the connectivity of the device does not allow to choose such arrangement.
	Therefore, in both cases, one SWAP gate is required to implement the circuits.
	The connectivity of the quantum device as well as the corresponding circuits are provided in Fig.~\ref{connectivity}-\ref{realcirc2} in Appendix~\ref{realhardappendix}.
	
	The experiments are run on a noisy device with gate and readout errors.
	Qiskit \cite{Qiskit} allows to mitigate the readout errors by individually preparing and measuring all basis states, a detailed treatment on the topic can be found in \cite{dewes, bravyi2020mitigating}. 
	To deal with the gate errors, we use another Richardson extrapolation to extract the result at the zero noise limit as it was done in \cite{cloudqcomp,OptionPricing, vazquez2020efficient}. 
	We focus the error mitigation on the CNOT gates since they have an average randomized benchmarking fidelity of $97.8$\% compared to $99.7$\% for the single-qubit gates.
	More precisely, we run each experiment three times: First the original circuit, and then substituting each CNOT gate by $3$ and $5$ CNOT gates, respectively.
	Mathematically, the three circuits have the same effect, however, on real hardware this systematically amplifies the CNOT gate error, allowing a zero noise extrapolation.
	The simulated results can be nicely recovered using the real quantum hardware.
	
    Fig.~\ref{realplot}(a) shows the results obtained for calculating the norm of the solution vector, i.e., the probability of measuring $\ket{1}$ in the conditional rotation qubit. Fig.~\ref{realplot}(b) shows the absolute solution average obtained by applying a Hadamard gate to the solution qubit, and estimating the probability of $\ket{0}$ in the conditioned rotation qubit. 
	
	\begin{figure}
		\centering
		\includegraphics[width=1\linewidth]{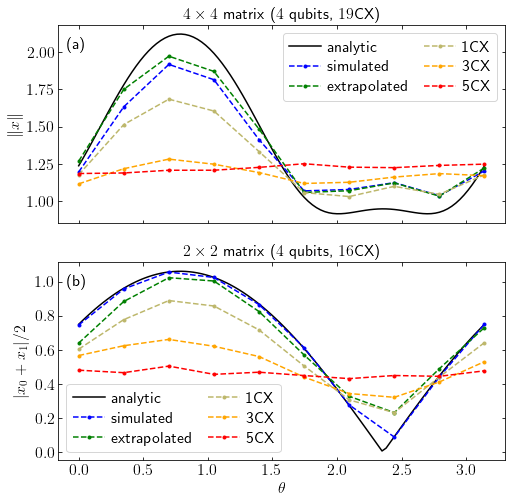}
		\caption{Real hardware on $4$ qubits. Tests for $10$ different initial states, each prepared by a $R_{y}$ rotation by $\theta$ of the initial state qubit(s).
		(a) Norm of the solution vector $\bm{x}=(x_{0},x_{1})^{T}$. 
		(b) Absolute average of the solution vector $\abs{x_{0}+x_{1}}/2$.}
		\label{realplot}
	\end{figure}
	
	\section{Conclusion}
	
	In this paper we provided a detailed implementation and analysis of the HHL algorithm for linear systems of equations defined by tridiagonal Toeplitz matrices and right-hand-sides given by analytic functions.
	For every step of the algorithm we introduce novel techniques and prove corresponding error and complexity bounds, and we combine these results to draw conclusions about the overall algorithm.
	Our main result allows for a quadratic reduction of the circuit complexity required by Hamiltonian simulation within QPE, and thus, constitutes an important step to resolve the main bottleneck of the algorithm.
	In addition, we study different observables and analyze necessary conditions to estimate them efficiently, i.e., achieving an exponential quantum advantage.
	Particularly for the absolute solution average, we find that weaker conditions are sufficient, increasing the applicability of the algorithm.
	Although quantum hardware is not ready yet to run the HHL algorithm for problems of practically relevant size, we demonstrated the algorithm for small systems using simulation as well as real quantum devices.
	
	The introduced extrapolation scheme to reduce the complexity of the Hamiltonian simulation circuits is analyzed for Toeplitz matrices.
	However, it is straight-forward to apply it in more general settings, e.g., with more then two non-commuting terms in Hamiltonian decomposition or for other algorithms leveraging Hamiltonian simulation.
	Theoretically the scope of this scheme could also be widened to reduce general algorithmic errors, provided these can be written as a power series depending on an adjustable step size.
	However, in the cases of more than two non-commuting terms or other methods than product formulas the resulting error bounds remain open questions that we refer to future research.
	
	\section{Acknowledgements}
	
	We thank Albert Frisch, Dominik Steenken, and Harry Barowski for the fruitful discussions on the eigenvalue inversion used within the HHL algorithm.
	
	We would also like to acknowledge the support of the National Centre of Competence in Research \textit{Quantum Science and Technology} (QSIT).
	
	IBM, the IBM logo, and ibm.com are trademarks of International Business Machines Corp., registered in many jurisdictions worldwide. Other product and service names might be trademarks of IBM or other companies. The current list of IBM trademarks is available at \url{https://www.ibm.com/legal/copytrade}.

	\bibliographystyle{IEEEtran}
	\bibliography{MyCollection}
	
	\appendix
	\section{\textit{d}-dimensional Poisson equation}\label{poissonapp}

	Cao et al. give in \cite{Cao} a quantum algorithm to solve the Poisson equation in $d$ dimensions. In their work they show that the implementation reduces to running $d$ parallel circuits solving the one-dimensional Poisson equation. Running our algorithm with $a=2$, $b=-1$ and $\log(N)$ qubits gives the solution to the one dimensional case with $h=1/(N+1)$ discretization step size. Below we reproduce their findings to illustrate how to modify our algorithm to solve the Poisson equation in $d$ dimensions.
	
	Discretizating the Laplacian 
	\begin{equation}
	\Delta = \sum_{k=1}^{d}\frac{\partial^{2}}{\partial x_{k}^{2}}
	\end{equation}
	on a grid with mesh size $h=1/(N+1)$ using divided differences leads to a system of linear equations
	\begin{equation}
	-\Delta_{h}\vec{v}=\vec{f}_{h}.
	\end{equation}
	$\Delta_{h}$ is a symmetric positive definite matrix and can be expressed as a tensor product with $d$ terms
	\begin{align}
	h^{2}\Delta_{h}&= A_{h}\otimes I_{n_{b}}\otimes \cdots \otimes I_{n_{b}}\\
	&+I_{n_{b}}\otimes A_{h}\otimes \cdots \otimes I_{n_{b}}\\
	&+\cdots + I_{n_{b}}\otimes\cdots\otimes I_{n_{b}}\otimes A_{h},
	\end{align}
	where $n_{b}=\log(N)$ and $A_{h}$ is the $N\times N$ tridiagonal symmetric matrix with $a=2$ and $b=-1$. Then
	\begin{equation}\label{dpoisson}
	e^{\mathrm{i}\Delta_{h}t}=e^{\mathrm{i}h^{-2}A_{h}t}\otimes\cdots\otimes e^{\mathrm{i}h^{-2}A_{h}t}.
	\end{equation}
	is a tensor product with $d$ terms. We have shown how to simulate $e^{\mathrm{i}h^{-2}A_{h}t}$, therefore, Eq.~\ref{dpoisson} shows that simulating $e^{\mathrm{i}\Delta_{h}t}$ can be achieved by using $d$ registers with $n_{b}$ qubits each and running in parallel the circuit for $e^{\mathrm{i}h^{-2}A_{h}t}$ in each register.  
	
	\section{Technical proofs from Sec.~\ref{statsect}}\label{Appendix:StatePreparation}
	\subsection{Proof of Lemma \ref{lemma:poly_circuit}}\label{Sec:lemma:poly_circuit}
	\begin{proof}
	Since $d\leq \lceil \frac{n}{2}\rceil$, Lem.~8 from \cite{Itendefs} applies, and we can implement a $k-$controlled $R_{y}$ gate with at most $(16k-12)$ CNOTs for all $k\in\{1,\dots,d\}$.
		
		The circuit for $P_p$ consists of all possible $1-$, $2-$,...,$d-$ controlled $R_{y}$ gates. Therefore we may count the total number of CNOTs as 
		\begin{equation}
		\sum_{k=1}^{d}\binom{n}{k}(16k-12)<\sum_{k=1}^{d}16\frac{n^{k}}{(k-1)!}
		<(16e) n^{d},
		\end{equation}
		where the last inequality was derived from the Taylor series of $e^{x}$.
	\end{proof}
	
	\subsection{Proof of Lemma \ref{statexp}}\label{Sec:statexp}
	\begin{proof}
        Let $x_i = \frac{i}{N-1}$ for $i=0,\dots,N-1$. Then from the Taylor expansion of $\sin^2(x)$ and using that $p_f(x_i)\in[-1,1]$,
		\begin{align}
		&\bra{\psi}F_{c}^{\dagger}M_{1}^{\dagger}M_{1}F_{c}\ket{\psi}
		=
		\frac{1}{N} \sum_{i=0}^{N-1} \sin^2(cp_f(x_i))\\
		&=\frac{1}{N} \sum_{i=0}^{N-1}c^2p^2_f(x_i)+\mathcal{O}\left(c^4p^4_f(x_i)\right)\\
		&\geq\frac{1}{N}\left(\sum_{i=0}^{N-1}c^2\frac{f^2(x_i)}{\norm{\bm{b}}^2_\infty}\right)+c^2\epsilon_p+\mathcal{O}\left(c^4\right)\\
		&= \frac{c^2\norm{\bm{b}}^2}{N\norm{\bm{b}}^2_\infty}-c^2\epsilon_p+\mathcal{O}\left(c^4\right).
		\end{align}

	\end{proof}
	
	\subsection{Proof of Lemma \ref{psperr}}\label{Sec:psperr}
	\begin{proof}
	    First, from the Taylor series of $\sin(x)$ and using that $\norm{\bm{x}}_2\leq\sqrt{N}\norm{\bm{x}}_\infty$ for $\bm{x}\in\mathbb{R}^{N}$, we have that
	    \begin{equation}
	        \ket{\tilde{b}}\coloneqq\frac{\tilde{\bm{b}}}{\norm{\tilde{\bm{b}}}}=\frac{1}{\sqrt{N}} \sum_{i=0}^{N-1} \frac{\sin(cp_f(x_i))}{\norm{\tilde{\bm{b}}}}
	        =\frac{\frac{c\bm{b}}{\sqrt{N}\norm{\bm{b}}_\infty}+\bm{\epsilon}}{\norm{\tilde{\bm{b}}}},
	    \end{equation}
	    where $\norm{\bm{\epsilon}}\leq c\epsilon_1 +c^3 + \mathcal{O}\left(c^5\right)$.
	    Therefore, 
	    \begin{align}
	        &\norm{\frac{\bm{b}}{\norm{\bm{b}}}-\frac{\tilde{\bm{b}}}{\norm{\tilde{\bm{b}}}}}
	        =\norm{\frac{\frac{c\bm{b}}{\sqrt{N}\norm{\bm{b}}_\infty}}{\frac{c\norm{\bm{b}}}{\sqrt{N}\norm{\bm{b}}_\infty}}-\frac{\tilde{\bm{b}}}{\norm{\tilde{\bm{b}}}}}\\
	        &\leq\frac{2\norm{\bm{\epsilon}}}{\frac{c\norm{\bm{b}}}{\sqrt{N}\norm{\bm{b}}_\infty}}\leq\frac{2\sqrt{N}\norm{\bm{b}}_\infty\left(\epsilon_p+c^2\right)}{\norm{\bm{b}}}+\mathcal{O}\left(\frac{\sqrt{N}\norm{\bm{b}}_\infty c^4}{\norm{\bm{b}}}\right).
	    \end{align}
		Hence, if $\bm{b}$ satisfies the \emph{uniform assumption}, the above can be simplified to $2\left(\epsilon_1+c^2\right)+\mathcal{O}\left(c^4\right)$. 
		Then, writing $\ket{b}\coloneqq\bm{b}/\norm{\bm{b}}$, the overall error can be calculated similarly as in Sec.~III.B of \cite{Montanaro2016QuantumAA} by
		\begin{align}
		&\norm{\ket{x}-\ket{\tilde{x}}}\leq 
		\norm{\frac{A^{-1}\ket{b}}{\norm{A^{-1}\ket{b}}}-\frac{A^{-1}\ket{\tilde{b}}}{\norm{A^{-1}\ket{\tilde{b}}}}}\\
		 &\leq \norm{\frac{A^{-1}\ket{b}\bigg(\norm{A^{-1}\ket{\tilde{b}}}-\norm{A^{-1}\ket{b}}\bigg)}{\norm{A^{-1}\ket{b}}\norm{A^{-1}\ket{\tilde{b}}}}-\frac{A^{-1}\bm{\epsilon}}{\norm{A^{-1}\ket{\tilde{b}}}}} \\
		& \leq 2\frac{\norm{A^{-1}\bm{\epsilon}}}{\norm{A^{-1}\ket{\tilde{b}}}}
		\leq 2\kappa \left[2\left(\epsilon_p+c^2\right)+\mathcal{O}\left(c^4\right)\right].
		\end{align}
		The first inequality comes from the triangle inequality applied twice and the last inequality from the definition of the condition number written as 
		\begin{equation*}
		\kappa = \max\frac{\frac{\norm{A^{-1}\bm{\epsilon}}}{\norm{A^{-1}\ket{\tilde{b}}}}}{\frac{\norm{\bm{\epsilon}}}{\norm{\ket{\tilde{b}}}}}.
		\end{equation*}
		
	\end{proof}
	
	\section{Hamiltonian simulation circuits}\label{hamcircapp}
	Let $I _ { k }$ denote the $2 ^ { k } \times 2 ^ { k }$ dimensional identity matrix. Following the nomenclature from Qiskit, we use the $X=\left( \begin{array} { c c } { 0 } & { 1 } \\ { 1 } & { 0 } \end{array} \right)$, $U_{1}(\lambda)$ and $R _ { x } ( \theta )$ quantum gates, where
	\begin{equation*}
	U _ { 1 } ( \lambda ) = \left( \begin{array} { c c } { 1 } & { 0 } \\ { 0 } & { e ^ { \mathrm{i} \lambda } } \end{array} \right)  \text { and }  R _ { x } ( \theta ) = \left( \begin{array} { c c } { \cos \frac { \theta } { 2 } } & { - \mathrm{i} \sin \frac { \theta } { 2 } } \\ { - \mathrm{i} \sin \frac { \theta } { 2 } } & { \cos \frac { \theta } { 2 } } \end{array} \right).
	\end{equation*}
	Computing $e ^ { \mathrm{i} H _ { i } t }$ in each case yields the following circuits.
	\begin{enumerate}[(1)]
		\item \begin{equation}
		e ^ { \mathrm{i} H _ { 1 } t } = e ^ { \mathrm{i} a t }I _ { n_{b} } 
		\end{equation}
		
		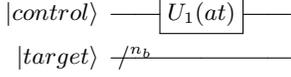
\begin{figure}[h!]
			\centering
			\[
			\Qcircuit @C=1em @R=1em {	
				\lstick{\ket{control}}        & \qw & \gate{U_{1}(at)}    & \qw  & \qw     \\
				\lstick{\ket{target}} & {/^{n_{b}}} \qw & \qw & \qw & \qw
			}
			\]
			\caption{Circuit for implementing a controlled $e^{\mathrm{i}H_{1}t}$ on $n_{b}$ qubits.}
			\label{}
		\end{figure}
		
		\item \begin{equation}
		e ^ { \mathrm{i} H _ { 2 } t } =  I _ { n_{b}-1 } \otimes R _ { x } ( - 2 b t )
		\end{equation}

		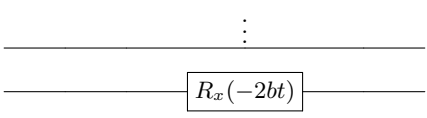
\begin{figure}[h!]
			\centering
			\[
			\Qcircuit @C=2.5em @R=1em {	
				& \qw & \qw & \qw                & \qw & \qw\\
				&     &     & \vdots             &     &\\
				& \qw & \qw & \qw                & \qw & \qw\\ 
				& \qw & \qw & \gate{R_{x}(-2bt)} & \qw & \qw 
			}
			\]
			\caption{Circuit for implementing $e^{\mathrm{i}H_{2}t}$.}
			\label{a2circ}
		\end{figure}
        
		\item
		The general case for $H_{3}$ will make use of $\mathcal{ O }(n_{b})$  CNOTs, the full circuit is shown in Fig.~\ref{h3nbfig}. The key steps of the circuit are:
		\begin{enumerate}[1.]
			\item Flag the states $\ket{0}_{n_b}$ and $\ket{N-1}_{n_b}$ setting an ancilla qubit to $\ket{0}$.
			
			On condition of the flag being $\ket{1}$:
			\item $\ket{i}_{n_b}\mapsto \ket{i+1}_{n_b}$.
			\item Apply $R _ { x } ( -2bt)$ to $q_{0}$.
			\item $\ket{i-1}_{n_b}\mapsto \ket{i}_{n_b}$.
		\end{enumerate}
		\begin{figure}[h!]
			\centering
			\[
			\Qcircuit @C=1em @R=0.8em {	
				& \ctrl{1} & \ctrlo{1} & \multigate{4}{+1} & \qw          & \multigate{4}{-1} & \ctrl{1} & \ctrlo{1} & \qw \\
				& \ctrl{2} & \ctrlo{2} & \ghost{+1}        & \qw          & \ghost{-1}        & \ctrl{2} & \ctrlo{2} & \qw \\
				& & &  &  & &\\
				& \ctrl{1} & \ctrlo{1} & \ghost{+1}        & \qw          & \ghost{-1}        & \ctrl{1} & \ctrlo{1} & \qw \\
				& \ctrl{2} & \ctrlo{2} & \ghost{+1}        & \gate{R_{x}( -2bt)} & \ghost{-1}        & \ctrl{2} & \ctrlo{2} & \qw \\
				\\
				\lstick{\ket{1}}
				&\targ     & \targ     & \ctrl{-2} & \ctrl{-2} & \ctrl{-2} &\targ & \targ & \qw
			}
			\]
			\caption{Schematic representation of the circuit implementing $e^{\mathrm{i}H_{3}t}$ for the general dimension case.}
			\label{h3nbfig}
		\end{figure}
		Nonetheless, for a small enough $n_b$ there is a more efficient implementation requiring $\mathcal{ O }(n_{b}^{2})$ CNOTs. Both implementations yield the same results, however the method described below is faster for all the values of $n_{b}$ we could simulate in a local memory. The full circuit is given in Fig.~\ref{cblockscirc}.
		\begin{figure}[h!]
			\centering
			\[
			\Qcircuit @C=2.5em @R=1em {	
				& \qw & \qw & \qw & \multigate{4}{C_{n_{b}-1}} & \qw\\
				& & & \cdots & &\\
				& \qw & \multigate{2}{C_{2}} & \qw & \ghost{C_{n_{b}-1}} & \qw\\ 
				& \multigate{1}{C_{1}} & \ghost{C_{2}} & \qw & \ghost{C_{n_{b}-1}} & \qw\\
				& \ghost{C_{1}} & \ghost{C_{2}} & \qw & \ghost{C_{n_{b}-1}} & \qw
			}
			\]
			\caption{Schematic representation of the circuit implementing $e^{\mathrm{i}H_{3}t}$ for the general dimension case.}
			\label{cblockscirc}
		\end{figure}
		The circuit for each $C_{j}$-block, $1 \leq j \leq n_{b} - 1$, is shown in Fig.~\ref{singlecblock}.
		
		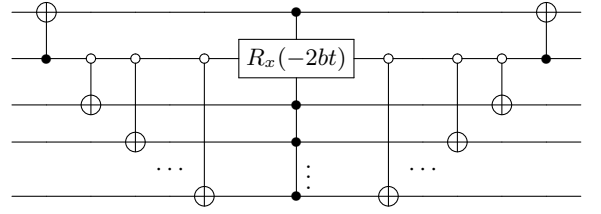
\begin{figure}[h!]
			\centering
			\[
			\Qcircuit @C=1em @R=0.7em {
				&\targ &\qw&\qw&\qw&\qw&\ctrl{1}&\qw&\qw&\qw&\qw&\targ &\qw\\
				&\ctrl{-1}&\ctrlo{1}&\ctrlo{2}&\qw&\ctrlo{4}&\gate{R_{x}(-2b t)}&\ctrlo{4}&\qw&\ctrlo{2}&\ctrlo{1}&\ctrl{-1}&\qw\\
				&\qw&\targ&\qw&\qw&\qw&\ctrl{-1}&\qw&\qw&\qw&\targ&\qw&\qw\\
				&\qw&\qw&\targ&\qw&\qw&\ctrl{-1}&\qw&\qw&\targ&\qw&\qw&\qw\\
				& & &  & \cdots& &\quad\vdots& & \cdots& & & &  \\
				&\qw&\qw&\qw&\qw&\targ&\ctrl{-2}&\targ&\qw&\qw&\qw&\qw&\qw\\
			}
			\]
			\caption{Detailed circuit for $C_{j}$.}
			\label{singlecblock}
		\end{figure}
	\end{enumerate}
	
	\section{Technical proofs from Sec.~\ref{hamsect}}\label{Appendix:Hamiltonian}
	
	\subsection{Proof of Lemma \ref{hamerreq}}\label{Sec:hamerreq}
	\begin{proof}
		Let $A=H_1 + H_2 + H_3$ and $H=H_2 + H_3$. Then, with the notation from Sec.~\ref{hamsect}, we want to bound the quantity
		\begin{equation}
		\norm{e^{\mathrm{i}Atk}-V(t,m)^k}.
		\end{equation}
		From the Taylor series expansion one finds that
		\begin{equation}
		S_1(t/m) = e^{\mathrm{i}Ht/m} + E(t/m),
		\end{equation}
		where, for $t/m\rightarrow 0$,
		\begin{equation}
		E ( t / m ) : = \frac { 1 } { 6 } \left[ \left[ H _ { 2 } , H _ { 3 } \right] , \frac { 1 } { 4 } H _ { 2 } + \frac { 1 } { 2 } H _ { 3 } \right] \left( \frac { \mathrm{i} t } { m } \right) ^ { 3 } + \mathcal { O } \left( \frac { t^4 } { m^4 } \right) .
		\end{equation}
		Here $\left[ H _ { 2 } , H _ { 3 } \right] = H _ { 2 } H _ { 3 } - H _ { 3 } H _ { 2 }$ denotes the matrix commutator. 
		Both $e^{\mathrm{i}Ht}$ and $S_1(t)$ are unitary operators, hence of norm $1$. 
		Since $S_1(t/m)^{mk} = S_1(tk/mk)^{mk}$, and using the Cauchy-Schwarz and triangle inequalities gives
		\begin{align}
		&\norm{e^{\mathrm{i}Atk}-e ^ {\mathrm{i}H_{1}tk} S_1(t/m)^{mk}}\leq
		\norm{e^{\mathrm{i}Htk} - S_1(tk/mk)^{mk}}\\ 
		&\leq mk\norm{e^{\mathrm{i}Ht/m} - S_1(t/m)}
		=mk\norm{E (t/ m )}\\
		&\leq k\frac {t ^ { 3 }b^{3} } { 2m ^ { 2 } } + \mathcal { O } \left( \frac { t^4 } { m^4 } \right) \quad \text{for }t/m\rightarrow 0.
		\end{align}
		
	\end{proof}
	
	\subsection{Proof of Lemma \ref{hamoverall}}\label{Sec:hamoverall}
	\begin{proof}
		In this proof we assume exact procedures for state preparation and conditioned rotation of the eigenvalues. 
		We will denote $U_{1}$ the unitary matrix corresponding to the application of the powers of $e^{\mathrm{i}At}$; $U_{2}$ the unitary matrix corresponding to the $\mathrm { QFT } ^ { \dagger }$, inversion of eigenvalues and QFT; and $U_{3}$ to the application of powers of the inverse of $V$. 
		Similarly, $\tilde { U } _ { 1 }$ and $\tilde { U } _ { 3 }$ will denote the matrices corresponding to the same parts of the algorithm but using $V(t,m)$ ($U_{2}$ is the same in both cases).
		
		Then we can write
		\begin{equation}
		\ket{x} = U _ { 3 } U _ { 2 } U _ { 1 } \ket{b},
		\end{equation}
		and
		\begin{equation}\label{Eq:hhl_decomposition}
		\ket{\tilde{x}} = \tilde { U } _ { 3 } U _ { 2 } \tilde { U } _ { 1 } \ket{b}.
		\end{equation}
		We can express $V(t,m(k))^k$ as
		\begin{equation}
		V(t,m(k))^k=e^{\mathrm{i}Atk}+E,\quad\text { where } \quad \norm{E} < \epsilon _ { A }.
		\end{equation}
		Then, for $N = 2 ^ { n _ { l } }, n_{l}\in\mathbb{ N }$,
		\begin{align}
		\tilde { U } _ { 1 } &= 
		\frac { 1 } { \sqrt { N } } \left( \sum _ { k = 0 } ^ { N- 1 } \ket{k}\bra{k} \otimes \left( e^{\mathrm{i}Atk} + E \right) \right) \\
		&= U _ { 1 } + \frac { 1 } { \sqrt { N } } \left( \sum _ { k = 0 } ^ { N- 1 } \ket{k}\bra{k}\right) \otimes E,
		\end{align}
		where
		\begin{equation}
		    \norm{\frac { 1 } { \sqrt { N } } \left( \sum _ { k = 0 } ^ { N- 1 } \ket{k}\bra{k}\right) \otimes E} = \norm{E}.
		\end{equation}
		Similarly,
		\begin{equation}
		\tilde { U } _ { 3 } = U _ { 3 } + \frac { 1 } { \sqrt { N } } \left( \sum _ { k = 0 } ^ { N- 1 } \ket{k}\bra{k} \right)\otimes E.
		\end{equation}
		Thus, expanding and using that the $U_{i}$ are unitary,
		\begin{align}
		\norm{U _ { 3 } U _ { 2 } U _ { 1 } - \tilde { U } _ { 3 } U _ { 2 } \tilde { U } _ { 1 }}&\leq
		2\norm{E}+\norm{E}^2\\
		&< 2\epsilon_A +\mathcal{O}\left(\epsilon_A^2\right)\quad\text{for }\epsilon _ { A }\rightarrow 0.\label{177}
		\end{align}
	\end{proof}
	
	\section{Gate count from estimating the eigenvalues}\label{richapp}
	Here we provide a detailed analysis of the total number of CNOTs arising from the part of the HHL algorithm that estimates the eigenvalues (i.e.~QPE), first in the original case and then we compare the complexity with our extrapolation scheme. 
	
	Let $G$ denote the number of CNOTs required in one application of the circuit for Hamiltonian simulation, and $\epsilon$ denote the overall error from the HHL algorithm. Recall that QPE involves simulating $e^{\mathrm{i}Atk}$ for $k=2^0,2^1,\dots,2^{n_l-1}$, for which we set (from Sec.~\ref{hamsect})
	\begin{equation}
	m(2^k) = \mathcal{O}\left(\sqrt{\frac{2^{k}}{2\epsilon}}\right).
	\end{equation}
	Thus, the complexity of QPE is
	\begin{equation}
	    \sum_{k=0}^{n_l-1}Gm\left(2^k\right)=\mathcal{O}\left(\frac{G}{\sqrt{2\epsilon}}\sum_{k=0}^{n_l-1}2^{\frac{k}{2}}\right)
	    =\mathcal{O}\left(\frac{G}{\sqrt{2\epsilon}}2^{\frac{n_l}{2}}\right).
	\end{equation}
	On the other hand, as in Sec.~\ref{richsect}, let $\vec{m}_l = (1,\dots,l)$, $l=1/\sqrt{\epsilon}$, and $m_j(k)=\left(\lfloor k^{\frac{3}{n_l}}+1\rfloor\right)j$.
	Then, the total number of CNOTs arising from the $l$ QPE modules that we run for the extrapolation scheme can be calculated as
	\begin{align}
	\sum_{j=1}^{l}\sum_{k=0}^{n_l-1}Gm_j\left(2^k\right) &= \sum_{j=1}^{l}G\sum_{k=0}^{n_l-1}\left(\lfloor 2^{\frac{3k}{n_l}}\rfloor +1 \right)j\\
	&<\sum_{j=1}^{l}G\sum_{k=0}^{n_l-1}8j
	= 4Gn_l l(l+1)
	\end{align}
	Since $n_l = \mathcal{O}\left(\log(\kappa^2 /\epsilon)\right)$ (Sec.\ref{overanalysis}), and $G=\mathcal{ O }(n_{b})=\mathcal{ O }(\log(N))$, the overall complexity of estimating the eigenvalues is in the original case
	\begin{equation}
	    \mathcal{O}\left(\log(N)\kappa^2/\epsilon\right),
	\end{equation}
	while with our extrapolation scheme allows to run the circuits in parallel, reducing this complexity to $\mathcal{O}(\log(N)\kappa^2/\sqrt{\epsilon})$.

	\section{Technical proofs from Sec.~\ref{richsect}}\label{Appendix:richext}
	\subsection{Proof of Theorem \ref{Thm:extraperr}}\label{Sec:Thm:extraperr}
	\begin{proof}
    Using the notation introduced in Sec.~\ref{richsect}, here we prove the bounds for $\norm{e^{\mathrm{i}Ht}-V_{l}(t,\vec{m}_l)}$. First we introduce some definitions and results from \cite{Chin2010}.
	
	The choice of a symmetric product splitting, such as the symmetric Strang splitting $S_1(t)=e^{\frac{1}{2}\mathrm{i}H_2 t}e^{\mathrm{i}H_3 t}e^{\frac{1}{2}\mathrm{i}H_2 t}$, is important because then the error term has only odd powers of $t$. That is,
	\begin{equation}
	    S_1(t) = \exp\left(\mathrm{i}t(H_2 + H_3) + (\mathrm{i}t)^3 E_3 + (\mathrm{i}t)^5 E_5 + \dots\right),
	\end{equation}
	where the $E_i$ are higher order error commutators of $H_2$ and $H_3$. Then the $2l$-order approximation of $e^{\mathrm{i}t(H_2 + H_3)}$ can be bounded as follows.

	\begin{proposition}[\cite{Chin2010}]\label{Prop:Chin}
	Let $l\in\mathbb{N}$ denote the number of extrapolation points, $V_{l}(t, \vec{m}_l)$ as defined in Eq.~\ref{Eq:Vldef} and \ref{ajdef}, and $C\coloneqq t\norm{H_2+H_3}$. Then
	\begin{equation}\label{Eq:Chin}
	    \norm{e^{\mathrm{i}Ht}-V_{l}(t,\vec{m}_l)}\leq \abs{e_{2l+1}}t^{2l+1}\norm{E_{2l+1}},
	\end{equation}
	where 
	\begin{equation}
	    e_{2l+1} = (-1)^{l-1}\prod_{i=1}^{l}\frac{1}{m_i^2}.
	\end{equation}
	\end{proposition}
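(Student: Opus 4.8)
\textbf{Proof proposal for Proposition~\ref{Prop:Chin}}

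The plan is to establish the error bound~\eqref{Eq:Chin} by exploiting the structure of the multi-product formula $V_l(t,\vec{m}_l)$ together with the symmetric splitting property stated just before the proposition. The starting point is the exponential-of-error representation
\begin{equation*}
S_1(t) = \exp\left(\mathrm{i}t(H_2+H_3) + (\mathrm{i}t)^3 E_3 + (\mathrm{i}t)^5 E_5 + \cdots\right),
\end{equation*}
which holds because the Strang splitting is time-symmetric and therefore has only odd powers of $t$ in its error expansion. First I would raise this to the power $m_j$ and rescale the time argument to $t/m_j$, so that a single summand of the multi-product formula becomes
\begin{equation*}
S_1(t/m_j)^{m_j} = \exp\left(\mathrm{i}t(H_2+H_3) + \frac{(\mathrm{i}t)^3}{m_j^2} E_3 + \frac{(\mathrm{i}t)^5}{m_j^4} E_5 + \cdots\right).
\end{equation*}
The key observation is that the coefficient multiplying $E_{2p+1}$ is $(\mathrm{i}t)^{2p+1}/m_j^{2p}$, i.e. the error term of order $2p+1$ carries a factor $m_j^{-2p}$.

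Next I would expand each exponential in its Taylor series and collect the result by powers of $t$, writing $S_1(t/m_j)^{m_j} = e^{\mathrm{i}Ht} + \sum_{p\geq 1} t^{2p+1} m_j^{-2p}\, F_{2p+1} + (\text{mixed higher-order terms})$, where $F_{2p+1}$ are operators independent of $j$ built from the $E_i$. Forming the linear combination $V_l(t,\vec{m}_l) = \sum_{j=1}^l a_j S_1(t/m_j)^{m_j}$ and using $\sum_j a_j = 1$, the leading term $e^{\mathrm{i}Ht}$ survives with coefficient $1$, while the error term of order $2p+1$ acquires the factor $\sum_{j=1}^l a_j m_j^{-2p}$. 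The crux of the argument is the \emph{moment-cancellation} identity satisfied by the Chin coefficients~\eqref{ajdef}: for the choice $a_j = \prod_{q\neq j} m_j^2/(m_j^2 - m_q^2)$ one has
\begin{equation*}
\sum_{j=1}^l a_j m_j^{-2p} = 0 \quad\text{for } p=1,\ldots,l-1, \qquad \sum_{j=1}^l a_j m_j^{-2l} = (-1)^{l-1}\prod_{i=1}^l \frac{1}{m_i^2} = e_{2l+1}.
\end{equation*}
This is a Vandermonde-type interpolation identity in the variables $x_j = m_j^2$: the $a_j$ are precisely the weights that annihilate the first $l-1$ negative even moments and leave a controlled residue at order $2l$. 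I would prove these identities either by a partial-fraction / Lagrange-interpolation argument on the rational function $\prod_q x/(x-x_q)$, or by citing the derivation in~\cite{Chin2010}. Consequently all error terms up to order $2l-1$ vanish, and the leading surviving error is $t^{2l+1} e_{2l+1} E_{2l+1}$ (up to the relabelling $F_{2l+1} \mapsto E_{2l+1}$), giving $\norm{e^{\mathrm{i}Ht} - V_l(t,\vec{m}_l)} \leq \abs{e_{2l+1}} t^{2l+1} \norm{E_{2l+1}}$ after taking norms and absorbing higher-order contributions.

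The main obstacle I anticipate is making the expansion rigorous rather than formal: the series for the matrix logarithm/exponential must be shown to converge (which requires $C = t\norm{H_2+H_3}$ to be controlled, explaining why $C$ appears in the proposition's hypotheses), and one must verify that the \emph{mixed} higher-order terms — products of several $E_i$ arising when the exponential is expanded beyond first order — are genuinely of order higher than $2l+1$ in $t$ and do not spoil the cancellation. Handling these cross terms cleanly, and confirming that the cancellation identities above suffice to suppress every contribution below order $2l+1$, is where the real work lies; the moment-cancellation identity itself is a clean algebraic fact, so I would lean on~\cite{Chin2010} for the bookkeeping of the remainder and present the Vandermonde identity as the conceptual heart of the proof.
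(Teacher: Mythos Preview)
The paper does not actually prove this proposition: it is stated inside the proof of Theorem~\ref{Thm:extraperr} as a quoted result from~\cite{Chin2010}, with no argument beyond the preceding remark that the symmetric Strang splitting has only odd powers of $t$ in its error expansion. Your proposal is therefore not competing with any proof in the paper; rather, you have reconstructed the essential argument of~\cite{Chin2010}, and your sketch is faithful to it: the odd-power expansion of $S_1$, the $m_j^{-2p}$ scaling of the order-$(2p+1)$ error after rescaling and powering, and the Vandermonde/Lagrange moment-cancellation identity for the coefficients~\eqref{ajdef} are exactly the ingredients used there. Your caveats about convergence of the series and about the mixed higher-order cross terms are the right ones to flag, and indeed~\cite{Chin2010} handles the bookkeeping you defer to it; so your plan to cite that reference for the remainder analysis while presenting the moment identity as the core is appropriate and matches how the paper itself treats the result.
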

	
	From the Taylor expansions of $e^{\mathrm{i}Ht}$ and $V_{l}(t,\vec{m}_l)$ one can see that $\norm{E_k}\leq \frac{(2d)^{k}}{k!}$, where $d=\max\{\norm{H_2},\norm{H_3}\}$. However note that this is a loose upper bound obtained by successive applications of the triangle inequality.
	
	\begin{proposition}\label{Prop:lambert}
	Let $\vec{m}_l=(1,\dots,l)$, and let $d$ and $H$ be defined as in Prop.~\ref{mchoice}. Then $\norm{e^{\mathrm{i}Ht}-V_{l}(t,\vec{m}_l)}<\epsilon$ for
	\begin{equation}\label{Eq:lambert}
	    l=\frac{\ln(2dt/\epsilon)}{4W\left(\frac{\ln(2dt/\epsilon)}{2e\sqrt{4dt}}\right)},
	\end{equation}
	where $W(x)$ denotes the Lambert function (cf.~\cite{Corless1996}) and $\ln(x)$ the natural logarithm.
	\end{proposition}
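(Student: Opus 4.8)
The plan is to derive Proposition~\ref{Prop:lambert} directly from the error estimate already obtained in Theorem~\ref{Thm:extraperr} (which in turn follows from Proposition~\ref{Prop:Chin} and the bound $\norm{E_{k}}\le 2(2d)^{k}/k!$), by specializing to the arithmetic grid $\vec{m}_{l}=(1,\ldots,l)$ and then inverting a transcendental inequality. For this choice $\prod_{i=1}^{l}m_{i}^{-2}=(l!)^{-2}$, so the estimate reads
\begin{equation*}
\norm{e^{\mathrm{i}Ht}-V_{l}(t,\vec{m}_{l})}\le \frac{2(2dt)^{2l+1}}{(2l+1)!\,(l!)^{2}},
\end{equation*}
and it suffices to exhibit an $l$ making the right-hand side smaller than $\epsilon$.

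First I would make the factorials explicit with the elementary bound $n!\ge (n/e)^{n}$ (a consequence of $e^{n}\ge n^{n}/n!$). Applying it with $n=2l$ and $n=l$ gives $(2l)!\ge 4^{l}(l/e)^{2l}$ and $(l!)^{2}\ge (l/e)^{2l}$, hence $(2l+1)!\,(l!)^{2}\ge (2l+1)\,4^{l}(l/e)^{4l}\ge 2\cdot 4^{l}(l/e)^{4l}$ for $l\ge1$. Substituting this and cancelling $(2dt)^{2l}/4^{l}=(dt)^{2l}$ collapses the bound to the clean shape
\begin{equation*}
\norm{e^{\mathrm{i}Ht}-V_{l}(t,\vec{m}_{l})}\le 2dt\left(\frac{e\sqrt{dt}}{l}\right)^{4l},
\end{equation*}
so that it is enough to guarantee $2dt\,(e\sqrt{dt}/l)^{4l}<\epsilon$, i.e.\ $4l\ln\!\left(l/(e\sqrt{dt})\right)>\ln(2dt/\epsilon)$.

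The last step is to solve this inequality for $l$ using the Lambert $W$ function. Setting $L:=\ln(2dt/\epsilon)$ (the interesting regime being $2dt>\epsilon$, which by Sec.~\ref{overanalysis} also makes $dt$ bounded) and substituting $u:=\ln\!\left(l/(e\sqrt{dt})\right)$, so $l=e\sqrt{dt}\,e^{u}$, the equation $4lu=L$ becomes $u\,e^{u}=L/(4e\sqrt{dt})=L/(2e\sqrt{4dt})$; by definition of $W$ this forces $u=W\!\left(L/(2e\sqrt{4dt})\right)$, and then $l=L/(4u)$ is exactly the claimed value. Since $x\mapsto 4x\ln(x/(e\sqrt{dt}))$ is increasing for $x\ge\sqrt{dt}$ and the value above satisfies $l>e\sqrt{dt}$, every $l$ at least this large works; as $l$ must be an integer one takes $\lceil\cdot\rceil$, which turns the non-strict $\le$ into the strict $<$. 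I expect the one place that needs care is the constant bookkeeping: arranging the Stirling slack (via $2l+1\ge3$) to be absorbed into the prefactor $2dt$ rather than worsening the argument of $W$, so that the formula matches $W\!\left(\ln(2dt/\epsilon)/(2e\sqrt{4dt})\right)$ exactly; the rest is routine substitution.
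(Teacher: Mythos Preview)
Your proof is correct and follows essentially the same route as the paper's: both start from the bound $2(2dt)^{2l+1}/\bigl((2l+1)!\,(l!)^{2}\bigr)$, apply the Stirling-type inequality $n!\ge(n/e)^{n}$ to reduce it to $2dt\bigl(e\sqrt{dt}/l\bigr)^{4l}$, and then invert via the Lambert $W$ function to obtain the stated formula. The only cosmetic difference is that the paper bounds $(2l+1)!$ directly by $(e/(2l+1))^{2l+1}$ (requiring $l\ge3$ for the subsequent simplification), whereas you factor $(2l+1)!=(2l+1)(2l)!$ first, which already works for $l\ge1$; the resulting simplified bound and the Lambert-$W$ inversion are identical.
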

	
	\begin{proof}
	From Prop.~\ref{Prop:Chin}, we have that for $\vec{m}_l=(1,\dots,l)$,
	\begin{equation}
	    \norm{e^{\mathrm{i}Ht}-V_{l}(t,\vec{m}_l)}\leq\frac{(2dt)^{2l+1}}{(2l+1)!l!^2}.
	\end{equation}
	Since $\frac{1}{n!}<\frac{1}{\sqrt{2\pi n}}\left(\frac{e}{n}\right)^n$,
	\begin{align}
	    \frac{(2dt)^{2l+1}}{(2l+1)!l!^2}&<\frac{(2dt)^{2l+1}}{\sqrt{2\pi(2l+1)} 2\pi l}\left(\frac{e}{2l+1}\right)^{2l+1}\left(\frac{e}{l}\right)^{2l}\\
	    &<\frac{t}{2\sqrt{2\pi}}\left(\frac{4e\sqrt{dt}}{4l}\right)^{4l},
	\end{align}
	where for the last inequality we assume $d\leq l^{5/2}$.
	
	Finally, some calculation shows that the solution to the equation
	\begin{equation}
	    \frac{t}{2\sqrt{2\pi}}\left(\frac{4e\sqrt{dt}}{4l}\right)^{4l} = \epsilon
	\end{equation}
	is 
	\begin{equation}
	    l=\frac{\ln(t/2\sqrt{2\pi}\epsilon)}{4W\left(\frac{\ln(t/2\sqrt{2\pi}\epsilon)}{4e\sqrt{dt}}\right)}.
	\end{equation}
	\end{proof}
	Finally, if we use that $dt$ is bounded, take the worst case $tk = t 2^{n_l}$, use that $n_l = \log(1/\epsilon)$ and $m_j(k)=\left(\lfloor k^{\frac{3}{n_l}}\rfloor +1\right) m_j$, we get
	\begin{align}
	    &\norm{e^{\mathrm{i}Ht}-V_{l}(tk,\vec{m}_l(k))}\leq\frac{(2dtk)^{2l+1}}{(2l+1)!l!^2 8^{2l}}\\
	    &<\left(\frac{e^2 2dtk}{8(2l+1)l}\right)^{2l}\frac{2tdke}{(2l+1)2\pi l \sqrt{2\pi(2l+1)}}\\
	    &<\left(\frac{e\sqrt{2dtk}}{4l}\right)^{4l}\frac{1}{\epsilon}
	\end{align}
	Similarly as before this gives
	\begin{equation}
	    l = \frac{\ln(1/\epsilon)}{2W\left(\frac{2\ln(1/\epsilon)}{e\sqrt{2dt/\epsilon}}\right)},
	\end{equation}
	for which $l=1/\sqrt{\epsilon}$ is a good approximation using that $W(x)\approx x$ as $x\rightarrow 0$.
	\end{proof}
	
	\subsection{Proof of Theorem \ref{thm:hhl_extrapolation}}\label{Sec:thm:hhl_extrapolation}
	\begin{proof}
	Suppose that
	\begin{equation}
	    \norm{e^{\mathrm{i}Ht}-V_l(t,\vec{m}_l)}<\epsilon_A.
	\end{equation}
	Then, with the notation from the proof of Lem.~\ref{hamoverall},
	\begin{equation}\label{Eq:Vkeps_A}
	    \norm{e^{\mathrm{i}Atk}-\sum_{j=1}^{l} a_j V(t,m_j(k))^k}<\epsilon_A,
	\end{equation}
	where we recall that $A = H_1 + H$, $H_1$ commutes with $H$, and $V(t,m_j(k))^k = e^{\mathrm{i}H_1 tk}S_1\left(t/m_j(k)\right)^{km_j(k)}=e^{\mathrm{i}Atk} + E_j$. Then, from Eq.~\ref{Eq:hhl_decomposition} we have
	\begin{gather}
	    \sum_{j=1}^{l} a_j \text{HHL}\left(V\left(t,m_j(k)\right)^{k},\ket{b}\right)
	    =\sum_{j=1}^{l} a_j \tilde { U }^{(j)} _ { 3 } U _ { 2 } \tilde { U }^{(j)} _ { 1 } \ket{b}\\
	    =U_3 U_2 U_1\ket{b} + \sum_{j=1}^{l} a_j \left(E_j U_2 U_1 + 2E_j U_2 E_j + U_3 U_2 E_j\right),
	\end{gather}
	where $\tilde { U }^{(j)}_i = U_i + E_j$, and the $j$ subscript specifies that the approximation depends on $m_j(k)$.
	
	Using that the $U_i$ are unitary and from Eq.~\ref{Eq:Vkeps_A} gives
	\begin{equation}
	    \norm{\sum_{j=1}^{l} a_j \left(E_j U_2 U_1 + 2E_j U_2 E_j + U_3 U_2 E_j\right)}<2\epsilon_A+\mathcal{O}(\epsilon_A^2).
	\end{equation}
	
	Finally, since $\sum_{j=1}^l a_j=1$, we can use the same argument as at the end of the proof of Lem.~\ref{hamoverall} and obtain the result.
	\end{proof}

	\section{Technical proofs from Sec.~\ref{condsect}}\label{Appendix:conditionalrot}
	\subsection{Proof of Lemma \ref{lemma:chebychev}}\label{Sec:propcheb}
		\begin{proof}
		The error of Chebychev interpolation in approximating a function $f$ on $[-1,1]$ by $p_f$, in the case that $f$ possesses an analytic extension to a complex neighbourhood $\mathcal{D}$ of $[-1,1]$, can be bounded by \cite{TAD86}
		\begin{equation}\label{Eq:cheb_analytical}
		    \norm{f-p_f}_{L^{\infty}([-1,1])}
		    \leq \frac{2\abs{\gamma}}{\pi}\frac{1}{\left(\rho^{d+1}-1\right)\left(\rho +\rho^{-1}-2\right)}\cdot\max_{z\in\gamma}\abs{f(z)},
		\end{equation}
		where $d$ is the degree of $p$ and for $\rho>1$, $\gamma\subset\mathcal{D}$ is the ellipsis defined by
		\begin{equation}
		    \gamma\coloneqq\{z=\cos(\theta-\mathrm{i}\log\rho),0\leq\theta\leq2\pi\}.
		\end{equation}
		From Eq.~\ref{Eq:cheb_analytical}, the error on $[a,b]$ can be bounded by
		\begin{align}\label{Eq:cheb_analyticalab}
		    &\norm{f-p_f}_{L^{\infty}([a,b])}=\norm{\hat{f}-p_{\hat{f}}}_{L^{\infty}([-1,1])}\\
		    &\leq \frac{2\abs{\gamma}}{\pi}\frac{1}{\left(\rho^{d+1}-1\right)\left(\rho +\rho^{-1}-2\right)}\cdot\max_{z\in\gamma}\abs{\hat{f}(z)},
		\end{align}
		where $\hat{f}\coloneqq f\circ\Phi$ and $\Phi:[-1,1]\mapsto [a,b]$ is the linear map defined by $\Phi(x)= a +\frac{1}{2}(x+1)(b-a)$.
		Since $\hat{f}(x)$ has a singularity at $\Phi(x)=0$, and the leftmost point of $\gamma$ is at $-(\rho + \rho^{-1})/2$, one can see that in Eq.~\ref{Eq:cheb_analyticalab} we require
		\begin{equation}
		    \rho \leq \frac{4a}{b-a}+1.
		\end{equation}
		Note that a simple calculation now shows that choosing to interpolate over the interval $[a,N_l-1]$ would require to take $d=\Omega\left(N_l\log(1/\epsilon)\right)$, thus we will focus on a piecewise interpolation.
		We will divide the interval $[a,N_l-1]$ into $M$ subintervals $[a_i,a_{i+1}]$, with $a_1 = a$ and $a_{M+1}=N_l-1$, and perform a piecewise polynomial interpolation on each.
		Since the goal is a constant degree $d$ over the whole domain, we will fix $\frac{4a_i}{a_{i+1}-a_i}=1$ and take $\rho = 2$.
		Therefore, $a_{i+1}=5a_i$ and $M=\lceil\log_5(\frac{N_l-1}{a})\rceil$.
		
		We can now proceed to give a more specific bound for Eq.~\ref{Eq:cheb_analytical}. First, for an interval $[a,b]$, $\max_{z\in\gamma}\abs{f(z)}$ is attained at $z=-(\rho + \rho^{-1})/2 = -5/4$, and
		\begin{align}
		    \abs{f\left(\Phi(-5/4)\right)}&=\abs{\arcsin(\frac{C}{a +\frac{1}{2}\left(-\frac{5}{4}+1\right)(b-a)})}\\
		    &=\abs{\arcsin(\frac{2C}{a})}.
		\end{align}
		\begin{proposition}
		Let $x\in\mathbb{R}$, then
		\begin{equation}
		    \abs{\arcsin(x)}=\sqrt{\ln^2(r)+\left(\frac{\pi}{2}\right)^2},
		\end{equation}
		where $r=x+\sqrt{\abs{1-x^2}}$.
		\end{proposition}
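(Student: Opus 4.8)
The plan is to work with the complex-analytic continuation of the inverse sine, $\arcsin(z) = -\mathrm{i}\ln\!\left(\mathrm{i}z + \sqrt{1-z^2}\right)$, and to read off the real and imaginary parts of $\arcsin(x)$ for real $x$, after which the modulus collapses to the claimed closed form. The engine of the argument is the regime $x \ge 1$, where $1-x^2 \le 0$ forces the principal root $\sqrt{1-x^2} = \mathrm{i}\sqrt{x^2-1}$, so that
\begin{equation}
\mathrm{i}x + \sqrt{1-x^2} = \mathrm{i}\left(x+\sqrt{x^2-1}\right) = \mathrm{i}\,r,\qquad r = x+\sqrt{\abs{1-x^2}}>0.
\end{equation}
Taking the principal logarithm of a point on the positive imaginary axis gives $\ln(\mathrm{i}r)=\ln r + \mathrm{i}\pi/2$, whence $\arcsin(x)=\pi/2 - \mathrm{i}\ln r$, and its modulus is exactly $\sqrt{\ln^2(r)+(\pi/2)^2}$, which is the assertion.

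First I would make the branch bookkeeping rigorous across the transition at $x=1$, i.e. justify the choice of principal square root of a negative real and principal logarithm so that the displayed identity is valid. Next I would reduce $x \le -1$ to the previous case by oddness, $\arcsin(-x)=-\arcsin(x)$, together with the elementary relation $\left(x+\sqrt{x^2-1}\right)\left(-x+\sqrt{x^2-1}\right)=-1$; writing $r'=-x+\sqrt{x^2-1}>0$ this gives $\arcsin(x)=-\pi/2+\mathrm{i}\ln r'$ and, since $\ln\abs{r}=-\ln r'$, the quantity $\ln^2(r)$ (read as $(\ln\abs{r})^2$) is unchanged, so the modulus formula extends verbatim. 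I would note separately that in the application $x=2C/a>0$, so this sign subtlety never arises there.

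The hard part will be the middle regime $\abs{x}<1$. There $\mathrm{i}x+\sqrt{1-x^2}$ has modulus $1$, so $\arcsin(x)$ is purely real with $\abs{\arcsin(x)}\le \pi/2$, while $r=x+\sqrt{1-x^2}$ is a positive real and $\ln r\in\mathbb{R}$; consequently $\sqrt{\ln^2(r)+(\pi/2)^2}\ge \pi/2 \ge \abs{\arcsin(x)}$, with agreement only in the limit $\abs{x}\to 1$. I therefore expect the genuine obstacle to be that the literal equality is available only on $\abs{x}\ge 1$: on $\abs{x}<1$ the complex logarithm contributes no imaginary part to carry the magnitude, and the most one can prove is the inequality $\abs{\arcsin(x)}\le\sqrt{\ln^2(r)+(\pi/2)^2}$. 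Since the proposition is invoked only to estimate $\max_{z\in\gamma}\abs{\hat f(z)}=\abs{\arcsin(2C/a)}$ inside Lem.~\ref{lemma:chebychev}, this bounding inequality is exactly what is needed downstream, and I would present the equality on $\abs{x}\ge 1$ as the sharp statement while flagging the $\abs{x}<1$ regime as the point where equality must be relaxed to an upper bound.
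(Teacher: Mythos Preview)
Your approach is essentially identical to the paper's: both start from the complex-logarithm representation $\arcsin(x)=-\mathrm{i}\ln(\mathrm{i}x+\sqrt{1-x^2})$, observe that for $|x|>1$ the principal square root contributes a factor $\mathrm{i}$, and read off $\arcsin(x)=\pi/2-\mathrm{i}\ln r$ with $r=x+\sqrt{|1-x^2|}$, whence the modulus formula. The paper's proof in fact treats only the case $|x|>1$ explicitly and stops there.

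Your additional analysis is more careful than the paper's and is correct. In particular, your observation about the regime $|x|<1$ is a genuine issue with the proposition as literally stated: for $x=0$ one has $\arcsin(0)=0$ while $r=1$ and the right-hand side equals $\pi/2$, so equality fails. The paper's proof does not address this; it silently restricts to $|x|>1$, which is the only case actually needed in the application to Lem.~\ref{lemma:chebychev} (there $2C/a$ is evaluated at a point where the argument of $\arcsin$ exceeds $1$ by construction of the interval endpoints). Your proposed fix---keep the equality on $|x|\ge 1$ and relax to the upper bound $|\arcsin(x)|\le\sqrt{\ln^2(r)+(\pi/2)^2}$ on $|x|<1$---is exactly right and suffices for the downstream use. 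Your treatment of $x\le -1$ via oddness and the reciprocal identity $(x+\sqrt{x^2-1})(-x+\sqrt{x^2-1})=-1$ is also a clean addition; just be explicit that for negative $x$ the quantity $r$ is negative, so $\ln^2(r)$ must be read as $(\ln|r|)^2$.
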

		\begin{proof}
		For $x\in\mathbb{R}$ we can write
		\begin{equation}
		    \arcsin(x) = \frac{1}{\mathrm{i}}\ln(\mathrm{i}x+\sqrt{\abs{1-x^2}}e^{\frac{\mathrm{i}}{2}\text{Arg}(1-x^2)}).
		\end{equation}
		Then, for $\abs{x}>1$,
		\begin{equation}
		    \text{Arg}\left(1-x^2\right)=\pi\implies e^{\frac{\mathrm{i}}{2}\text{Arg}(1-x^2)}=\mathrm{i}.
		\end{equation}
		Hence,
		\begin{align}
		    \arcsin(x) &= \frac{1}{\mathrm{i}}\ln(\mathrm{i}\left(x+\sqrt{\abs{1-x^2}}\right))\\
		    &= \frac{1}{\mathrm{i}}\ln(\left(x+\sqrt{\abs{1-x^2}}\right)e^{\mathrm{i}\pi/2}).
		\end{align}
		\end{proof}
		Thus, writing $r\coloneqq \frac{2C}{a}+\sqrt{\abs{1+\left(\frac{2C}{a}\right)}}$, we have $\max_{z\in\gamma}\abs{f(z)}\leq\sqrt{\ln^2(r)+\left(\frac{\pi}{2}\right)^2}$ and
		\begin{align}
		    \norm{f-p_f}_{L^{\infty}([a,b])}&\leq\frac{2\abs{\gamma}}{\pi}\frac{2\sqrt{\ln^2(r)+\left(\frac{\pi}{2}\right)^2}}{2^{d+1}-1}\\
		    &\leq\frac{8.13\sqrt{\ln^2(r)+\left(\frac{\pi}{2}\right)^2}}{2^{d+1}-1}
		\end{align}
		means that for an accuracy $\epsilon$ we can fix the degree of the polynomial $d=\log(1+\frac{8.13\sqrt{\ln^2(r)+\left(\frac{\pi}{2}\right)^2}}{\epsilon})-1$ on each subinterval.
		\end{proof}
	\subsection{Proof of Lemma \ref{condoverall}}\label{Sec:condoverall}
	\begin{proof}
	     Let $p(x)= \frac{C}{x}+\epsilon_C$ for $x\in[a,N_l-1]$ be the polynomial approximation we can achieve, where $C$ will be chosen so that $C/x\in(0,1]$.
    	 For readability here we will assume we have applied QPE$^{\dagger}$ and that we have measured $\ket{1}$ in the ancilla qubit for the conditioned rotation, and $\ket{0}_{n_l}$ in the eigenvalue register, denoting the probability of this event as $P_\text{success}$.
    	 We have to include the later since in reality the eigenvalue register will not be perfectly uncomputed, i.e. the amplitude of each $\ket{l}_{n_l}$ will be small but nonzero. 
	    Under this assumptions, we have the quantum state 
	    \begin{equation}
	        \frac{\sum_{j=0}^{N-1}\beta_j\left(\sum_{l=0}^{N_l-1}\abs{\alpha_{l|j}}^2p(l)\right)\ket{0}_{n_l}\ket{u_j}_{n_b}\ket{1}}{P_\text{success}},
	    \end{equation}
	    where \cite[\S 5.2.1]{Nielchu}
	    \begin{equation}
	        \alpha_{l|j} \coloneqq \frac{1}{N_l}\left(e^{2\pi \mathrm{i}\left(\frac{\lambda_j t}{2\pi}-\frac{l+\tilde{\lambda}_j}{N_l}\right)}\right)^k,
	    \end{equation}
	    and $\tilde{\lambda}_j$ is a $n_l$-bit binary approximation to $\lambda_j t / 2\pi$.
	    
	    From \cite[\S 5.2.1]{Nielchu}, and writing $\Bar{\lambda}_j\coloneqq\frac{N_l\lambda_j t}{2\pi}$ we have that
	    \begin{equation}
	        \sum_{l=0}^{N_l-1}\abs{\alpha_{l|j}}^2=1\quad\text{ and } \sum_{\abs{l-\Bar{\lambda}_j}>e}\abs{\alpha_{l|j}}^2\leq\frac{1}{2(e-1)}\eqqcolon \epsilon_1,
	    \end{equation}
	    where $e\in\mathbb{Z}$ characterizes the desired error, so we will choose it as $e=2^{n_1}$, where $n_l = n_1 + n_2 + n_3$ will be specified later.
	    
	    Mapping the eigenvalues with $t$ so that $\Bar{\lambda}_j\coloneqq N_lt\lambda_j/2\pi\in[2^{n_1+n_2},N_l-1]$
	    gives for $\abs{l-\Bar{\lambda}_j}\coloneqq\abs{\delta_j}\leq e$
	    \begin{equation}
	        \abs{\frac{C^{\prime}}{l}-\frac{C^\prime}{\Bar{\lambda}_j}}= \frac{C^\prime\delta_j}{\Bar{\lambda}_j(\Bar{\lambda}_j+\delta_j)}\leq \frac{C}{2^{n_2}-1}\eqqcolon\epsilon_2,
	    \end{equation}
	    where $C^\prime$ is some constant we will specify later and $C\coloneqq \frac{C^\prime 2\pi}{N_l t\lambda_{\min}} $.
	    Finally, since we know the eigenvalues have been mapped to $[2^{n_1 + n_2},N_l-1]$, we can write
    	\[
        p(l)= 
            \begin{cases}
                \frac{C^\prime}{l} +\epsilon_C,& \text{if } 2^{n_1 + n_2}\leq l\leq N_l -1\\
                1,              & \text{if } 1\leq l < 2^{n_1 + n_2}.
            \end{cases}
        \]
	    Then, from the triangle inequality, and using that $1/\Bar{\lambda}_j\leq 1$ and that $C^\prime / l\in(0,1]$, we obtain the bound
	    \begin{align}
	        \abs{\frac{C^\prime}{\Bar{\lambda}_j}-\sum_{l=0}^{N_l-1}\abs{\alpha_{l|j}}^2p(l)}
	        &\leq \abs{\frac{C^\prime}{\Bar{\lambda}_j}-\sum_{\abs{l-\Bar{\lambda}_j}\leq e}\abs{\alpha_{l|j}}^2\frac{C^\prime}{l}}+\epsilon_1+\epsilon_C\\
	        &\leq \epsilon_2 +\epsilon_1+\epsilon_1\epsilon_2+\epsilon_1 C+\epsilon_C \eqqcolon\epsilon_R.
	    \end{align}
	    
    	Assuming everything else is exact within HHL, and relabeling $C\lambda_{\min}$ as $C$, we obtain the following approximation to the solution vector
		\begin{equation}
		    \sum_{j=0}^{N-1}\beta_j\left(\frac{ C}{\lambda_j }+\epsilon_R\right)\ket{u_j}\eqqcolon CA^{-1}_{\epsilon_R}\ket{b}.
		\end{equation}
		
		Now we will analyze the overall accuracy of the algorithm assuming everything is exact except for the binary representation of the eigenvalues and the conditioned rotation, i.e. the algorithm returns the quantum state $\bm{\tilde{x}}/\norm{\bm{\tilde{x}}}\coloneqq A^{-1}_{\epsilon_R}/\norm{A^{-1}_{\epsilon_R}}$.
		
		As in \cite{demmelmatrix}, substracting the perturbed equation from $A\bm{x}=\ket{b}$ and rearranging gives
		\begin{equation}
		    A^{-1}_\epsilon\ket{b}-A^{-1}\ket{b}=A^{-1}\left(A-A_\epsilon\right)A^{-1}_\epsilon\ket{b}.
		\end{equation}
		Then, applying the triangle inequality yields
		\begin{equation}
		    \frac{\norm{A^{-1}\ket{b}-A^{-1}_\epsilon\ket{b}}}{\norm{A^{-1}_\epsilon\ket{b}}}\leq
		    \frac{\norm{A^{-1}-A^{-1}_\epsilon}}{\norm{A^{-1}_\epsilon\ket{b}}}\leq\norm{A^{-1}}\norm{A-A_\epsilon},
		\end{equation}
		where $\norm{A^{-1}}=1/\lambda_{\min}$, and
		\begin{equation}
		    \norm{A-A_\epsilon}\leq  \abs{\frac{\epsilon_R\lambda_{\max} ^2}{C(1-\epsilon_R\lambda_{\min} /C)}}.
		\end{equation}
		Using the triangle inequality one more time and denoting $E=A^{-1}-A^{-1}_\epsilon$ we get
		\begin{align}
		&\norm{\ket{x}-\ket{\tilde{x}}}= \norm{\frac{A^{-1}\ket{b}}{\norm{A^{-1}\ket{b}}}-\frac{A^{-1}_{\epsilon}\ket{b}}{\norm{A^{-1}_{\epsilon}\ket{b}}}}  \\
		&=\norm{\frac{A^{-1}\ket{b}\left(\norm{A^{-1}_{\epsilon}\ket{b}}-\norm{A^{-1}\ket{b}}\right)-E\ket{b}\norm{A^{-1}\ket{b}}}{\norm{A^{-1}\ket{b}}\norm{A^{-1}_{\epsilon}\ket{b}}}}\\
		&\leq 2\frac{\norm{E}}{\norm{A^{-1}_{\epsilon}\ket{b}}} \leq 2\norm{A^{-1}}\norm{A-A_\epsilon}\\
		&\leq\abs{2\frac{\epsilon_R\lambda_{\max} ^2}{\lambda_{\min}C(1-\epsilon_R\lambda_{\min} /C)}}.
		\end{align}
		Finally, using that $\norm{A_\epsilon}\leq\frac{\lambda_{\max}}{1-\epsilon_R \lambda_{\min}/C}$, we have that
		\begin{align}
		    P_\text{success}&=\norm{CA^{-1}_{\epsilon}\ket{b}}^2\geq \left(\frac{\norm{\ket{\bm{b}}}C}{\norm{A_\epsilon}}\right)^2 \\	    &\geq\left(\frac{(1-\epsilon_R\lambda_{\min} /C)C}{\lambda_{\max}}\right)^2.
		\end{align}
    Therefore, to achieve an overall error $\epsilon$, we define $C=\lambda_{\min}$, which means that we will rotate by $C^\prime = \frac{N_l t\lambda_{\min}}{2\pi}$, and take
    $\epsilon_R = \frac{\epsilon}{2\kappa^2-\epsilon}$, $\epsilon_C = \epsilon_R/2$, and
    \begin{equation}
        n_1 = n_2 = n_3 = \left\lfloor \log(\frac{2(2\kappa^2 - \epsilon)}{\epsilon}+1)\right\rfloor+1.
    \end{equation}
	This concludes the proof.	
	\end{proof}
	
	\section{Technical proofs from Sec.\ref{obsersec}}\label{Appendix:observables}
	\subsection{Proof of Proposition~\ref{normobv}}\label{Sec:normobv}
	\begin{proof}
	With the notation from Prop.~\ref{normobv},
	\begin{equation}
	    \bra{\psi_1}M_{1}^{\dagger}M_{1}\ket{\psi_1} =
		\frac{C^2}{\norm{\bm{b}}^2}\sum_{j=0}^{N-1}\left(\frac{\beta_j}{\tilde{\lambda}_j}\right)^2
		= \frac{C^2\norm{\bm{x}}^{2}}{\norm{\bm{b}}^2},
	\end{equation}
	and
	\begin{equation}
		\bra{\psi_0}M_{1}^{\dagger}M_{1}\ket{\psi_0} =
		\frac{1}{N}\sum_{j=0}^{N-1}\left(\frac{cC\beta_j}{\norm{\bm{b}}_\infty\tilde{\lambda}_j}\right)^2
		=\frac{c^{2}C^2\norm{\bm{x}}^{2}}{N\norm{\bm{b}}_\infty^2}.
	\end{equation}
	\end{proof}
	
	\subsection{Proof of Proposition~\ref{eigerr}}\label{Sec:eigerr}
	\begin{proof}
	We start with $k=1$. Similarly as before, we have for $i\in\{0,1\}$,
	\begin{align}
	    &M_1(i)\ket{\phi_{1,1}}=\\
	    & I_{n_b}\otimes\ket{i}\bra{i}\left(\frac{1}{\sqrt{2}}\sum_{k,l=0}^{1}(-1)^{k\cdot l}\ket{k}\bra{l}\right)\frac{C}{\norm{\bm{b}}}\sum_{j=0}^{N-1}x_j\ket{1}\ket{j}_{n_b}\\
	    &=\frac{C}{\sqrt{2}\norm{\bm{b}}}\sum_{j=0}^{N/2-1}\left((-1)^0x_{2j}+(-1)^ix_{2j+1}\right)\ket{1}\ket{2j+i}_{n_b}.
	\end{align}
	Therefore,
	\begin{align}
	    &\bra{\phi_{1,1}}M^\dagger_1(i)M_1(i)\ket{\phi_{1,1}}
	    =\\
	    &\frac{C^2}{2\norm{\bm{b}}^2}\left(\sum_{j=0}^{N-1}x_j^2+(-1)^i 2\sum_{j=0}^{N/2-1}x_{2j}x_{2j+1}\right),
	\end{align}
	and the result follows.
	
	The proof for $M_1(i)\ket{\phi_{0,1}}$ is the same but with a $\frac{cC}{\sqrt{N}\norm{\bm{b}}_\infty}$ factor instead of $\frac{C}{\norm{\bm{b}}}$.

	Now we will prove the case for $k>1$. 
	Note that from the circuit in Fig.~\ref{hadik}, the outcome $\ket{q_{0}}=\cdots=\ket{q_{k-2}}=\ket{1}$ can only happen when originally either
	\begin{equation}
	\ket{q_{0}}=\cdots=\ket{q_{k-2}}=\ket{1} \quad\text{and}\quad\ket{q_{k-1}}=\ket{0},
	\end{equation}
	or
	\begin{equation}
	\ket{q_{0}}=\cdots=\ket{q_{k-2}}=\ket{0} \quad\text{and}\quad\ket{q_{k-1}}=\ket{1}.
	\end{equation}
	That is, for those basis states $\ket{i}_{n_b}$ such that $i\equiv 2^{k-1}\ (\textrm{mod}\ 2^{k})$ or $i\equiv 2^{k-1}-1\ (\textrm{mod}\ 2^{k})$. 
	Assuming we have already measured $\ket{q_{0}}=\cdots=\ket{q_{k-2}}=\ket{1}$, calculating the probability of $\ket{q_{k-1}}=\ket{0}$ or $\ket{q_{k-1}}=\ket{1}$, the proof is similar as the case for $k=1$ shown above. 
	Therefore,
	\begin{align}
	&\frac{2\norm{\bm{b}}^2}{C^2}\bra{\phi_{1,k}}M^\dagger_k(i)M_k(i)\ket{\phi_{1,k}}
	=\sum^{N-1}_{\substack{j=0\\j\equiv 2^{k-1} \text{ or }2^{k-1} -1\ (\textrm{mod}\ 2^{k})}}x_j^2\\
	&+(-1)^i 2\sum_{\substack{j=0\\j\equiv 2^{k-1} -1\ (\textrm{mod}\ 2^{k})}}^{N-1}x_{j}x_{j+1}
	\end{align}
	and the result follows.
	
	Again, the proof in the case we run state preparation and HHL together is the same changing the $\frac{2\norm{\bm{b}}^2}{C^2}$ factor by $\frac{2N\norm{\bm{b}}_\infty^2}{c^2C^2}$.
	\end{proof}
	
	\subsection{Proof of Proposition~\ref{Prop:obv2}}\label{Sec:Prop:obv2}
	\begin{proof}
	Let $\bullet_{n}$ denote the bitwise dot product. Then we can write	
    \begin{equation}
    H^{\otimes n}=\frac{1}{\sqrt{2^{n}}}\sum_{i,j=0}^{2^{n}-1}(-1)^{i\bullet_{n}j}\ket{i}_{n}\bra{j}_n.
    \end{equation}
    Therefore, writing
    \begin{equation}
         \sum_{j=0}^{N-1}\frac{\beta_j}{\tilde{\lambda}_j}\ket{u_j}_{n_b} = \sum_{i=0}^{N-1} x_i\ket{i}_{n_b},
    \end{equation}
    we have
    \begin{align}
		M_{1,0}\ket{\phi_1}&=\frac{C}{\sqrt{N}\norm{\bm{b}}}\sum_{i=0}^{N-1} x_i\sum_{k=0}^{N-1}(-1)^{k\bullet_{n_b}i}\ket{0}_{n_b}\bra{0}_{n_b}\ket{k}_{n_b}\\
		&= \frac{C}{\sqrt{N}\norm{\bm{b}}}\sum_{i=0}^{N-1} x_i\ket{0}_{n_b}.
	\end{align}
	The proof for $M_{1,0}\ket{\phi_0}$ is the same changing the $C/\norm{\bm{b}}$ factor by $\frac{cC}{\sqrt{N}\norm{\bm{b}}}_\infty$.
	\end{proof}
	
	\section{Properties of extrapolation}\label{Appendix:extrap_properties}
	In this Appendix we show that the extrapolation scheme can be applied to polynomials in the desired result.
	More specifically, if we want to approximate a quantity $A$ as $A=\sum_{j=1}^l a_j A_j$, $p(A)$ can be approximated via $\sum_{j=1}^l a_j p(A_j)$, where $p$ denotes a polynomial of degree $d\in\mathbb{N}$.
	 
	Let $E_j$, $1\leq j\leq l$ denote the error in approximating a quantity $E$ using a time step $t/m_j$, where $t\in\mathbb{R}$ and $m_j,l\in\mathbb{N}$.
	Furthermore, suppose we can write 
	\begin{equation}
	    E_j = \sum_{k=a}^\infty \alpha_k \left(\frac{t}{m_j}\right)^k\quad\text{where}\quad \alpha_k\in \mathbb{C}^{n\times n}, n\in\mathbb{N}.
	\end{equation}
	Let $a_j\in\mathbb{R}$,$1\leq j \leq l$, be the coefficients of an extrapolation method that yields an $l^\prime$ order approximation, i.e.
	\begin{equation}
	    \sum_{j=1}^l a_j E_j = \mathcal{O}(t^{l^\prime+1}),
	\end{equation}
	or, equivalently, for $k=a,\dots,l^\prime$,
	\begin{equation}
	    \sum_{j=1}^l a_j\alpha_k \left(\frac{t}{m_j}\right)^k = 
	    \alpha_k t^k\sum_{j=1}^l a_j \left(\frac{1}{m_j}\right)^k=0.
	\end{equation}
	
	We will now show that extrapolation with the given $a_j$ coefficients yields an at least $l^\prime$ order approximation for each of the following identities, hence also for any linear combination of them. 
	The intuition behind this result is that extrapolation and the $a_j$'s are independent of the coefficients of the power series, $\alpha_k$, thus the extrapolation works as long as the resulting power series preserves the same structure in terms of $(t/m_j)^k$.
	
	Let $U\in\mathbb{C}^{n\times n}$, then the extrapolation scheme works on any of the following identities:
	\begin{enumerate}
	    \item $UE_j$
	    \item $E_j U$
	    \item $E_j ^ m$, $m\in\mathbb{N}$
	    \item $E_j U E_j$
	\end{enumerate}
	Identities (1)-(2) can be easily deduced by linearity.
	
	Therefore we will proceed to prove identity (3) for $m=2$, and by induction the result holds for larger $m$.
	\begin{proof}
	For $m=2$ we have
	\begin{equation}
	    E_j^2 = \sum_{k=2a}^\infty\sum_{k_1 + k_2 = k}\alpha_{k_1}\alpha_{k_2}\left(\frac{t}{m_j}\right)^k.
	\end{equation}
	Then
	\begin{align}
	    &\sum_{j=1}^l a_j \sum_{k_1 + k_2 = k}\alpha_{k_1}\alpha_{k_2}\left(\frac{t}{m_j}\right)^k \\
	    &=
	    \left(\sum_{k_1 + k_2 = k}\alpha_{k_1}\alpha_{k_2}\right)t^k\sum_{j=1}^l a_j\left(\frac{1}{m_j}\right)^k=0
	\end{align}
	for $k=a,\dots,l^\prime$.
	\end{proof}
	
	Finally it remains to prove identity (4).
	\begin{proof}
	Writing explicitly the power series gives
	\begin{align}
	    E_j U E_j &= \left(\sum_{k=a}^\infty\alpha_k \left(\frac{t}{m_j}\right)^k\right)\sum_{k=a}^\infty U \alpha_k\left(\frac{t}{m_j}\right)^k \\
	    &= \sum_{k=2a}^\infty\sum_{k_1 + k_2 = k}\alpha_{k_1}U\alpha_{k_2}\left(\frac{t}{m_j}\right)^k.
	\end{align}
	Thus, we have that
	\begin{equation}
	    \sum_{j=1}^l a_j\sum_{k_1 + k_2 = k}\alpha_{k_1}U\alpha_{k_2}\left(\frac{t}{m_j}\right)^k
	    = \sum_{k_1 + k_2 = k}\alpha_{k_1}U\alpha_{k_2} t^k\sum_{j=1}^l a_j\left(\frac{1}{m_j}\right)^k = 0
	\end{equation}
	for $k=a,\dots,l^\prime$.
	\end{proof}
	 All points combined means that extrapolation applies to any polynomial in $E+E_j$, therefore the scheme can be applied directly on the value of the observables and not just the solution vector returned by the algorithm.
	
	\section{Real hardware circuit}\label{realhardappendix}
	\begin{figure}[h!]
		\centering
		\includegraphics[width=\linewidth]{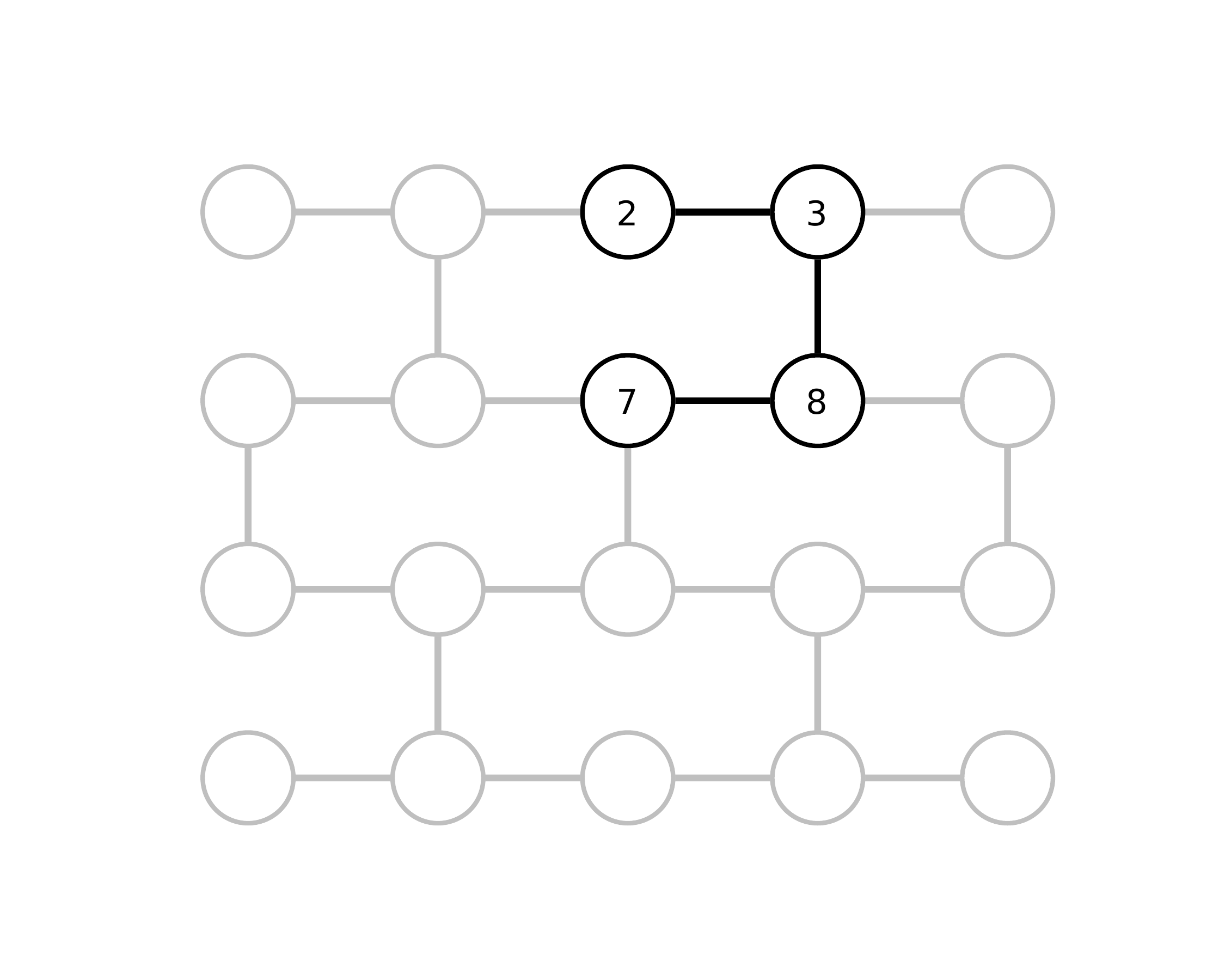}
		
		\caption{Connectivity diagram of the \emph{ibmq$\_\!$boeblingen} $20$-qubit backend provided by IBM Quantum. We use qubits $2,3,7,8$, since these are not fully connected we need additional swaps to run our quantum circuits.}
		\label{connectivity}
	\end{figure}
	
	\begin{figure*}[h!]
		\centering
		\includegraphics[width=\textwidth]{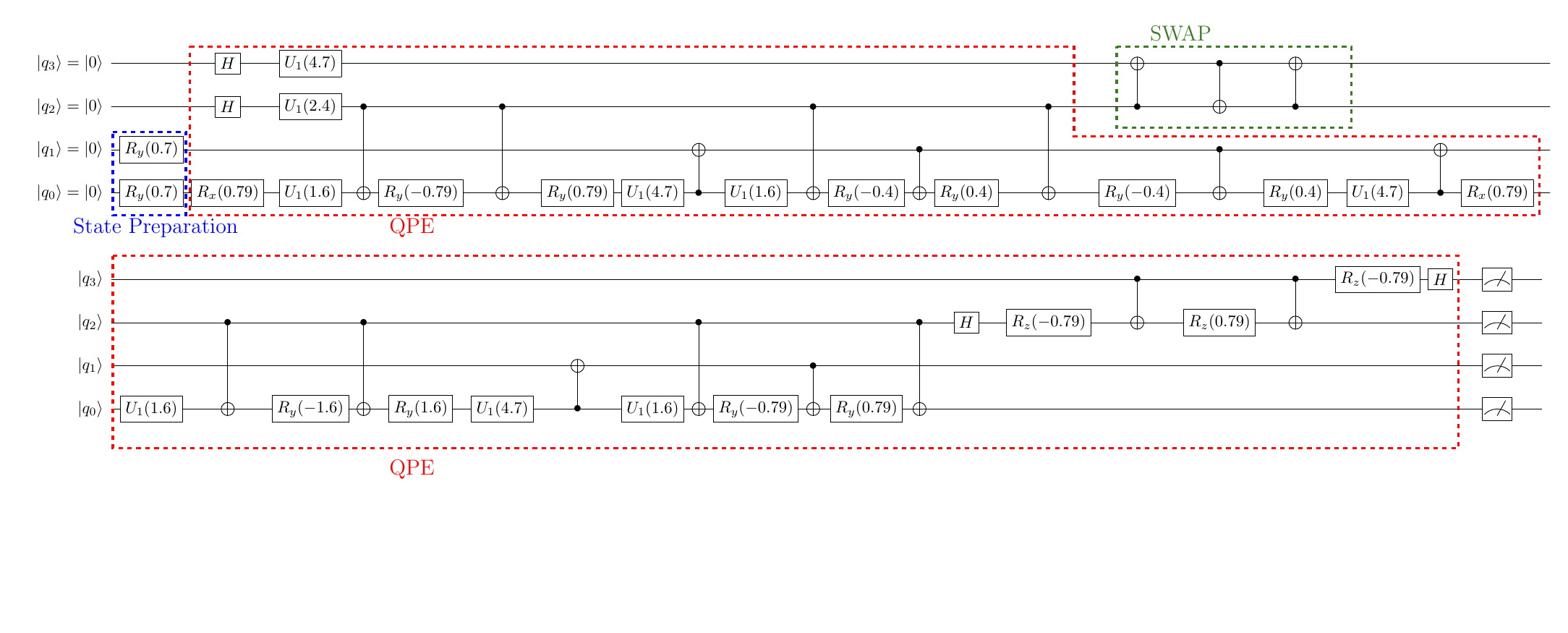}
		
		\caption{HHL circuit for the norm observable without QPE$^{\dagger}$ for $n_{b}=2$, $n_{l}=2$, $b=-1/3$ and $\ket{b}=(\cos^{2}(0.35),\cos(0.35)\sin(0.35),\cos(0.35)\sin(0.35),\sin^{2}(0.35))^{T}$. The conditioned rotation was done classically due to the limitations of the current devices on the width and depth of the circuits that can be run.}
		\label{realcirc1}
	\end{figure*}
	
	\begin{figure*}[h!]
		\centering
		\includegraphics[width=\textwidth]{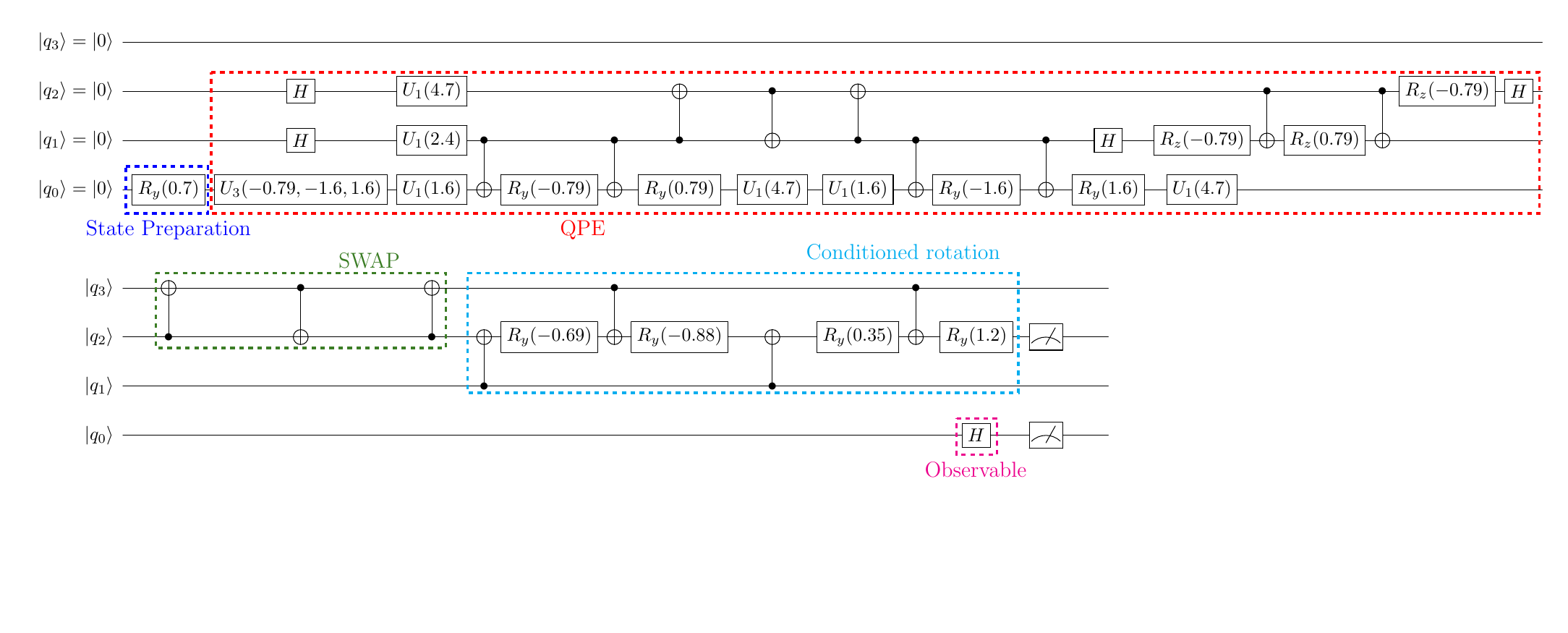}
		
		\caption{HHL circuit for the average observable without QPE$^{\dagger}$ for $n_{b}=1$, $n_{l}=2$, $a=1$, $b=-1/3$ and $\ket{b}=(\cos(0.35),\sin(0.35))^{T}$. The angles for the $R_{y}$ gates in the conditioned rotation part were calculated with the the UniversalQCompiler software \cite{qcompiler}, which allows to optimise a circuit consisting of a sequence of multi-controlled $R_{y}$ gates. }
		\label{realcirc2}
	\end{figure*}
	
\end{document}